\colorlet{red}{black}
\newcommand{\rev}[1]{\textcolor{red}{#1}}
\newenvironment{revblock}{\begingroup\color{red}}{\endgroup}
    \newcolumntype{L}{>{\raggedright\arraybackslash}X}
\newenvironment{keywords}
  {\par\small\noindent\textbf{Keywords:} }
  {\par\normalsize\vskip 0.25ex}
\newcommand{\BlackBox}{\rule{1.5ex}{1.5ex}}
\newenvironment{proof}
  {\par\noindent\textbf{Proof}\ }
  {\hfill\BlackBox\\[2mm]}
\newcommand{\argmax}{\operatornamewithlimits{argmax}}
\newcommand{\argmin}{\mathop{\rm argmin~}}
\newcommand{\Var}{\mathop{\rm Var}}
\newcommand{\cF}{\mathcal{F}}
\m@th\displaystyle{##}$\hfil}
\m@th\displaystyle{##}$\hfil}
\newcommand{\field}[1]{\mathbb{#1}}
\newcommand{\E}{\field{E}}
\newcommand{\tran}{^{\top\kern-\scriptspace}}
\newtheorem{assumption}{Assumption}[section]
\newtheorem{theorem}{Theorem}[section]
\newtheorem{proposition}{Proposition}[section]
\newtheorem{lemma}{Lemma}[section]
\newtheorem{example}{Example}[section]
\newtheorem{remark}{Remark}[section]
\newtheorem{definition}{Definition}[section]
\newcommand{\Rmnum}[1]{\expandafter\romannumeral #1}
\def\lf{\lfloor}
\def\rf{\rfloor}
\def\bb{\mathbb}
\def\cal{\mathcal}
\def\dis{\displaystyle}
\font\n=cmcsc10 scaled \magstep1
\def\cov{{\mbox{cov}}}
\def\var{{\mbox{var}}}
\newcommand{\bigCI}{\mathrel{\text{\scalebox{1.07}{$\perp\mkern-10mu\perp$}}}}
\newcommand{\floor}[1]{\lfloor #1 \rfloor}
\DeclareMathOperator*{\plim}{plim}
\newcounter{alphasect}
\def\alphainsection{0}
\let\oldsection=\section
\def\section{%
  \ifnum\alphainsection=1%
    \addtocounter{alphasect}{1}
  \fi%
\oldsection}%
\renewcommand\thesection{%
 \ifnum\alphainsection=1%
   \Alph{alphasect}%
 \else
   \arabic{section}%
 \fi%
}%
\newenvironment{alphasection}{%
  \ifnum\alphainsection=1%
    \errhelp={Let other blocks end at the beginning of the next block.}
    \errmessage{Nested Alpha section not allowed}
  \fi%
  \setcounter{alphasect}{0}
  \def\alphainsection{1}
}{%
  \setcounter{alphasect}{0}
  \def\alphainsection{0}
}%
\begin{document}

\title{High-dimensional Change-point Detection Using Generalized Homogeneity Metrics}

\author{%
\begin{tabular}{c}
\textbf{Shubhadeep Chakraborty}\textsuperscript{*}\quad
\href{mailto:shubhadeep.chakraborty@bms.com}{\texttt{shubhadeep.chakraborty@bms.com}}\\
\small Bristol Myers Squibb Company\\
\small Lawrenceville, NJ 08648, USA\\[0.65em]
\textbf{Runmin Wang}\textsuperscript{*}\quad
\href{mailto:runminw@tamu.edu}{\texttt{runminw@tamu.edu}}\\
\small Department of Statistics\\
\small Texas A\&M University\\
\small College Station, TX 77843, USA\\[0.65em]
\textbf{Xianyang Zhang}\quad
\href{mailto:zhangxiany@stat.tamu.edu}{\texttt{zhangxiany@stat.tamu.edu}}\\
\small Department of Statistics\\
\small Texas A\&M University\\
\small College Station, TX 77843, USA\\[0.5em]
\footnotesize \textsuperscript{*}These authors contributed equally to this work.
\end{tabular}}

\date{}
\maketitle
 
\begin{abstract}
Change-point detection is a classical yet vibrant field of research in statistics. In this work, we address the problem of detecting abrupt changes in the data-generating distributions of a sequence of high-dimensional observations beyond the first two moments. This problem has remained substantially less explored in the existing literature, especially in the high-dimensional context, compared to detecting changes in the mean or the covariance structure. To the best of our knowledge, this is one of the first attempts to detect and localize general types of distributional changes in the high-dimensional regime. We develop a distance-based method to (i) test for the existence of a change-point, and (ii) identify the change-point locations in an independent sequence of high-dimensional observations. Our approach rests upon recent distance-based tests for the homogeneity of two high-dimensional distributions. We construct a single change-point test statistic based on a cumulative sum process in an embedded Hilbert space and rigorously derive its limiting null distribution and prove asymptotic consistency under the high-dimensional medium sample size (HDMSS) framework. Subsequently, we combine our statistics with the Narrowest-Over-Threshold (NOT) strategy to recursively estimate and test for multiple change-point locations. \rev{We also study a componentwise monotone-invariant, rank-based extension; because its pseudo-observations are pooled empirical mid-ranks and are therefore dependent, we present this version as a practically useful heuristic extension supported by simulation evidence rather than as a fully proved analogue of the original statistic.} The superior performance of our methodology compared to existing procedures is illustrated via extensive simulation studies and an application to U.S. stock return data during the global financial crisis. The proposed method is implemented in the R package \texttt{KDist}, available at \url{https://github.com/zhangxiany-tamu/KDist}.
\end{abstract}

\begin{keywords}
High Dimensionality, Multiple Change-Point Detection, Seeded Narrowest-Over-Threshold, Two Sample Test
\end{keywords}

\section{Introduction}
Change-point detection is a well-established and active area of research in statistics that aims to identify a lack of homogeneity in a sequence of time-ordered observations. It finds an abundance of applications in a wide variety of fields, including bioinformatics \citep{picard, curtis}, neuroscience \citep{park}, digital speech processing \citep{rabiner}, and social network analysis \citep{mcculloh}. We refer the readers to \cite{aue}, \cite{jan}, and \cite{T2020} for some recent reviews on this topic. A critical and fundamental problem in detecting structural breaks in multivariate data is the detection of changes in the mean vector. The mean change problem has been extensively studied when the dimension is low compared to the sample size. However, high-dimensional data is frequently encountered in many scientific areas in the big data era. Recent works that tackle the detection of mean change for high-dimensional data include \cite{enik}, \cite{jirak}, \cite{CF}, \cite{WS}, \cite{yu2021}, \cite{zhang2022}, \cite{WVS} and \cite{wangfeng2023}. Detecting changes in the covariance structure in a sequence of high-dimensional observations is also an important problem. Recent works in this area include \cite{avanesov}, \cite{dette}, \cite{wang2021optimal}, \cite{kaul2023inference}, \cite{li2023online}, and \cite{ryan2023detecting}, among others.

A substantial part of the existing literature on change-point detection has historically focused on changes in specific parameters like the mean or covariance structure. Beyond these specific parametric shifts, a growing body of literature, to which our work contributes, focuses on detecting and localizing more general types of changes in the data-generating distribution. Within the non-parametric change-point detection paradigm, it is crucial to distinguish between methods that are truly invariant to monotone transformations of the data and operate on ranks, thereby avoiding moment assumptions and other approaches that, while not assuming specific parametric forms for the underlying distributions, may rely on different principles and not fully share these invariance properties.

Pioneering work in `true' non-parametric change-point detection includes \cite{carlstein} and, notably, \cite{dumbgen}. The latter provides a fundamental and remarkably general theoretical framework, permitting data to reside in any measurable space. This generality is achieved by quantifying a signed measure of change with a seminorm whose suitable behavior for enabling detection is controlled via a Vapnik-Chervonenskis type assumption (Assumption 2.1 in \citealp{dumbgen}). Methodologies aligned with this `true' non-parametric spirit include the non-parametric maximum likelihood approach of \cite{zou}, which uses BIC for detecting multiple change-points in real-valued data, and the work of \cite{lung} based on marginal rank statistics.

Alongside these rank-based and distribution-free methods, another significant stream of research employs distance or kernel-based statistics to detect distributional changes. While these methods are often termed non-parametric because they do not assume a specific parametric family for the distributions, they may differ from rank-based tests in their invariance properties and can implicitly or explicitly involve moment assumptions. For instance, \cite{MJ} proposed the E-Divisive procedure based on energy distance for estimating multiple change-point locations in multivariate observations of arbitrary (but fixed) dimensions. \cite{biau} rigorously derived the asymptotic distribution of the statistic proposed by \cite{MJ}, adding theoretical justifications. Other kernel-based approaches include \cite{hc}, who proposed a kernel-based procedure using a segment neighborhood algorithm without providing theoretical guarantees for their method. Built on the idea of \cite{hc}, \cite{arlot} developed a kernel-based multiple change-point detection algorithm and studied its theoretical properties. Graph-based tests have been proposed by \cite{CZ} and \cite{CC} to detect structural breaks. While the approach by \cite{CZ} is noted to be more effective for detecting location alternatives rather than scale alternatives, it has lower detection power when changes occur away from the center of the sequence. \cite{CC} addressed these limitations by introducing more robust tests that are effective for both location and scale alternatives. However, our numerical studies indicate that these graph-based tests may not be very effective in detecting changes in higher-order moments for high-dimensional data.

Energy distance, originally proposed by \cite{sr2004, sr2005} and \cite{bf}, is a classical distance-based measure of equality of two multivariate distributions, taking the value zero if and only if the two random vectors are identically distributed. Such a complete characterization of the homogeneity of distributions lends itself to reasonable use in one-sample goodness-of-fit testing and two-sample testing for equality of distributions. In the high dimension low sample size framework, \cite{us} and \cite{zs2019} showed a striking result that energy distance based on the usual Euclidean distance could not completely characterize the homogeneity of the two high-dimensional distributions in the sense that it can only detect the {\it equality of means and the traces of covariance matrices} of the two high-dimensional random vectors when the sample sizes are fixed, and the dimension grows. In other words, the Euclidean energy distance fails to detect inhomogeneity between two high-dimensional distributions beyond the first two moments. A recent study by \cite{yz} reveals a more delicate interplay between the moment discrepancy that the energy distance can detect and the dimension-and-sample orders. To overcome such a limitation, \cite{us} proposed a new class of homogeneity metrics that inherits the desirable properties of energy distance in the low-dimensional setting. And more importantly, in the high-dimensional setup, the new class of homogeneity metrics is capable of detecting the {\it pairwise homogeneity of the low-dimensional marginal distributions}, going beyond the scope of the Euclidean energy distance. The proposed class of homogeneity metrics can capture a wider range of inhomogeneity between distributions compared to the classical Euclidean energy distance in the high-dimensional framework. The core of their methodology is a new way of defining the distance between sample points (interpoint distance) in high-dimensional Euclidean spaces.

This paper focuses on detecting and estimating an unknown number of multiple change-point locations in an independent sequence of ${\mathbb R}^p$-valued observations of sample size $n$, where $p$ can by far exceed $n$. The main contributions of the paper are summarized as follows:
\begin{enumerate}
    \item The majority of the research in this field concentrates on detecting changes in the mean or covariance matrix for high-dimensional data. To the best of our knowledge, we make one of the first attempts in the literature to detect and localize general types of changes in the underlying distribution beyond the first two moments in the high-dimensional regime. 
    
    \item Motivated by the cumulative sum process in an embedded space, we introduce a new change-point detection statistic and rigorously derive its limiting null distribution and asymptotic consistency under alternatives. We propose algorithms for single and multiple change-point detection and estimation. A unique advantage of the proposed method is that it is sensitive to changes in the mean, covariance structure, and higher-order moments.
    
    \item  Moreover, we propose a recursive estimation procedure using the Seeded Narrowest-Over-Threshold (Seeded NOT) strategy \citep{baranowski2019narrowest} to recursively estimate and test for the significance of (an unknown number of) multiple change-point locations. This strategy improves upon standard binary segmentation by effectively handling frequent changes and avoiding the masking problem.

    \item \rev{Addressing the trade-off between moment-sensitivity and invariance, we explore a componentwise monotone-invariant extension based on empirical marginal ranks. Since the resulting pseudo-observations are no longer independent, this extension is advertised as a heuristic, simulation-supported procedure that improves robustness and invariance, while the formal HDMSS theory in this paper is established for the original statistic.}

    \item \rev{To address the computational burden of high-dimensional distance-based methods, we propose two practical approximations, Surrogate A and Surrogate B. They can reduce the standard $O(n^2p)$ cost in the regimes specified in Section~\ref{sec:surrogates}, but they introduce approximation error. We therefore do not claim consistency or unconditional computational gains for either surrogate without additional choices of the sketch size or pair-subsampling rate.}
\end{enumerate}

Our approach rests upon distance-based two-sample tests for the homogeneity of two high-dimensional distributions. We first construct a single change-point detection statistic $M_n$ based on the homogeneity metrics proposed by \cite{us} by defining a cumulative sum process in an embedded Hilbert space. It generalizes the single change-point statistic developed by \cite{MJ} and \cite{biau} to the high-dimensional setup, providing a unifying framework. Testing for the statistical significance of the estimated candidate change-point location necessitates determining the quantiles of the distribution of $M_n$. One theoretical innovation of this paper is to rigorously derive the limiting null distribution of $M_n$ as the dimension $p$ and the sample size $n$ diverge to infinity. Such a setup is typically known in the literature as the high-dimensional medium sample size (HDMSS) framework. The intrinsic difficulty is establishing the uniform weak convergence of an underlying stochastic process under certain moment assumptions, which has been non-trivial and challenging. Because of the pivotal nature of the limiting null distribution, its quantiles can be approximated using a large number of Monte Carlo simulations. To further improve the finite sample performance, we propose an algorithm for single change-point detection based on a permutation procedure to approximate the quantiles of the distribution of $M_n$. \rev{Under the single change-point alternative, we separately prove divergence of the studentized scan statistic $M_n$ and a consistency rate for the unstudentized location estimator $\widehat\nu^*$ in \eqref{def-nu}; we do not identify these as one common result for the practical studentized locator.}

Subsequently, we combine the idea of Seeded NOT proposed by \cite{baranowski2019narrowest} to recursively estimate and test for the significance of multiple change-point locations. The superior performance of our procedure compared to the existing methodologies is illustrated over extensive simulated datasets. When applied to the stock return data observed during the global financial crisis in the United States, our method furnishes more reasonable and meaningful estimates of significant change-point locations given the historical sequence of eventualities compared to the other existing methods. Finally, we briefly illustrate an extension of our methodology to incorporate directed and undirected graph information. Further research along this line is well underway. 

We emphasize that the change-point detection problem addressed in this work is much more challenging than the two-sample problem \citep{us} for several reasons: (i) As the locations of the change-points are unknown, the change-point detection procedure requires an extra layer of complication to search for the optimal change-point locations that divide the data into potentially homogeneous groups for comparison (in two sample problem, the two groups are pre-determined); (ii) The technical analysis requires demonstrating that the underlying stochastic process associated with the two-sample test statistic converges weakly to a limit in a functional space, necessitating a more involved technical analysis as can be seen from our proofs; (iii) We conduct power analysis of our testing procedure under HDMSS, while \cite{us} only examined the asymptotic behavior of the two-sample test under the null hypothesis; (iv) \rev{We explore, as a heuristic extension, rank-based measures using componentwise monotone transformations, which enhance robustness to outliers and ensure invariance under monotonic transformations of the data;} (v) We also discuss the incorporation of external graph information to fully characterize the discrepancy between two high dimensional distributions, which is a new aspect not explored in the literature.

\emph{Notation}. Denote by $\Vert\cdot\Vert$ the Euclidean norm in $\mathbb{R}^p$. Let $0_p$ be the origin of $\mathbb{R}^p$. For a set $\mathcal{S}\subseteq [p]:=\{1,2,\dots,p\}$ and $z=(z_1,\dots,z_p)\in\mathbb{R}^p$, we let $\text{card}(\mathcal{S})$ denote the cardinality of $\mathcal{S}$ and $z_\mathcal{S}=(z_i:i\in \mathcal{S})$ be the subvector of $z$ containing the components whose indices are in $\mathcal{S}$. We use `$X \overset{d}{=} Y$' to indicate that $X$ and $Y$ are identically distributed. Let $X'$ be an independent copy of $X$. `O' and `o' stand for the usual notations in mathematics: `is no larger than' and `is ultimately smaller than,' respectively. We use the symbol `$a \lesssim b$' to indicate that $a \leq C \,b$\, for some constant $C>0$. We utilize the order in probability notations, such as stochastic boundedness $O_p$ (big O in probability), convergence in probability $o_p$ (small o in probability), and equivalent order $\asymp_p$, which is defined as follows: for a sequence of random variables $\{Z_n\}_{n=1}^{\infty}$ and a sequence of real numbers $\{a_n\}_{n=1}^{\infty}$, $Z_n \asymp_p a_n$ if and only if $Z_n/a_n = O_p(1)$ and $a_n/Z_n = O_p(1)$ as $n \to \infty$. If $Z_n \overset{P}{\rightarrow} Z$ as $n \to \infty$, then we say $\text{plim}_{n \to \infty} Z_n = Z$. For a metric space $(\mathcal{X}, \rho )$, let $\mathcal{M}(\mathcal{X})$ and $\mathcal{M}_1(\mathcal{X})$ denote the set of all finite signed Borel measures on $\mathcal{X}$ and all probability measures on $\mathcal{X}$, respectively. Define $\mathcal{M}^1_{\rho }(\mathcal{X}):= \{v \in \mathcal{M}(\mathcal{X}) \,:\, \exists\, x_0 \in \mathcal{X} \; \text{s.t.}\;\int_{\mathcal{X}} \rho (x,x_0)\, d|v|(x) < \infty\}$. Let $\mathbbm{1}(A)$ denote the indicator function associated with a set $A$. For a compact set $\mathcal{T}$, define $L^{\infty}(\mathcal{T}):= \{ f:\mathcal{T}\to \mathbb{R} \; ;\, \Vert f \Vert_{\infty} = \sup_{t \in \mathcal{T}} |f(t)| < \infty \}$. Weak convergence in $L^{\infty}(\mathcal{T})$ is denoted by `$\rightsquigarrow$'. Let ${\bf 1}_n = (1,\dots, 1) \in \mathbb{R}^n$. Write $a\vee b=\max\{a,b\}$ and $a\wedge b=\min\{a,b\}$. Finally, denote by $\lfloor a \rfloor$ the integer part of $a\in\mathbb{R}$.

\section{Distance-based homogeneity tests}\label{sec:overview}
\subsection{Generalized energy distance}\label{sec:overview:ed}
The energy distance \citep{sr2005, bf}, or the Euclidean energy distance, between two random vectors $X,Y\in \mathbb{R}^p$ and $X \perp \!\!\! \perp Y$ with $\mathbb{E} \Vert X \Vert < \infty$ and $\mathbb{E} \Vert Y \Vert < \infty$, is defined as
\begin{align}\label{ed def 0}
	E(X,Y)=\frac{1}{c_p}\int_{\mathbb{R}^{p}}\frac{|f_{X}(t)-f_Y(t)|^2}{\Vert t \Vert^{1+p}}\,dt \,,
\end{align}
where $f_X$ and $f_Y$ are the characteristic functions of $X$ and $Y$ respectively, and $c_{p}=\pi^{(1+p)/2}/\,\Gamma((1+p)/2)$
is a constant with $\Gamma(\cdot)$ being the complete gamma function. Theorem 1 in \cite{sr2005} shows that \,$E(X,Y) \geq 0$ and the equality holds if and only if $X \overset{d}{=} Y$. In other words, energy distance can completely characterize the homogeneity between two multivariate distributions. An equivalent expression for $E(X,Y)$ is given by
\begin{equation}\label{ed def}
	E(X,Y)=2\,\mathbb{E} \Vert X-Y \Vert - \mathbb{E} \Vert X-X'\Vert - \mathbb{E} \Vert Y-Y'\Vert \; ,
\end{equation}
where $(X',Y')$ is an independent copy of $(X,Y)$. 

\begin{definition}[Generalized energy distance]
	For an arbitrary metric space $(\mathcal{X}, \rho)$, the generalized energy distance between $X \sim P_X$\, and \,$Y \sim P_Y$ where $P_X, P_Y \in \mathcal{M}_1(\mathcal{X}) \cap \mathcal{M}^1_{\rho}(\mathcal{X})$ is defined as
	\begin{equation}\label{ed def general}
		E_\rho(X,Y) \;=\; 2\,\mathbb{E} \,\rho(X,Y) - \mathbb{E} \,\rho(X,X') - \mathbb{E} \,\rho(Y,Y') \;.
	\end{equation}
\end{definition}

\begin{definition}[Spaces of negative type]\label{negative type}
	The metric space $(\mathcal{X},\rho)$ is said to have negative type if for any $n\ge 2$, $x_{1},\dots,x_{n}\in\mathcal{X}$, and $\alpha_{1},\cdots,\alpha_{n}\in\mathbb{R}$ with $\sum_{i=1}^{n}\alpha_{i}=0$,  
	$\sum_{i=1}^{n}\sum_{j=1}^{n}\alpha_{i}\alpha_{j}\rho(x_{i},x_{j})\le 0.$
	Suppose $P,Q\in \mathcal{M}_1(\mathcal{X}) \cap \mathcal{M}^1_{\rho}(\mathcal{X})$. When $(\mathcal{X},\rho)$ has negative type, 
	\begin{align}\label{supp-eq1}
		\int\rho(x_{1},x_{2})d(P-Q)^{2}(x_{1},x_{2})\le 0.     
	\end{align}
	We say that $(\mathcal{X},\rho)$ has strong negative type if it has negative type and the equality in (\ref{supp-eq1}) holds only when $P=Q$. 
\end{definition}

Below, we provide some examples of spaces of strong negative type. 
\begin{itemize}
    \item When $\mathcal{X}=\mathbb{R}^p$ and $\rho$ is the Euclidean distance, $(\mathcal{X},\rho)$ is of strong negative type. More generally, according to Theorem 3.16 in \cite{lyons}, every separable Hilbert space (with the corresponding inner product induced distance) is of strong negative type.
    \item Consider $\mathcal{X}=\mathbb{R}^p$ and $\rho(z,z')=\mathcal{K}(z,z)+\mathcal{K}(z',z')-2\mathcal{K}(z,z')$ for some kernel function $\mathcal{K}$. By Proposition 29 in \cite{ssgf}, if $\mathcal{K}$ is a characteristic kernel, then $(\mathcal{X},\rho)$ is of strong negative type. This equivalence highlights the connection between distance-based methods in metric spaces of strong negative type and kernel-based methods using characteristic kernels (e.g., Gaussian or Laplace kernels).
    \item If $(\mathcal{X},\rho)$ has negative type, then $(\mathcal{X},\rho^a)$ is of strong negative type for any $0<a<1$; see Remark 3.19 of \cite{lyons}.
\end{itemize}

If $(\mathcal{X}, \rho)$ has a strong negative type, then $E_\rho(X,Y)=0$ if and only if $X \overset{d}{=} Y$, or in other words, the complete characterization of the homogeneity of two distributions holds in any metric spaces of strong negative type \citep{lyons, ssgf}. Thus, the quantification of homogeneity of distributions by the Euclidean energy distance given in (\ref{ed def}) is just a special case when $\rho$ is the Euclidean distance on $\mathcal{X} = \mathbb{R}^p$. Suppose $\mathbf{X}_n=\{X_i\}^{n}_{i=1}$ and $\mathbf{Y}_m=\{Y_i\}^{m}_{i=1}$ are two independent i.i.d samples on $X$ and $Y$ taking values in $(\mathcal{X},\rho)$. A U-statistic type estimator of the generalized energy distance between $X$ and $Y$ is defined as
\begin{align*}
	\widehat{E}_\rho(\mathbf{X}_n,\mathbf{Y}_m)=&\frac{2}{n m}\sum_{i=1}^{n}\sum_{j=1}^{m}\rho(X_{i}, Y_{j})-\frac{1}{n(n-1)}\sum_{1 \leq i\neq j \leq n} \rho(X_{i},X_{j})\\&-\frac{1}{m(m-1)}\sum_{1\leq i\neq j \leq m} \rho(Y_{i},Y_{j}).
\end{align*}
We refer the readers to Section A.1 in the Supplementary Materials of \cite{us} for a comprehensive overview of the properties and asymptotic behavior of the U-statistic type estimator of $E_\rho(X,Y)$ in the low-dimensional setting.

\subsection{Generalized energy distance in high dimensions}\label{sec:overview:ours}
The question of interest is how the classical distance-based homogeneity metrics like energy distance behave in the high-dimensional framework. Consider two $\mathbb{R}^p$-valued random vectors $X=(x_1,\dots,x_p)$ and $Y=(y_1,\dots,y_p)$. \cite{us} showed a striking result: when the dimension grows high, the Euclidean energy distance between $X$ and $Y$ can only capture the equality of the means and the first spectral means, i.e., $\mu_X = \mu_Y$ and $\text{tr}(\Sigma_X) = \text{tr}(\Sigma_Y)$, where $\mu_X$ and $\mu_Y$, and $\Sigma_X$ and $\Sigma_Y$ are the mean vectors and the covariance matrices of $X$ and $Y$, respectively.

To illustrate, consider the case where $X \sim N(\mu, \mathbf{I}_p)$ with $\mu = \mathbf{1}_p \in\mathbb{R}^p$\, and the components of $Y$ independently follow Exponential\,$(1)$ for $1\leq i \leq p$. That is, $\mu_X = \mu_Y$ and $\text{tr}(\Sigma_X) = \text{tr}(\Sigma_Y)$ although $X$ and $Y$ have different distributions. The homogeneity test based on the Euclidean energy distance has trivial power in this case. Such a limitation of the classical Euclidean energy distance arises essentially due to the use of Euclidean distance. \cite{us} proposed a new class of homogeneity metrics to overcome such a limitation, which is based on a new way of defining the distance between sample points (interpoint distance) in high-dimensional Euclidean spaces. Here, we present a slightly generalized version of their distance by allowing the groups (i.e., $\mathcal{S}_i$'s below) to overlap.

\begin{definition}[Generalized Euclidean distance]\label{def_GED}
	{\rm 
		Consider a collection of subsets $\{\mathcal{S}_i:1\leq i\leq g\}$ with $\mathcal{S}_i\subseteq [p]:=\{1,2,\dots,p\}$ and $\text{card}(\mathcal{S}_i)=d_i$.  Suppose $\cup_{i=1}^{g}\mathcal{S}_i=[p]$ and $\rho_i$ is a distance of strong negative type on $\mathbb{R}^{d_i}$
		for $1\leq i\leq g$. For $z,z'\in\mathbb{R}^p$, we define the generalized Euclidean distance as
		$$\gamma(z,z'):= \sqrt{\rho_1 (z_{\mathcal{S}_1}, z_{\mathcal{S}_1}') + \cdots+\rho_g (z_{\mathcal{S}_g}, z_{\mathcal{S}_g}')},$$ 
		which can be shown to be a valid metric on $\mathbb{R}^{p}$.
	}
\end{definition}

For illustration, consider the case where \( g = p \) and \( d_i = 1 \) for all \( 1 \leq i \leq g \).
\begin{itemize}
    \item When $\rho_i$ is the Euclidean distance on $\mathbb{R}$, the metric boils down to
    $$\gamma(z,z')=\| z-z' \|_1^{1/2}=\sqrt{\sum_{j=1}^p |z_j - z'_j|},$$
    where $\| z \|_{1} = \sum_{j=1}^p |z_j|$ is the $l_1$ or the absolute norm on $\mathbb{R}^p$. 
    \item When $\rho_i$ is the Laplace kernel induced distance on $\mathbb{R}$, the metric becomes
    $$\gamma_L(z,z')=\sqrt{\sum_{j=1}^p\left(2 - 2e^{-|z_j - z'_j|/h_j}\right)},$$
    where $h_j$ represents the bandwidth parameter for the $j$-th dimension, controlling the characteristic length scale of the Laplace kernel for each feature.
    \item When $\rho_i$ is the Gaussian kernel induced distance on $\mathbb{R}$, the metric takes the form of
    $$\gamma_G(z,z')=\sqrt{\sum_{j=1}^p\left(2 - 2e^{-(z_j - z'_j)^2/(2h_j^2)}\right)},$$
    where $h_j$ represents the bandwidth parameter for the $j$-th dimension, controlling the characteristic length scale of the Gaussian kernel for each feature.
\end{itemize}

The new class of distance-based homogeneity metrics replaces the Euclidean distance in the definition of energy distance with this proposed distance. For fixed $p$, $(\mathbb{R}^p, \gamma)$ is shown to have a strong negative type and hence $E_\gamma(X,Y)=0$ if and only if $X \overset{d}{=} Y$. In other words, $E_\gamma(X,Y)$ completely characterizes the homogeneity of the distributions of $X$ and $Y$ in the low-dimensional setting. Theorem 4.1 and Lemma 4.1 of \cite{us} show that when $p$ grows high, and the dimensions of the sub-vectors remain fixed, $E_\gamma(X,Y)$ can capture the pairwise homogeneity of the marginal distributions of $X_{\mathcal{S}_i}$ and $Y_{\mathcal{S}_i}$. Clearly $X_{\mathcal{S}_i} \overset{d}{=} Y_{\mathcal{S}_i}$ for $1\leq i\leq g$ implies $\mu_X = \mu_Y$ and $\text{tr} (\Sigma_X) = \text{tr} (\Sigma_Y)$, and therefore the proposed class of homogeneity metrics can capture a wider range of inhomogeneity of distributions compared to the Euclidean energy distance in the high-dimensional framework.

\subsection{Two-sample t-test}\label{sec:t-test}
\cite{us} introduced a two-sample t-test for high-dimensional inference based on the generalized homogeneity metrics. Given the samples $\mathbf{X}_n$ and $\mathbf{Y}_m$, we first define the double-centered distance matrices. To avoid confusion with the scaling factor $a_{nm}$ defined later, we denote the entries of these matrices by capital letters:
\begin{align*}
\widetilde{A}_{k,k'}:=&\gamma(X_{k}, X_{k'}) -\frac{1}{n-2} \sum_{j=1}^n \gamma (X_{k}, X_j)- \frac{1}{n-2} \sum_{i=1}^n \gamma (X_{i}, X_{k'})\\
&+ \frac{1}{(n-1)(n-2)} \sum_{i,j=1}^n \gamma (X_i, X_j),\\
\widetilde{B}_{l,l'}:=&\gamma(Y_{l}, Y_{l'}) -\frac{1}{m-2} \sum_{j=1}^m \gamma (Y_{l}, Y_j)- \frac{1}{m-2} \sum_{i=1}^m \gamma (Y_{i}, Y_{l'})
\\&+ \frac{1}{(m-1)(m-2)} \sum_{i,j=1}^m \gamma (Y_i, Y_j),\\
\widetilde{D}_{k,l}:=&\gamma(X_{k},Y_{l})-\frac{1}{m}\sum^{m}_{j=1}\gamma(X_{k},Y_{j})-\frac{1}{n}\sum^{n}_{i=1}\gamma(X_{i},Y_{l})+\frac{1}{nm}\sum_{i=1}^{n}\sum^{m}_{j=1}\gamma(X_{i},Y_{j}),
\end{align*}
where $1\leq k,k' \leq n$ and $1\leq l,l' \leq m$. Define the pooled variance estimator
\begin{align*}
	&\widehat{S}^2(\mathbf{X}_n,\mathbf{Y}_m)=\frac{
		4v_n \widehat{\mathcal{D}}^2(\mathbf{X}_n)+4v_m \widehat{\mathcal{D}}^2(\mathbf{Y}_m)
		+4(n-1)(m-1)\widehat{\mathcal{C}}(\mathbf{X}_n,\mathbf{Y}_m)}{v_n+v_m+(n-1)(m-1)},    
\end{align*}
where $\widehat{\mathcal{D}}^2$ and $\widehat{\mathcal{C}}$ are the sample distance variance and the cross distance covariance, defined respectively by
\begin{align*}
	&\widehat{\mathcal{D}}^2(\mathbf{X}_n) = \frac{1}{n(n-3)} \sum_{1\leq k\neq k' \leq n} \widetilde{A}^2_{k,k'},\quad \widehat{\mathcal{D}}^2(\mathbf{Y}_m)=\frac{1}{m(m-3)} \sum_{1\leq l\neq l' \leq m} \widetilde{B}^2_{l,l'},\\
 &\widehat{\mathcal{C}}(\mathbf{X}_n,\mathbf{Y}_m)=\frac{1}{(n-1)(m-1)}\sum^{n}_{k=1}\sum_{l=1}^{m}\widetilde{D}_{k,l}^2,
\end{align*}
and $v_k = k(k-3)/2$\, for\, $k=n,m$. The two-sample t-statistic is defined as
\begin{align*}
	T(\mathbf{X}_n,\mathbf{Y}_m)&=\frac{\widehat{E}_{\gamma}(\mathbf{X}_n,\mathbf{Y}_m)}{a_{nm}\,\widehat{S}(\mathbf{X}_n,\mathbf{Y}_m)}\quad \text{ where } \quad a_{nm}^2 = \frac{1}{nm}  + \frac{1}{2n(n-1)}+ \frac{1}{2m(m-1)}.
\end{align*}
Note that the construction of the pooled variance estimator and hence the two-sample statistic requires $n,m\geq 4.$ The computational complexity of calculating $T(\mathbf{X}_n,\mathbf{Y}_m)$ is nominally $O((m\vee n)^2p)$; however, as discussed in Section \ref{sec:surrogates}, efficient surrogates can be employed to reduce this burden in ultra-high dimensions.

Under the moment assumptions detailed in \cite{us} (analogous to Assumptions \ref{ass1_new} and \ref{ass2_new}), Theorem B.1 in \cite{us} shows that under $H_0: X\overset{d}{=}Y$, 
$$T(\mathbf{X}_n,\mathbf{Y}_m) \overset{d}{\rightarrow} N(0,1)$$ 
as $n,m,p \to \infty$. The proposed change-point detection statistic will be constructed by recursively calculating the two-sample t-statistic for the split data sequence at all potential candidate change-point locations.

\section{High-dimensional change-point detection}\label{methods}

\subsection{Problem statement}
With the above background knowledge, we now turn to the change-point detection problem. Consider an independent sequence of $\mathbb{R}^p$-valued observations $\{X_t\}_{t=1}^n$ with $X_t=(X_{t,1},\dots,X_{t,p})$, where the dimension $p$ can be much higher than the sample size $n$. We are concerned with testing the null hypothesis $H_0: X_t \sim F_1$\, for\, $t=1,\dots,n$\, against the alternative
\begin{align}\label{alt:mult}
	H_1: \exists\, N \in \mathbb{Z}^+,\quad 1\leq \nu_1 < \dots < \nu_{N} < n, \quad
	X_t \sim \begin{cases}
		F_1, & 1\leq t \leq \nu_1, \\
		F_2, & \nu_1 + 1 \leq t \leq \nu_2,\\
		\vdots \\
		F_{N+1}, & \nu_{N} + 1 \leq t \leq n,
	\end{cases}
\end{align}
where two consecutive probability distributions $F_i$ and $F_{i+1}$ differ on a set with non-zero measure for $1\leq i\leq N$. Note that if there is no change point, we let $N = 0$ and follow the convention by setting $\nu_{N+1} = n$. For $i = 1,2,\dots,N$, let $\zeta_i = \lim_{n \rightarrow \infty} \nu_i/n$ (assuming its existence).

\subsection{A self-normalized U-statistic approach for single change-point detection} 
A recent approach based on distance methods, for example \cite{MJ}, involves a statistic built upon the Euclidean energy distance \citep{sr2004, sr2005, bf}:
$$ \mathcal{Q}_n=\max_{2\leq k\leq n-2}\frac{k(n-k)}{n}\,\widehat{E}_\rho(\mathbf{X}_{1:k},\mathbf{X}_{k+1:n}) $$
where $\rho$ is the usual Euclidean distance and $\mathbf{X}_{a:b}=(X_a,X_{a+1},\dots,X_b)$. While this method is effective in low-dimensional scenarios, it encounters limitations with high-dimensional data. Specifically, it may fail to detect structural changes in sequences of high-dimensional observations that extend beyond differences in the first two moments. Our numerical studies, presented in Section \ref{sec:num}, corroborate this observation. Furthermore, a practical challenge with this statistic is that its limiting distribution under the null hypothesis is non-pivotal.

To overcome these limitations, our proposed methodology is built upon the generalized homogeneity metric (as discussed in Section \ref{sec:overview:ours}), which is adept at comparing two high-dimensional distributions. The natural strategy for estimating the generalized energy distance is to employ a U-statistic estimator, which is unbiased. Following this, we introduce our test statistic $M_n$:
\begin{align}\label{our_statistic_revised} 
	M_n\;:=\;\max_{4\leq k\leq n-4}\,\frac{k(n-k) }{n^2}\,T_n(k)
\end{align}
where $T_n(k):=T(\mathbf{X}_{1:k},\mathbf{X}_{k+1:n})$ is the two-sample t-statistic introduced in Section \ref{sec:t-test}. The term $T_n(k)$ can be viewed as a self-normalized U-statistic that estimates a normalized version of the generalized energy distance. Consequently, a candidate for the change-point location can be estimated as:
\begin{align}\label{single cp candidate location} 
	\widehat{\nu}\;:=\;\argmax_{4\leq k\leq n-4}\,\frac{k(n-k) }{n^2}\,T_n(k)\,.
\end{align}

\subsubsection{A componentwise monotone-invariant version}\label{sec:mono_invariant}
A key desideratum for nonparametric change-point analysis is invariance to transformations that preserve the relative ordering of observations. To address this, we consider a simple \emph{componentwise} rank/CDF transformation, which yields an estimator that is invariant to strict componentwise monotone transformations and alleviates heavy-tail concerns. \rev{Because the transformation uses pooled empirical mid-ranks, the pseudo-observations introduced below are mutually dependent even when the original observations are independent. Thus, the monotone-invariant version should be viewed as a heuristic extension of the original method; its practical value is assessed through the simulation studies rather than through a complete HDMSS limiting theory.}

\begin{revblock}
For each coordinate $j\in\{1,\dots,p\}$, define the pooled empirical mid-distribution function
\[
\widehat F^{\mathrm{mid}}_j(x)
:=\frac{1}{n}\sum_{t=1}^n\left\{\mathbbm{1}(X_{t,j}<x)
+\frac{1}{2}\mathbbm{1}(X_{t,j}=x)\right\}.
\]
If $R_{t,j}$ is the average rank of $X_{t,j}$ among $X_{1,j},\dots,X_{n,j}$, the pseudo-observations are
\[
U_{t,j}:=\widehat F^{\mathrm{mid}}_j(X_{t,j})
=\frac{R_{t,j}-1/2}{n},\qquad t=1,\dots,n.
\]
Thus the manuscript and implementation use the same mid-rank convention, including in the presence of ties. The heuristic HDMSS discussion below assumes continuous marginals. Strict componentwise increasing transformations preserve both order and ties, so Proposition~\ref{prop:componentwise_invariance} remains valid under this convention.
\end{revblock}
Let $\mathbf U_t := (U_{t,1},\dots,U_{t,p})^\top\in[0,1]^p$. We define the monotone-invariant statistic and the associated single change-point locator by applying the same procedure to $\{\mathbf U_t\}_{t=1}^n$:
\begin{align}\label{our_statistic_MI}
	M_n^{\mathrm{MI}} &:= \max_{4\leq k\leq n-4}\,\frac{k(n-k)}{n^2}\,T(\mathbf{U}_{1:k},\mathbf{U}_{k+1:n}),\\
	\widehat{\nu}^{\mathrm{MI}} &:= \argmax_{4\leq k\leq n-4}\,\frac{k(n-k)}{n^2}\,T(\mathbf{U}_{1:k},\mathbf{U}_{k+1:n}). \label{single_cp_MI}
\end{align}
In practice, $T(\cdot,\cdot)$ can be taken to be the same self-normalized statistic used in \eqref{our_statistic_revised}.

\begin{proposition}\label{prop:componentwise_invariance}
Let $g=(g_1,\dots,g_p)$ where each $g_j:\mathbb{R}\to\mathbb{R}$ is strictly increasing, and define the transformed data
$Y_t := (g_1(X_{t,1}),\dots,g_p(X_{t,p}))^\top$. Then the pseudo-observations computed from $\{Y_t\}$ equal those computed from $\{X_t\}$, and consequently
\[
M_n^{\mathrm{MI}}(\mathbf{Y}_{1:n}) = M_n^{\mathrm{MI}}(\mathbf{X}_{1:n})
\quad\text{and}\quad
\widehat{\nu}^{\mathrm{MI}}(\mathbf{Y}_{1:n}) = \widehat{\nu}^{\mathrm{MI}}(\mathbf{X}_{1:n}).
\]
Here $M_n^{\mathrm{MI}}(\mathbf{X}_{1:n})$ and $\widehat{\nu}^{\mathrm{MI}}(\mathbf{X}_{1:n})$ denote the test statistic and estimator computed using the pseudo-observations derived from the original sequence $\mathbf{X}_{1:n}$, while $M_n^{\mathrm{MI}}(\mathbf{Y}_{1:n})$ and $\widehat{\nu}^{\mathrm{MI}}(\mathbf{Y}_{1:n})$ denote those computed from the transformed sequence $\mathbf{Y}_{1:n}$.

\end{proposition}

\begin{remark}
{\rm 
The transformation \eqref{our_statistic_MI} is the multivariate/high-dimensional analogue of classical rank-based invariance in univariate change-point analysis: it is invariant to strict componentwise monotone re-parameterizations and reduces sensitivity to marginal tail behavior. While more sophisticated multivariate rank notions (e.g., depth-, spatial-, or transport-induced ranks) are possible, the componentwise version is simple, computationally cheap, and invariant under strict componentwise monotone transformations.
}
\end{remark}

This approach is designed to detect a broader array of changes in high-dimensional distributions, not merely shifts in mean or covariance. Importantly, our proposed statistic $M_n$ possesses a pivotal limiting distribution under the null hypothesis and demonstrates greater power than the test of \cite{MJ} for a wider range of structural breaks. \rev{Here and in the formal asymptotic results below, $M_n$ refers to the statistic computed from the original independent observations. For $M_n^{\mathrm{MI}}$, the same critical values are used as a heuristic calibration, motivated by the double-centering structure and validated empirically in Section~\ref{sec:num}.}

\begin{remark}
{\rm 
    Our theoretical results for $M_n$ and $\widehat{\nu}$ rely on the independence of the raw observations across time, which allows the partial sum process to be analyzed via standard martingale central limit theorems. In contrast, the theoretical analysis for the componentwise monotone-invariant version ($M_n^{\mathrm{MI}}$) is substantially more involved. Because the empirical pseudo-observations are computed using the marginal empirical CDFs evaluated over the \emph{entire pooled sample}, the transformed data points become inherently coupled. Structurally, this substitution elevates the underlying test statistic from a standard degree-2 $U$-statistic to a higher-order generalized $U$-statistic, thereby breaking the standard martingale structure. 
    
    While substituting a $\sqrt{n}$-consistent estimator into a $U$-statistic typically introduces a leading-order perturbation that alters the limiting distribution, we conjecture that the strict double-centering constraints of the generalized energy distance act to difference out the main effects of this marginal estimation error. This heuristic is strongly supported by our extensive numerical studies, which show that the asymptotic critical values derived for $M_n$ provide excellent size control and minimal power loss when applied to $M_n^{\mathrm{MI}}$. However, formally establishing this uniform convergence in the high-dimensional medium sample size (HDMSS) framework---where $p$ diverges---is technically highly non-trivial. Therefore, we defer a fully rigorous theoretical justification to future work, while providing a more detailed heuristic discussion in Section \ref{sec:theory_MI} of the Supplementary Materials.
}
\end{remark}

\subsection{Theoretical framework: Embedding in Hilbert space}\label{sec:theoretical_framework}
The theoretical foundation of our method incorporates the cumulative sum (CUSUM) process, constructed within an embedded Hilbert space. This construction leverages Proposition \ref{equiv_revised}, a known result concerning the characterization of spaces of negative type \citep[see Section 3 in][]{lyons}.

\begin{proposition}\label{equiv_revised}
A metric space $(\mathcal{X},\rho)$ has a negative type if and only if there is a Hilbert space $(\mathcal{H},\langle\cdot,\cdot\rangle_{\mathcal{H}})$ and an embedding map $\phi: \mathcal{X} \to \mathcal{H}$ such that $\rho(x,x') = \Vert \phi(x) - \phi(x')\Vert_{\mathcal{H}}^2$ \, for all $x,x' \in \mathcal{X}$, where $\Vert \cdot \Vert_{\mathcal{H}}=\langle\cdot,\cdot\rangle_{\mathcal{H}}^{1/2}$ is the norm associated with $\mathcal{H}$.
\end{proposition}

Given that $(\mathbb{R}^p,\gamma)$ possesses a strong negative type, Proposition \ref{equiv_revised} guarantees the existence of an embedding map $\phi: \mathbb{R}^p \to \mathcal{H}$ for some Hilbert space $\mathcal{H}$, such that $\gamma(x,x')=\Vert \phi(x) - \phi(x') \Vert_{\mathcal{H}}^2$, for all $x, x' \in \mathbb{R}^p$. From this, we derive:
\begin{align}\label{embed eqn 2} 
	\langle\phi(x),\phi(x')\rangle_{\mathcal{H}}=2^{-1}(\langle\phi(x),\phi(x)\rangle_{\mathcal{H}}+\langle\phi(x'),\phi(x')\rangle_{\mathcal{H}}-\gamma(x,x')).
\end{align}
Assume that $X,X'\stackrel{\mathrm{i.i.d.}}{\sim}P$ and $Y,Y'\stackrel{\mathrm{i.i.d.}}{\sim}Q$, and that $(X,X')$ is independent of $(Y,Y')$. If $\mathbb{E}\|\phi(X)\|^2_{\mathcal{H}}+\mathbb{E}\|\phi(Y)\|^2_{\mathcal{H}}<\infty$, then
\begin{align*}
E_{\gamma}(X,Y) = & 2\mathbb{E}[\langle\phi(X), \phi(X')\rangle_{\mathcal{H}} +\langle\phi(Y), \phi(Y')\rangle_{\mathcal{H}} -2\langle\phi(X), \phi(Y)\rangle_{\mathcal{H}}]
\\= & 2\,\big\|\mathbb{E}\{\phi(X)\}-\mathbb{E}\{\phi(Y)\}\big\|^2_{\mathcal{H}}.
\end{align*}
In particular, when $\phi$ is induced by a reproducing kernel $\mathcal K$ and $\gamma$ is the corresponding kernel-induced distance, $E_{\gamma}$ coincides (up to a constant) with the squared maximum mean discrepancy between $P$ and $Q$. In this setting, the strong negative type property is closely related to the kernel being \emph{characteristic} (i.e., the mean embedding $P\mapsto \mathbb{E}\{\phi(X)\}$ is injective on the relevant class of measures). The generalized energy distance is twice the squared norm of the difference between the means of the embedded data. Therefore, detecting structural breaks for distributional changes can be viewed as detecting changes in the means in the embedded space, when the homogeneity between two distributions is characterized by the generalized energy distance.

The CUSUM process for detecting changes in means within this embedded space is defined as:
\begin{align*}
	S_k := \frac{1}{\sqrt{n}} \sum_{t=1}^k\left(\phi(X_t) - \frac{1}{n} \sum_{j=1}^n \phi(X_j)\right)
\end{align*} 
for $1\leq k \leq n$. Some fundamental properties of $S_k$ are presented in Lemma \ref{lemma_cusum}.

\begin{lemma}\label{lemma_cusum} 
	The cumulative sum process $S_k$ can be expressed as \vspace{-0.1in} $$S_k = \frac{k(n-k)}{n^{3/2}} \;\left(\frac{1}{k} \sum_{t=1}^k \phi(X_t) \,-\, \frac{1}{n-k} \sum_{t=k+1}^n \phi(X_t) \right)$$ for $1\leq k \leq n$. Further, the squared norm of $S_k$ is given by
	\small
	\begin{align*}
		\Vert S_k \Vert^2_{\mathcal{H}} \;=\;
\frac{k^2\,(n-k)^2}{2n^3}\,\Bigg(&\frac{2}{k(n-k)} \sum_{t=1}^k \sum_{t'=k+1}^n \gamma(X_t,X_{t'}) \;-\; \frac{1}{k^2} \sum_{t,t'=1}^k \gamma(X_t,X_{t'})
  \\ &- \;\frac{1}{(n-k)^2} \sum_{t,t'=k+1}^n \gamma(X_t,X_{t'})\Bigg)\,.
\end{align*}
\normalsize
\end{lemma}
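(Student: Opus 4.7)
The lemma is an algebraic identity, so the proof plan is essentially a clean bookkeeping exercise; there is no analytic obstacle, only care with the double sums.

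For the first identity, I would set $\bar{\phi}_n := n^{-1}\sum_{t=1}^n \phi(X_t)$ so that $S_k = n^{-1/2}\bigl(\sum_{t=1}^k \phi(X_t) - k\bar{\phi}_n\bigr)$. Writing $k = \tfrac{(n-k)k}{n} + \tfrac{k^2}{n}$ and splitting the full sum in $\bar{\phi}_n$ into the blocks $\{1,\dots,k\}$ and $\{k+1,\dots,n\}$, the factor $\tfrac{n-k}{n}$ pulls out in front of $\sum_{t=1}^k \phi(X_t)$ and $-\tfrac{k}{n}$ in front of $\sum_{t=k+1}^n \phi(X_t)$, giving the claimed product form $\tfrac{k(n-k)}{n^{3/2}}\bigl(A_k - B_k\bigr)$ with $A_k = k^{-1}\sum_{t=1}^k \phi(X_t)$ and $B_k = (n-k)^{-1}\sum_{t=k+1}^n \phi(X_t)$.

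For the second identity, I would start from $\Vert S_k\Vert_{\cal H}^2 = \tfrac{k^2(n-k)^2}{n^3}\Vert A_k - B_k\Vert_{\cal H}^2$ and expand the norm as $\Vert A_k\Vert^2 + \Vert B_k\Vert^2 - 2\langle A_k, B_k\rangle$. Now comes the key step: invoke Proposition \ref{equiv} in the polarized form (\ref{embed eqn 2}) to replace every inner product $\langle \phi(X_s), \phi(X_t)\rangle_{\cal H}$ appearing in the three double sums by $\tfrac{1}{2}\bigl(\langle\phi(X_s),\phi(X_s)\rangle + \langle\phi(X_t),\phi(X_t)\rangle - \gamma(X_s,X_t)\bigr)$.

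Once that substitution is performed, the ``diagonal'' contributions $\Vert \phi(X_t)\Vert_{\cal H}^2$ appear in all three pieces with coefficients that exactly cancel: in $\Vert A_k\Vert^2$ they sum to $\tfrac{1}{k}\sum_{t\le k}\Vert\phi(X_t)\Vert^2$, in $\Vert B_k\Vert^2$ to $\tfrac{1}{n-k}\sum_{t>k}\Vert\phi(X_t)\Vert^2$, and in $2\langle A_k, B_k\rangle$ to the sum of these same two quantities. This cancellation is the one point where I would have to be careful with the combinatorial coefficients, as the cross term contributes $(n-k)\sum_{t\le k}\Vert\phi(X_t)\Vert^2 + k\sum_{t>k}\Vert\phi(X_t)\Vert^2$ before division by $2k(n-k)$. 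After cancellation only the $\gamma$ terms survive, and regrouping them with the prefactor $\tfrac{k^2(n-k)^2}{n^3}$ and extracting the common $1/2$ yields precisely the stated expression. No analytic machinery beyond Proposition \ref{equiv} is needed.
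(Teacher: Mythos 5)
Your proposal is correct and follows essentially the same route as the paper: the first identity by splitting the centering sum into the blocks $\{1,\dots,k\}$ and $\{k+1,\dots,n\}$, and the second by expanding $\Vert S_k\Vert_{\mathcal H}^2$ into inner products, converting them via (\ref{embed eqn 2}), and observing that the $\Vert\phi(X_t)\Vert_{\mathcal H}^2$ terms cancel, leaving the $\gamma$ sums with the overall factor $1/2$. Your bookkeeping of the coefficients (including the final $1/2$) is accurate, so nothing further is needed.
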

The proof of Lemma \ref{lemma_cusum} is provided in Section \ref{tech appx} of the supplementary material. The first part of the lemma shows that when there is a single change-point at $\nu$, the quantity $n^{3/2}S_{\nu}/\{\nu(n-\nu)\}$ is an unbiased estimator of $\mathbb{E}[\phi(X)] - \mathbb{E}[\phi(Y)]$, where $X \sim F_1$, $Y \sim F_2$, and $X \perp \!\!\! \perp Y$. Accordingly, $2n^3\|S_{\nu}\|_{\mathcal{H}}^2/\{\nu^2(n-\nu)^2\}$, which is a V-statistic, serves as a natural, though biased, plug-in estimator of $E_{\gamma}(X, Y)$. In high-dimensional settings, it is well-known that the bias of V-statistics is often non-negligible. While $\|S_k\|_{\mathcal{H}}$ is still expected to attain its maximum at the true change-point $\nu$ after suitable normalization, the finite sample performance of the estimator based on the V-statistic is inferior to that of a U-statistic-based estimator, which mitigates bias by excluding diagonal terms. For this reason, we construct our test statistic $M_n$ using the U-statistic estimator $\widehat{E}_{\gamma}(\mathbf{X}_{1:k}, \mathbf{X}_{k+1:n})$.

For the practical application of $M_n$ in hypothesis testing—that is, to assess the statistical significance of an estimated change-point location $\widehat{\nu}$—it is necessary to determine its null distribution. A primary challenge is the derivation of the limiting distribution of $M_n$ under the null hypothesis. A key theoretical contribution of this work is the rigorous derivation of the asymptotic null distribution of $M_n$ as both the sample size $n$ and the dimension $p$ tend to infinity. This derivation necessitates a uniform weak convergence result for the stochastic process $\{T_n(\lfloor nr \rfloor): r\in[0,1]\}$, as pointwise weak convergence alone is insufficient. Finally, while determining the limiting distribution is a significant theoretical challenge, the practical implementation of the test statistic $M_n$ also involves computational considerations. A detailed discussion of the computational cost associated with our proposed test statistic is provided in Section \ref{sec:computational}.

\subsection{Assumptions}\label{sec:assumptions}
Before presenting the main theoretical results, we provide some technical assumptions for the change-point model described in (\ref{alt:mult}). Assume $\{Z_i\}_{i = 1}^{N+1}$  is a sequence of independent random vectors such $Z_i \sim F_i$ for all $i = 1,2,..,N+1$. We define $\tau^{(i,j)} = \sqrt{\mathbb{E}(\gamma^2(Z_i,Z_j))}$ for $i \neq j$ and $\tau^{(i,i)} = \sqrt{\mathbb{E}(\gamma^2(Z_i,Z_i'))}$, where $Z_i'$ is an independent copy of $Z_i.$ Recall that for a random vector $X \in \mathbb{R}^p$, $X_{\mathcal{S}_k}=(X_{j}:j\in \mathcal{S}_k)\in\mathbb{R}^{d_k}$.

\begin{assumption}\label{ass0_new}
    There exist constants $0 \leq c \leq C < \infty$ such that uniformly over $p$, for any $X \sim F_i$ and $Y\sim F_j$ that are independent, \[c \leq \inf_{1 \leq k \leq g}\mathbb{E}\rho_k(X_{\mathcal{S}_k}, Y_{\mathcal{S}_k}) \leq \sup_{1 \leq k \leq g}\mathbb{E}\rho_k(X_{\mathcal{S}_k}, Y_{\mathcal{S}_k}) \leq C,\]
    for all $i,j = 1,2,\dots,N+1.$
\end{assumption}

When \( g = p \), \( d_i = 1 \) for \( 1 \leq i \leq g \), and \( \rho_i \) represents the Euclidean distance on \( \mathbb{R} \), Assumption \ref{ass0_new} simplifies to 
\[
c \leq \mathbb{E}|x_i - x_i'| \leq C
\]
for all \( 1 \leq i \leq p \), where \( x_i \) and $x_i'$ denote the \( i \)-th components of \( X \) and $X'$ respectively. It is important to note that for each \( i \), since \( \mathbb{E}|x_i - x_i'| \leq 2 \mathbb{E}|x_i| \), the upper bound holds if \( \mathbb{E}|x_i| \leq C/2 \), i.e., the components of $X$ have uniformly bounded first moments. The lower bound indicates that \( x_i \) is not equal to a constant with probability one, which is a mild requirement. This condition is primarily intended to rule out certain pathological cases and can be relaxed. Under Assumption \ref{ass0_new}, it is easy to see that $\tau^{(i,j)} \asymp p^{1/2}$. The following proposition presents an expansion formula for the distance metric $\gamma$ when the dimension is high, which plays a key role in our theoretical analysis.

\begin{proposition}\label{Prop 4.1 in CZ}
Under Assumption \ref{ass0_new}, for any $X \sim F_i$ and $Y\sim F_j$ that are independent, we have $$\frac{\gamma(X,Y)}{\tau^{(i,j)}} = 1 + \frac{1}{2} L(X,Y) + R(X,Y)\,,$$ where $L(X,Y)= \frac{\gamma^2(X,Y) - [\tau^{(i,j)}]^2}{[\tau^{(i,j)}]^2}$ is the leading term and $R(X,Y)$ is the remainder term. In addition, if $L(X,Y)$  is an  $o_p(1)$ random variable as $p\to\infty$, then  $R(X,Y) = O_p\left(L^2(X,Y)\right)$.
\end{proposition}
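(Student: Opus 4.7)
\textbf{Proof proposal for Proposition \ref{Prop 4.1 in CZ}.}

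The proposition is, at its core, a second-order Taylor expansion of the square root $\sqrt{1+x}$ around $x=0$, applied to a centered and normalized version of $\gamma^2(X,X')$. I would proceed as follows.

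First, I would note the algebraic identity
\[
\frac{\gamma(X,X')}{\tau} \;=\; \sqrt{\frac{\gamma^2(X,X')}{\tau^2}} \;=\; \sqrt{1 + L(X,X')},
\]
which is well-defined because $L(X,X') = \gamma^2(X,X')/\tau^2 - 1 \geq -1$. The first claimed expansion is then a tautology once the remainder is defined by
\[
R(X,X') \;:=\; \sqrt{1+L(X,X')} \;-\; 1 \;-\; \tfrac{1}{2}\,L(X,X'),
\]
so no probabilistic content is required for the leading identity. Assumption \ref{ass0} is used here only implicitly to guarantee $\tau \asymp p^{1/2}$ so that dividing by $\tau$ is sensible.

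The substantive part is the order statement $R(X,X') = O_p(L^2(X,X'))$ under $L(X,X') = o_p(1)$. I would observe that the function
\[
f(x) \;:=\; \frac{\sqrt{1+x} - 1 - x/2}{x^2}, \qquad x \in (-1,\infty)\setminus\{0\},
\]
extends continuously to $f(0) = -1/8$ (via the Taylor expansion $\sqrt{1+x} = 1 + x/2 - x^2/8 + O(x^3)$ as $x\to 0$), so $f$ is continuous on $(-1,\infty)$. Since $L(X,X') = o_p(1)$, the continuous mapping theorem gives $f(L(X,X')) \cop f(0) = -1/8$; in particular $f(L(X,X')) = O_p(1)$. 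Because $R(X,X') = f(L(X,X')) \cdot L^2(X,X')$ by construction, we conclude $R(X,X') = O_p(L^2(X,X'))$, as required.

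The only mildly delicate point is handling values of $L$ far from $0$ when invoking the expansion; I would sidestep this by working with $f$ directly and using its global continuity on $(-1,\infty)$ rather than a local Taylor bound. If one prefers a quantitative version, the inequality $|\sqrt{1+x} - 1 - x/2| \le x^2$ for $|x|\le 1/2$ combined with a truncation argument on the event $\{|L(X,X')|\le 1/2\}$ (whose probability tends to $1$ under the hypothesis) yields the same conclusion with an explicit constant. I do not foresee any serious obstacle beyond keeping track of the sign conventions and the domain $L \ge -1$; the rest is a routine application of the continuous mapping theorem.
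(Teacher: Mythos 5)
Your argument is correct: writing $\gamma(X,X')/\tau=\sqrt{1+L(X,X')}$, defining $R:=\sqrt{1+L}-1-L/2$, and noting that $(\sqrt{1+x}-1-x/2)/x^2$ extends continuously to $-1/8$ at $x=0$ so that the continuous mapping theorem gives $R/L^2=O_p(1)$ when $L=o_p(1)$ is exactly the standard expansion argument behind this result. Note that the paper itself does not reprove this proposition (it imports it as Proposition 4.1 of Chakraborty and Zhang, 2019), and your proof is essentially the same Taylor-type route used there, with Assumption \ref{ass0} entering only to ensure $\tau>0$ (indeed $\tau\asymp p^{1/2}$) so the normalization is legitimate.
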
  

The proposition above suggests that \( L(X,Y) \) is the leading term of $\gamma(X,Y)/\tau^{(i,j)}$. We introduce another technical quantity that is closely related to \( L(X,Y) \) and plays an important role in the theoretical justification. For any two random vectors \( X, Y \in \mathbb{R}^p \), we define
\begin{align*}
	H(X, Y)=\frac{1}{\sqrt{\mathbb{E}[\gamma^2(X,Y)]}}\sum_{k=1}^g&\left(\rho_k(X_{\mathcal{S}_k},Y_{\mathcal{S}_k})-\mathbb{E}\left[\rho_k(X_{\mathcal{S}_k},Y_{\mathcal{S}_k})\big\vert X_{\mathcal{S}_k}\right]\right.\\&\qquad\left. - \mathbb{E}\left[\rho_k(X_{\mathcal{S}_k},Y_{\mathcal{S}_k})\big\vert Y_{\mathcal{S}_k}\right]+ \mathbb{E}\left[\rho_k(X_{\mathcal{S}_k},Y_{\mathcal{S}_k})\right]\right),
\end{align*}
where the summand can be viewed as the double-centered distance between $X_{\mathcal{S}_k}$ and $Y_{\mathcal{S}_k}$.

\begin{assumption}\label{ass1_new}
	As $n, p \to \infty$, for any $X \sim F_i$ and $Y\sim F_j$ that are independent,
	\begin{align*}
   \frac{1}{n}\frac{\mathbb{E}[H^4(X,Y)]}{(\mathbb{E}[H(X,Y)^2])^2} = o(1),\quad \frac{\mathbb{E}[H(X,Y)H(X',Y)H(X,Y')H(X',Y')]}{(\mathbb{E}[H(X,Y)^2])^2} = o(1),
	\end{align*}
for all $i,j = 1,2,\dots,N+1$.
\end{assumption}

Assumption \ref{ass1_new} imposes some moment restrictions for $F_1,\dots,F_{N+1}$, similar to those in \cite{zys2018}; see Section 2.2 therein for a more detailed discussion.

\begin{assumption}\label{ass2_new}
	As $n, p \to \infty$, for any $X,Y$ that are independent such that $X \sim F_i$ and $Y \sim F_j$, $$\frac{n^4 [\tau^{(i,j)}]^4 \mathbb{E} \left[ R^4(X,Y)\right]}{\left( \mathbb{E} \left[ H^2(X,Y)\right] \right)^2} =  o(1),$$
    for all $i,j = 1,2,\dots,N+1.$
\end{assumption}

We refer the readers to Remark 4.1 in \cite{us}, which illustrates some sufficient conditions under which  $\alpha_p = O(p^{-1/2})$ and consequently $\tau^{(i,j)}\alpha_p^2  = o(1)$ holds, as $\tau^{(i,j)}\asymp p^{1/2}$. In similar lines of Remark D.1 in the Supplementary Materials of their paper, it can be argued that\, $\mathbb{E}\left[ R^4(X,Y)\right] = O\left(p^{-4}\right)$. Furthermore, with a mild assumption that $\sigma^2 := \lim_{p \to \infty}\mathbb{E} \left[H^2(X, Y)\right]$, we can show that $\mathbb{E}\left[ H^2(X,Y)\right] \asymp 1$. Combining all these results, it is not hard to verify that $n^4 [\tau^{(i,j)}]^4\mathbb{E}\left[ R^4(X,Y)\right]/\left( \mathbb{E}\left[ H^2(X,Y)\right] \right)^2= o(1)$ holds provided that $n = o(p^{1/2})$.

\subsection{Asymptotic analysis under the null}\label{sec:methods:test stat}
The subsequent theorem establishes the limiting process of $T_n(k)$ under the null hypothesis, which is pivotal for deriving the limiting null distribution of $M_n$.

\begin{theorem}\label{theorem1}
	Under the null hypothesis, Assumptions \ref{ass0_new}, \ref{ass1_new} and \ref{ass2_new}, as $n,p \to \infty$,
	\begin{align*}
		&\left\{ \frac{\lfloor nr \rfloor  (n-\lfloor nr \rfloor) }{n^2}\, T_{n}(\lfloor nr \rfloor) \right\}_{r \in [0,1]} \;\rightsquigarrow \; G_0 \qquad \textrm{in} \;\; \mathcal{L}^{\infty}\left( [0,1] \right)\,,
	\end{align*}
	where $G_0(r) :=  r(1-r)\,Q(0,1) - (1-r)\,Q(0,r) - r\,Q(r,1)$\, for \,$r \in (0,1)$ and zero otherwise. Here, $Q$ is a centered Gaussian process with the covariance function given by 
	\begin{align}\label{eq-cov}
		\cov\,\big( Q(a_1,b_1)\,,\, Q(a_2,b_2) \big) \;=\; \big(b_1 \land b_2 \,-\, a_1 \lor a_2\big)^2 \,\, \mathbbm{1}\big(b_1 \land b_2 > a_1 \lor a_2 \big)\,.
	\end{align}
	In particular, $\var\,\big(Q(a,b)\big) = (b-a)^2\, \mathbbm{1}(b>a)$.
\end{theorem}

The proof of this theorem is non-trivial, requiring the establishment of finite-dimensional weak convergence and the stochastic equicontinuity of the random process $\{ n^{-2}\lfloor nr \rfloor  (n-\lfloor nr \rfloor)\,T_{n}(\lfloor nr \rfloor)\}_{r \in [0,1]}$; see Theorem 10.2 in \cite{pollard}. Due to its technical complexity, we relegate the proof to the Supplementary Materials. It is worth mentioning that the limiting Gaussian process $Q$ coincides with the one derived in Theorem 3.4 of \cite{WVS}.

\begin{remark}\label{rem_theorem1}
	{\rm 
	Theorem B.1 in the Supplementary Materials of \cite{us} proves that for fixed $r \in (0,1)$, $T_{n}(\lfloor nr \rfloor) \overset{d}{\rightarrow} N(0,1)$ as $n,p \to \infty$, which implies that $n^{-2}\lfloor nr \rfloor  (n-\lfloor nr \rfloor)\,T_{n}(\lfloor nr \rfloor) \overset{d}{\rightarrow} N(0,r^2(1-r)^2)$ as $n,p \to \infty$. By Theorem \ref{theorem1}, for a fixed $r \in (0,1)$, $G_0(r)$ has a Gaussian distribution with zero mean and $\var \,\left(G_0(r)\right) = r^2(1-r)^2$. This illustrates that the uniform weak convergence result established in Theorem \ref{theorem1} generalizes the pointwise weak convergence result in \cite{us}.
	}
\end{remark}

As a consequence of Theorem \ref{theorem1}, we derive the limiting null distribution of $M_n$.

\begin{theorem}\label{theorem2}
	Under the null hypothesis, Assumptions \ref{ass0_new}, \ref{ass1_new} and \ref{ass2_new}, as\, $n,p \to \infty$, 
	$$M_n \,\overset{d}{\rightarrow} \, \sup_{r \in (0,1)} \,  G_0(r).$$
\end{theorem}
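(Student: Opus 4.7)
The plan is to deduce Theorem \ref{theorem2} directly from Theorem \ref{theorem1} by viewing $M_n$ as (almost) a continuous functional applied to the process $\{T_n(\lfloor nr\rfloor)\}_{r\in[0,1]}$, then invoking the continuous mapping theorem and absorbing the $O(1/n)$ discretization remainder through a Slutsky-type argument. No new probabilistic machinery beyond Theorem \ref{theorem1} should be required.

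Concretely, I would define $\psi : L^\infty([0,1]) \to \mathbb{R}$ by
\begin{align*}
\psi(f) \;:=\; \sup_{r \in [0,1]} r(1-r)\, f(r).
\end{align*}
Since $r(1-r) \le 1/4$ on $[0,1]$, $\psi$ is $\tfrac{1}{4}$-Lipschitz in the supremum norm and therefore continuous on all of $L^\infty([0,1])$. Combining Theorem \ref{theorem1} with the continuous mapping theorem then gives
\begin{align*}
\psi\bigl(T_n(\lfloor n\,\cdot\,\rfloor)\bigr) \;\overset{d}{\longrightarrow}\; \psi(G_0) \;=\; \sup_{r \in (0,1)} r(1-r)\, G_0(r),
\end{align*}
the reduction of the outer sup to $r\in(0,1)$ being justified by $G_0(0)=G_0(1)=0$. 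Because $G_0$ is a nontrivial centered Gaussian process with continuous sample paths, $\psi(G_0)>0$ almost surely.

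The remaining step is to verify $M_n = \psi\bigl(T_n(\lfloor n\,\cdot\,\rfloor)\bigr) + o_p(1)$. Under the natural convention $T_n(k)=0$ for $k \notin \{4,\ldots,n-4\}$ (implicit in Theorem \ref{theorem1}, as $G_0$ vanishes at the endpoints), the map $r \mapsto T_n(\lfloor nr\rfloor)$ is piecewise constant on the dyadic intervals $[k/n,(k+1)/n)$. The elementary identity $r(1-r)-(k/n)(1-k/n) = (r-k/n)(1-r-k/n)$ with $|r-k/n| \le 1/n$ gives $\sup_{r\in[k/n,(k+1)/n)}|r(1-r) - k(n-k)/n^2| \le 1/n$, so
\begin{align*}
\bigl|\psi(T_n(\lfloor n\,\cdot\,\rfloor)) \,-\, \max(M_n,\,0)\bigr| \;\le\; \tfrac{1}{n}\,\bigl\|T_n(\lfloor n\,\cdot\,\rfloor)\bigr\|_\infty \;=\; O_p(1/n),
\end{align*}
tightness of the sup-norm being immediate from the $L^\infty$-weak convergence in Theorem \ref{theorem1}. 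The almost sure positivity of $\psi(G_0)$ forces $P(M_n>0)\to 1$, so the $\max(\cdot,0)$ truncation is asymptotically inactive, and Slutsky's lemma then delivers $M_n \overset{d}{\to} \psi(G_0)$.

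The heavy lifting lives entirely in Theorem \ref{theorem1}; what remains for Theorem \ref{theorem2} is bookkeeping. The two subtle points I expect to check carefully are (a) aligning the boundary convention for $T_n(k)$ at $k\notin\{4,\ldots,n-4\}$ so that the process in Theorem \ref{theorem1} and the maximum defining $M_n$ refer to the same object, and (b) noting that the truncation $M_n \mapsto \max(M_n,0)$, which arises because $\psi$ takes its sup over the entire interval $[0,1]$, washes out in the limit thanks to the almost sure positivity of $\psi(G_0)$. Both are handled by the $O(1/n)$ discretization bound together with the continuity of $\psi$.
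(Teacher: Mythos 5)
Your route is the same as the paper's: the paper proves Theorem \ref{theorem2} in one line, as a consequence of Theorem \ref{theorem1} and the continuous mapping theorem, and your write-up simply supplies the details that line suppresses --- the $\tfrac14$-Lipschitz continuity of $\psi$ on $L^\infty([0,1])$, the $O_p(1/n)$ discretization error between $\max_{4\le k\le n-4}k(n-k)n^{-2}T_n(k)$ and $\sup_{r}r(1-r)T_n(\lfloor nr\rfloor)$ under the zero-extension convention, and the observation that the zero extension turns $M_n$ into $\max(M_n,0)$. All of that bookkeeping is correct, and is more careful than what the paper records.

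The one step you should tighten is the assertion that $\psi(G_0)>0$ almost surely ``because $G_0$ is a nontrivial centered Gaussian process with continuous sample paths.'' That implication is false in general: if $G_0$ were the constant-in-$r$ process $G_0(r)\equiv Z$ with $Z\sim N(0,1)$, then $\sup_{r}r(1-r)G_0(r)=\max(Z,0)/4$, which vanishes with probability $1/2$. Ruling out an atom of $\psi(G_0)$ at $0$ is genuinely needed in your argument (it is what lets you discard the $\max(\cdot,0)$ truncation, via $\limsup_n P(M_n<-\epsilon)\le P(\psi(G_0)=0)$), and it must use the specific covariance of $G_0$. One way: for small $r$ write $G_0(r)=Q(0,1)-\tfrac{1}{1-r}Q(r,1)-U(r)$ with $U(r):=\tfrac1r Q(0,r)$; by path continuity the first two terms tend to $0$ a.s.\ as $r\downarrow 0$, while $U$ has covariance $\mathrm{Cov}(U(r),U(s))=(r\wedge s)/(r\vee s)$, i.e.\ it is an Ornstein--Uhlenbeck process in the time scale $t=-\log r$, whose law of the iterated logarithm gives $\limsup_{r\downarrow 0}\bigl(-U(r)\bigr)=+\infty$ a.s.; hence $G_0(r)>0$ for some $r\in(0,1)$ a.s.\ and $\psi(G_0)>0$ a.s. With that patch your proof is complete; note that the paper's own one-sentence proof does not address this truncation/atom issue at all.
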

Theorem \ref{theorem2} follows from Theorem \ref{theorem1} and the continuous mapping theorem. 
%
Note that the limiting null distribution is pivotal in nature. Consequently, the quantiles of the limiting distribution can be approximated using a large number of Monte Carlo simulations.

\begin{remark}\label{remark quantiles}
	{\rm 
		Table \ref{table:sim_quantiles} below provides the simulated quantiles of the limiting null distribution of $M_n$ based on 10,000 Monte Carlo replications with $\{X_t\}_{t=1}^n$ generated from the $N(0, \mathbf{I}_p)$ distribution with $n=1000$ and $p=1000$. 
		\begin{table}[!ht]\small
			\centering
			\caption{Simulated quantiles of the limiting distribution of $M_n$. Here $Q_{\alpha}$ denotes the $100(1-\alpha)$th quantile of this distribution.}
			\label{table:sim_quantiles}
			\begin{tabular}{cccc}
				\toprule
				$100(1-\alpha)\%$ & $90\%$ & $95\%$ & $99\%$ \\
				\hline
				$Q_{\alpha}$ & 0.568 & 0.642 & 0.812  \\
				\bottomrule
			\end{tabular}
		\end{table}
	}
\end{remark}

The change-point detection method based on the limiting null distribution of $M_n$ is computationally efficient but may sometimes lead to a slight increase in Type I error. For numerical evidence, please refer to Table \ref{tab1_asymp_quantiles1}. Alternatively, one can use a permutation procedure to approximate the quantiles of the distribution of $M_n$ for more accurate results. 
\rev{Algorithm \ref{alg1} presents the pseudocode of the permutation procedure used to test $H_0$ against the single change-point alternative. We denote the number of random permutations by $B_{\mathrm{perm}}$ to distinguish it from the pair-subsampling parameter used later for incomplete $U$-statistics. For the monotone-invariant statistic, the ranks may be computed once from the pooled sample and then permuted as rows, because row permutations do not change the pooled marginal ranks.}

\begin{algorithm}[!ht]
	\caption{Single change-point detection via permutation}\label{alg1}
	\begin{algorithmic}[1]
	\State \textbf{Input:} $\mathbb{R}^p$-valued observations $\{X_1, \dots, X_n\}$; level of significance $\alpha \in(0,1)$; \rev{number of permutation replicates $B_{\mathrm{perm}}$.}
	\State Compute the observed test statistic $M_n$ and the candidate change-point $\widehat{\nu}$.
		\For {\rev{$j=1,2,\dots, B_{\mathrm{perm}}$}}
			\State \rev{Generate an independent uniform random permutation of the indices $\{1, \dots, n\}$ to obtain permuted data $\{X_1^*, \dots, X_n^*\}$.}
			\State Compute the statistic for the permuted data, denote it $M_n^{(j)}$.
		\EndFor
		\State \rev{Compute $p_{\mathrm{perm}}=\{1+\sum_{j=1}^{B_{\mathrm{perm}}}\mathbbm{1}(M_n^{(j)}\ge M_n)\}/(B_{\mathrm{perm}}+1)$.}
	\If{\rev{$p_{\mathrm{perm}}\le \alpha$}}
	 \State Reject $H_0$ at level $\alpha$.
	 \State Return $\widehat{\nu}$ as the estimated change-point.
	 \EndIf
	 \end{algorithmic}
\end{algorithm}

\begin{theorem}\label{thm:perm_validity}
Let $\mathbf{X} = \{X_1, \dots, X_n\}$ be a sequence of independent random vectors in $\mathbb{R}^p$. Let $\mathcal{T}_n(\mathbf{X})$ be any test statistic.
\rev{Consider the permutation testing procedure (Algorithm~\ref{alg1}) which rejects the null hypothesis if the Monte Carlo permutation $p$-value satisfies $p_{\mathrm{perm}}\le \alpha$, where $p_{\mathrm{perm}}$ is computed from $B_{\mathrm{perm}}$ independently and uniformly sampled random permutations and the original statistic.}
Under the null hypothesis $H_0: X_1, \dots, X_n \overset{i.i.d.}{\sim} F$, the test controls the Type I error at level $\alpha$:
\[
P_{H_0}(\text{Reject } H_0) \le \alpha.
\]
\rev{This holds for any sample size $n$, dimension $p$, and number of replicates $B_{\mathrm{perm}}$, regardless of the underlying distribution $F$.}
\end{theorem}

Theorem~\ref{thm:perm_validity} immediately validates Algorithm~\ref{alg1} for both the standard statistic $M_n$ defined in \eqref{our_statistic_revised} and the monotone-invariant statistic $M_n^{\mathrm{MI}}$ defined in \eqref{our_statistic_MI}, as both are well-defined functionals of the sample $\mathbf{X}$. \rev{This statement concerns only the finite-sample permutation level guarantee; it does not assert the unproved pivotal HDMSS limit for $M_n^{\mathrm{MI}}$.}

\rev{\begin{remark}
{\rm 
The number $B_{\mathrm{perm}}$ controls the granularity of the Monte Carlo permutation test. Since the rank of the observed statistic among the $B_{\mathrm{perm}}+1$ values is discrete, the actual rejection probability is bounded by
\[
\frac{\lfloor \alpha(B_{\mathrm{perm}}+1)\rfloor}{B_{\mathrm{perm}}+1}\le \alpha,
\]
The inequality is strict whenever $\alpha(B_{\mathrm{perm}}+1)$ is not an integer, regardless of whether $B_{\mathrm{perm}}$ is small or large. Relatedly, the smallest attainable permutation $p$-value is $1/(B_{\mathrm{perm}}+1)$. The level gap is less than $1/(B_{\mathrm{perm}}+1)$, so a small number of permutations may make the discreteness practically important and reduce power, while a larger $B_{\mathrm{perm}}$ gives finer calibration at additional computational cost.
}
\end{remark}}

\subsection{Power analysis}
In this section, we establish the consistency of our testing procedure under the alternative hypothesis of a single change-point located at \(\nu\). Specifically, we assume the marginal distribution of the sequence follows \(F_1\) for the first \(\nu\) observations (\(X_1, \dots, X_{\nu}\)) and \(F_2\) for the remaining observations (\(X_{\nu + 1}, \dots, X_n\)), with \(F_1 \neq F_2\).

To simplify notation, let \(\tau_1 := \tau^{(1,1)}\), \(\tau_2 := \tau^{(2,2)}\), and \(\tau_3 := \tau^{(1,2)}\). Additionally, we define the variance components \(V_1 = \mathbb{E}[H(X_1, X_1')^2]\), \(V_2 = \mathbb{E}[H(X_n, X_n')^2]\), and \(V_3 = \mathbb{E}[H(X_1, X_n)^2]\). The following theorem establishes the behavior of the test statistic under the alternative hypothesis.

\begin{theorem}\label{alt:fix_alternative}
Define $V := \max\{V_1, V_2, V_3, n\Gamma_1, n\Gamma_2\}$, where 
\begin{align*}
    \Gamma_1 &= \operatorname{var}\Big(\tau_3\mathbb{E}[L(X_1,X_n)\mid X_1] - \tau_1\mathbb{E}[L(X_1,X_1')\mid X_1]\Big), \\
    \Gamma_2 &= \operatorname{var}\Big(\tau_3\mathbb{E}[L(X_1,X_n)\mid X_n] - \tau_2\mathbb{E}[L(X_n,X_n')\mid X_n]\Big).
\end{align*}
If Assumptions \ref{ass0_new}, \ref{ass1_new}, and \ref{ass2_new} hold, then under the single change-point alternative, as $n, p \to \infty$,
\[
E_{\gamma}(X_1, X_n) = 2\tau_3 - \tau_1 - \tau_2 + o(\sqrt{V}/n).
\]
Furthermore, if $n(2\tau_3 - \tau_1 - \tau_2)/\sqrt{V}\rightarrow \infty$, then $M_n \overset{P}{\rightarrow} \infty$. 
\end{theorem}

\begin{remark}\label{alt:illustration1}
{\rm 
Theorem \ref{alt:fix_alternative} demonstrates that under the alternative hypothesis, the expected generalized energy distance between the pre-change and post-change distributions is dominated by the term \(2\tau_3 - \tau_1 - \tau_2\). We can interpret \(2\tau_3 - \tau_1 - \tau_2\) as the ``energy'' of the change, while \(n(2\tau_3 - \tau_1 - \tau_2)/\sqrt{V}\) represents the signal-to-noise ratio. When this ratio diverges, the test statistic diverges in probability, thereby establishing the consistency of the test.

To provide further insight, consider the case where $\gamma$ is the standard Euclidean distance, i.e., $\gamma(X,X') = \|X - X'\|$. Simple calculations yield $\tau_1^2 = 2\operatorname{tr}(\Sigma_1)$, $\tau_2^2 = 2\operatorname{tr}(\Sigma_2)$, and $\tau_3^2 = \operatorname{tr}(\Sigma_1) + \operatorname{tr}(\Sigma_2) +  \|\mu_1 - \mu_2\|^2$, where $(\mu_1, \Sigma_1)$ and $(\mu_2, \Sigma_2)$ are the mean and covariance matrix of $F_1$ and $F_2$, respectively. Assumption \ref{ass0_new} ensures that $\tau_1,\tau_2, \tau_3 \asymp \sqrt{p}$, which implies $2\tau_3 - \tau_1 - \tau_2 = O(\sqrt{p})$. Specifically, direct calculation shows that 
\[
2\tau_3 - \tau_1 - \tau_2 \asymp \frac{\|\mu_1 - \mu_2\|^2}{\sqrt{p}} + \frac{(\operatorname{tr}(\Sigma_1) - \operatorname{tr}(\Sigma_2))^2}{p^{3/2}}.
\]
Under Assumptions \ref{ass0_new}-\ref{ass2_new}, if we further assume that the components of \(X_i\) are independent and have finite fourth moments, then the condition \(\sqrt{n}(2\tau_3 - \tau_1 - \tau_2) \rightarrow \infty\) is equivalent to requiring either \(\sqrt{n}\|\mu_1 - \mu_2\|^2/\sqrt{p} \rightarrow \infty\) or \(\sqrt{n}(\operatorname{tr}(\Sigma_1) - \operatorname{tr}(\Sigma_2))^2/p^{3/2} \rightarrow \infty\). In such cases, \(M_{n} \overset{P}{\rightarrow} \infty\). We present the proof of this result in the Supplementary Materials.
}
\end{remark}

\subsection{Location estimator and its consistency}
In this section, we establish the consistency of the change-point location estimator. While the practical algorithm uses the studentized statistic $T_n(k)$ to define $\widehat{\nu}$ (see \eqref{single cp candidate location}), the theoretical analysis of the studentized version under the alternative is technically involved due to the complex behavior of the pooled variance estimator $\widehat{S}(\mathbf{X}_{1:k},\mathbf{X}_{k+1:n})$. To facilitate a clear theoretical exposition without unnecessary technical complications, we focus our analysis on the un-studentized estimator $\widehat{\nu}^*$, which corresponds to the numerator of our test statistic:
\begin{equation}\label{def-nu}
\widehat{\nu}^* = \argmax_{k}\frac{k(n-k)}{n^2}\widehat{E}_{\gamma}(\mathbf{X}_{1:k},\mathbf{X}_{k+1:n}).
\end{equation}
We denote $\widehat{\zeta}^* = \widehat{\nu}^*/n$ as the estimator of the change-point proportion. The following theorem establishes the consistency rate of this estimator.

\begin{theorem}\label{alt:consistency} 
    Under the single change-point alternative with structural break located at $\nu$, suppose $\zeta=\lim_{n\rightarrow+\infty} \nu/n \in (0,1)$. Define the signal strength sequence
    $$a_n := \frac{n(2\tau_3-\tau_1-\tau_2)}{\sqrt{V'}} \rightarrow \infty,$$
    where $V' = \max\{\log(n)V_1, \log(n)V_2, \log(n)V_3, n\Gamma_1, n\Gamma_2\}$. 
    If Assumptions \ref{ass0_new}, \ref{ass1_new}, and \ref{ass2_new} hold, then as $n, p \rightarrow \infty$,
    $$|\widehat{\zeta}^* - \zeta| = O_p(a_n^{-1}).$$ 
\end{theorem}

\rev{\begin{remark}
{\rm 
The maximizer in \eqref{def-nu} is a mathematical location estimator defined for any observed sequence, but in the testing procedure a change-point location is reported only when the corresponding global test rejects the null hypothesis. Theorems~\ref{alt:fix_alternative} and \ref{alt:consistency} describe two complementary pieces of this behavior. Under the alternative, the signal-to-noise ratio makes the scan statistic diverge, so the test using a fixed asymptotic critical value rejects with probability tending to one. If the reported location is the unstudentized locator $\widehat\nu^*$, its conditional-on-reporting rate is the rate in Theorem~\ref{alt:consistency}. The practical locator $\widehat\nu$ in \eqref{single cp candidate location} maximizes the studentized statistic, and the present theorem does not establish the same rate for that different estimator. For permutation calibration, let $c_{n,\mathrm{perm}}$ denote the data-dependent conditional critical value. A sufficient additional condition for rejection with probability tending to one is $c_{n,\mathrm{perm}}=o_p(M_n)$. Theorem~\ref{alt:fix_alternative} does not by itself establish this condition under the alternative, and we do not claim a general permutation-threshold consistency result here.
}
\end{remark}}

Multiple change-point estimation is developed in Section~\ref{sec:methods:multiple cp} using a Seeded NOT recursion driven by our single-change-point locator, with a consistency guarantee in Theorem~\ref{mult_consistency_SBS}.

\subsection{Computational aspects of the procedure}\label{sec:computational}
\subsubsection{Recursive updates for scanning over split points}\label{sec:computational:recursive}

To implement the proposed test efficiently, recall that our statistic takes the form:
$$ M_n := \max_{4 \leq k \leq n-4} \frac{k(n-k)}{n^2} T_n(k), $$
where 
$$ T_n(k) = T(\mathbf{X}_{1:k}, \mathbf{X}_{k+1:n}) = \frac{\widehat{E}_{\gamma}(\mathbf{X}_{1:k}, \mathbf{X}_{k+1:n})}{a_{k,n-k} \widehat{S}(\mathbf{X}_{1:k}, \mathbf{X}_{k+1:n})}. $$
A naive implementation computing $\widehat{E}_{\gamma}$ and $\widehat{S}$ from scratch for each $k$ would result in an overall complexity of $O(n^3)$. However, the core terms can be computed sequentially for $k = 4, 5, \ldots, n-4$ using quantities derived from the previous step $k-1$.

We first compute the pairwise distance matrix $\mathbf{D} = (\gamma(X_i,X_j))_{i,j=1}^n \in \mathbb{R}^{n \times n}$ with $\gamma(X_i,X_i) = 0$. Let $\mathbf{D}_{S_1,S_2}$ denote the submatrix of $\mathbf{D}$ consisting of rows indexed by $S_1$ and columns indexed by $S_2$. The U-statistic numerator $\widehat{E}_{\gamma}(\mathbf{X}_{1:k}, \mathbf{X}_{k+1:n})$ can be expressed as:
\begin{align*}
\widehat{E}_{\gamma}(\mathbf{X}_{1:k}, \mathbf{X}_{k+1:n}) = \frac{2 \mathbf{1}^\top_k \mathbf{D}_{1:k,k+1:n} \mathbf{1}_{n-k}}{k(n-k)} - \frac{\mathbf{1}^\top_k \mathbf{D}_{1:k,1:k} \mathbf{1}_k}{k(k-1)} - \frac{\mathbf{1}^\top_{n-k} \mathbf{D}_{k+1:n,k+1:n} \mathbf{1}_{n-k}}{(n-k)(n-k-1)}.
\end{align*}
Each term allows for recursive updates. Define $A_{1,k} = \mathbf{1}^\top_k \mathbf{D}_{1:k,k+1:n} \mathbf{1}_{n-k}$ and $A_{2,k}=\mathbf{1}^\top_k \mathbf{D}_{1:k,1:k} \mathbf{1}_k$. The third term is symmetric to the second and is handled similarly. Given $A_{1,k-1}$ and $A_{2,k-1}$, the updates are:
\begin{align*}
& A_{1,k} = A_{1,k-1} - \mathbf{1}^\top_k \mathbf{D}_{1:k,k} + \mathbf{D}_{k,k+1:n}\mathbf{1}_{n-k},\\
& A_{2,k} = A_{2,k-1} + 2\mathbf{D}_{k,1:k} \mathbf{1}^\top_k.
\end{align*}
Consequently, computing the sequence of numerators for all $k$ requires $O(n^2)$ operations, given $\mathbf{D}$.

The pooled variance estimator $\widehat{S}(\mathbf{X}_{1:k}, \mathbf{X}_{k+1:n})$ involves the terms $\widehat{\mathcal{D}}^2(\mathbf{X}_{1:k})$, $\widehat{\mathcal{D}}^2(\mathbf{X}_{k+1:n})$, and $\widehat{\mathcal{C}}(\mathbf{X}_{1:k},\mathbf{Y}_{k+1:n})$. For the generalized distance variance $\widehat{\mathcal{D}}^2(\mathbf{X}_{1:k})$, utilizing results from \cite{zys2018}, we have:
\[
\widehat{\mathcal{D}}^2(\mathbf{X}_{1:k}) = \frac{1}{k(k-3)} \left\{\text{tr}(\mathbf{D}_{1:k,1:k}^2) + \frac{(\mathbf{1}_k^\top \mathbf{D}_{1:k,1:k} \mathbf{1}_k)^2}{(k-1)(k-2)} - \frac{2 \mathbf{1}_k^\top \mathbf{D}^2_{1:k,1:k} \mathbf{1}_k}{k-2}\right\}.
\]
Let $B_{1,k} = \text{tr}(\mathbf{D}_{1:k,1:k}^2)$, $B_{2,k} = \mathbf{1}_k^\top \mathbf{D}_{1:k,1:k} \mathbf{1}_k$, $B_{3,k} = \mathbf{1}_k^\top \mathbf{D}^2_{1:k,1:k} \mathbf{1}_k$, and let $\mathbf{r}_k \in \mathbb{R}^{k}$ be the row sums of $\mathbf{D}_{1:k}$. The updates are:
\begin{align*}
B_{1,k} &= B_{1,k-1} + 2 \mathbf{D}_{1:k-1,k}^\top  \mathbf{D}_{1:k-1,k}, \\
B_{2,k} &= B_{2,k-1} + 2 \mathbf{1}^\top_{k-1} \mathbf{D}_{1:k-1,k}, \\
B_{3,k} &= B_{3,k-1} + (\mathbf{1}^\top_{k-1} \mathbf{D}_{1:k-1,k})^2 + \mathbf{D}_{1:k-1,k}^\top  \mathbf{D}_{1:k-1,k} + 2 \mathbf{r}_{k-1}^\top \mathbf{D}_{1:k-1,k}, \\
\mathbf{r}_k &= \begin{pmatrix} \mathbf{r}_{k-1} + \mathbf{D}_{1:k-1,k} \\ \mathbf{1}^\top_{k-1} \mathbf{D}_{1:k-1,k} \end{pmatrix}.
\end{align*}
Thus, $\widehat{\mathcal{D}}^2(\mathbf{X}_{1:k})$ is computed in $O(n^2)$ total. A similar logic applies to $\widehat{\mathcal{D}}^2(\mathbf{X}_{k+1:n})$. 

Finally, the cross-distance covariance $\widehat{\mathcal{C}}(\mathbf{X}_{1:k},\mathbf{Y}_{k+1:n})$ expands as:
\begin{align*}
\widehat{\mathcal{C}} &= \frac{1}{(k-1)(n-k-1)}\bigg\{ \text{tr}(\mathbf{D}_{1:k,k+1:n}\mathbf{D}_{1:k,k+1:n}^\top) + \frac{(\mathbf{1}_k^\top \mathbf{D}_{1:k,k+1:n}\mathbf{1}_{n-k})^2}{k(n-k)} \\
& \quad - \frac{1}{k}\mathbf{1}_k^\top \mathbf{D}_{1:k,k+1:n}\mathbf{D}_{1:k,k+1:n}^\top\mathbf{1}_k - \frac{1}{n-k}\mathbf{1}_{n-k}^\top \mathbf{D}_{1:k,k+1:n}^\top\mathbf{D}_{1:k,k+1:n}\mathbf{1}_{n-k} \bigg\}.    
\end{align*}
Define the four trace/quadratic terms as $C_{1,k}, \dots, C_{4,k}$. Let $\widetilde{\mathbf{r}}_{k}$ and $\widetilde{\mathbf{c}}_k$ be the row and column sums of $\mathbf{D}_{1:k,k+2:n}$. The updates are:
\begin{align*}
C_{1,k}& =C_{1,k-1} - \mathbf{D}_{1:k,k}^\top\mathbf{D}_{1:k,k}+ \mathbf{D}_{k,k+1:n}\mathbf{D}_{k,k+1:n}^\top,\\
C_{2,k}& =C_{2,k-1}  - \mathbf{1}_{k}^\top\mathbf{D}_{1:k,k} + \mathbf{D}_{k,k+1:n}\mathbf{1}_{n-k},\\
C_{3,k}& =C_{3,k-1}- (\mathbf{1}_{k}^\top\mathbf{D}_{1:k,k})^2+ \mathbf{D}_{k,k+1:n}\mathbf{D}_{k,k+1:n}^\top + 2\widetilde{\mathbf{c}}_{k-1}^\top \mathbf{D}_{k,k+1:n},\\
C_{4,k}& =C_{4,k-1}- \mathbf{D}_{1:k,k}^\top\mathbf{D}_{1:k,k} + (\mathbf{D}_{k,k+1:n}\mathbf{1}_{n-k})^2 -2 \widetilde{\mathbf{r}}_{k-1}^\top \mathbf{D}_{1:k-1,k},\\
\widetilde{\mathbf{r}}_{k} & =\begin{pmatrix} \widetilde{\mathbf{r}}_{k-1} - \mathbf{D}_{1:k-1,k+1} \\ \mathbf{D}_{k,k+2:n}\mathbf{1}_{n-k-1} \end{pmatrix}, \quad 
\widetilde{\mathbf{c}}_{k}  = \widetilde{\mathbf{c}}_{k-1,-1} + \mathbf{D}_{k,k+2:n},
\end{align*}
where $\widetilde{\mathbf{c}}_{k-1,-1}$ is $\widetilde{\mathbf{c}}_{k-1}$ with the first element removed. This maintains the $O(n^2)$ complexity for the denominator.

The calculation of the distance matrix $\mathbf{D}$ dominates the cost, requiring $O(n^2p)$ time and $O(n^2)$ space. The recursive updates for the statistic $M_n$ add only $O(n^2)$. Thus, the total complexity is $O(n^2p)$. This matches the theoretical complexity of \cite{MJ}, though we note that for ultra-high dimensional data (large $p$), the $O(n^2p)$ cost can still be prohibitive. To address this, we introduce computationally efficient surrogates in the next subsection. Our efficient C++ implementation is available in the R package \texttt{KDist} at \url{https://github.com/zhangxiany-tamu/KDist}.

\begin{remark}
{\rm 
When approximating the null distribution via permutation, we avoid recomputing the distance matrix. We simply permute the indices of the rows and columns of the pre-computed matrix $\mathbf{D}$ and run the $O(n^2)$ recursive updates. This makes the permutation test highly efficient.
}    
\end{remark}

\rev{\begin{remark}
{\rm 
For the monotone-invariant statistic, the pooled marginal ranks can be computed once before permutation. This preprocessing costs $O(pn\log n)$ using coordinatewise sorting. Since a permutation only reorders observations, it simply permutes the rows of the rank matrix and does not require recomputing ranks for each replicate. Therefore, after the rank matrix and the corresponding distance matrix have been computed, $B_{\mathrm{perm}}$ permutation replicates cost $O(B_{\mathrm{perm}}n^2)$ additional time using the recursive updates. Including rank construction and the $O(n^2p)$ distance-matrix calculation, the overall practical cost is $O(pn\log n+n^2p+B_{\mathrm{perm}}n^2)$. A naive implementation that recomputes ranks separately for each permutation would add an unnecessary $O(B_{\mathrm{perm}}pn\log n)$ cost.
}
\end{remark}}

\subsubsection{Computational surrogates for high-dimensional settings}\label{sec:surrogates}
The recursive algorithm described in Section \ref{sec:computational:recursive} achieves a complexity of $O(n^2p)$, significantly improving upon the naive $O(n^3p)$ implementation. However, for ultra-high dimensional data where $p$ is very large, the linear dependence on $p$ combined with the quadratic dependence on $n$ can still be computationally demanding. To address this, we propose two practical surrogates that reduce computational cost while preserving the non-parametric nature of our approach.

\paragraph{Surrogate A: Coordinate/Group Sketching.}
When $p$ is large, we can compute the test statistic on random low-dimensional \emph{sketches} of the data. Specifically, for $r=1,\dots,R$, we draw a random subset of coordinates (or feature groups) $\mathcal{S}_r\subseteq\{1,\dots,p\}$ of size $s\ll p$. We then apply our procedure to the projected data $\mathbf{X}_{t,\mathcal{S}_r}=(X_{t,j}:j\in\mathcal{S}_r)$ to obtain the sketch-specific statistics $M_n^{(r)}$ and estimators $\widehat{\nu}^{(r)}$. The results are aggregated across sketches, for instance, by taking the maximum:
\[
M_n^{\mathrm{A}} := \max_{1\le r\le R} M_n^{(r)}, \qquad \widehat{\nu}^{\mathrm{A}} := \widehat{\nu}^{(r^\star)}, \quad \text{where } r^\star=\argmax_{1\le r\le R} M_n^{(r)}.
\]
This strategy reduces the computational cost from $O(n^2p)$ to $O(Rn^2s)$ when $Rs=o(p)$. Max-aggregation is particularly effective when distributional changes are sparse (concentrated on a subset of coordinates), whereas mean or median aggregation can be employed when changes are diffuse.

\paragraph{Surrogate B: Incomplete U-statistic Approximation.}
Our generalized energy distance estimator is a U-statistic involving sums over all within- and between-segment pairs. To reduce the cost, we can approximate these full sums using \emph{incomplete} U-statistics based on a random subset of pairs. \rev{For a given split $k$, we sample up to $N_{\mathrm{pair}}$ pairs uniformly without replacement from each of the two within-segment pair sets $\{(i,j):1\le i<j\le k\}$ and $\{(i,j):k<i<j\le n\}$ and from the between-segment set $\{(i,j):1\le i\le k<j\le n\}$. If a set contains fewer than $N_{\mathrm{pair}}$ pairs, all of its pairs are used. Thus $N_{\mathrm{pair}}$ is a per-pair-set cap; the constant factor of at most three is suppressed in the complexity notation.} We then compute the corresponding subsampled estimator $\widetilde{E}_\gamma$ and self-normalized statistic $\widetilde{T}_n(k)$. The resulting scan statistic is defined as:
\[
\widetilde{M}_n^{\mathrm{B}} := \max_{4\le k\le n-4}\frac{k(n-k)}{n^2}\,\widetilde{T}_n(k).
\]
\rev{This approach reduces the pair-evaluation cost for a given split from $O(n^2p)$ to $O(N_{\mathrm{pair}}p)$. If pairs are sampled separately at each of the $O(n)$ candidate splits, the corresponding full-scan cost is $O(nN_{\mathrm{pair}}p)$, so a computational improvement over the exact $O(n^2p)$ scan requires $N_{\mathrm{pair}}=o(n)$ or an implementation that reuses sampled pairs across split points. As with any incomplete $U$-statistic approximation, overly aggressive subsampling may reduce power because too few sampled pairs may fail to capture the distributional discrepancy. Example~\ref{eg:surrogate_tradeoff} illustrates an analogous statistical--computational trade-off for Surrogate~A and thus provides qualitative guidance only; it is not a Surrogate~B-specific validation or a calibration of $N_{\mathrm{pair}}$.}

\begin{remark}
{\rm 
Surrogates A and B can be combined: for each sketch $\mathcal{S}_r$, one may use an incomplete U-statistic on the projected data. From a theoretical perspective, both surrogates introduce an additional approximation error (arising from sketching or subsampling). \rev{Retaining the consistency results of the exact method requires rates for $s$ and $N_{\mathrm{pair}}$ that control these approximation errors. We do not establish such rates here; their derivation and Surrogate~B-specific empirical calibration are left for future work.}
}
\end{remark}

\subsection{Recursive estimation of multiple change-point locations}\label{sec:methods:multiple cp}
In practice, the number and locations of change points are unknown. To consistently estimate all change-point locations while leveraging our single-change-point locator, we adopt a deterministic Seeded NOT that isolates an interval containing a single change point, applies the localized maximizer, and then recurses on the two sub-intervals.
\rev{Throughout this subsection and its theoretical analysis, the number of true change-points $N$ is fixed and does not depend on the sample size $n$.}

\subsubsection{Seeded Narrowest-Over-Threshold (Seeded NOT)}\label{sec:seeded_not}
To consistently \emph{estimate all} change-point locations, we recommend a seeded Narrowest-Over-Threshold (NOT; \cite{baranowski2019narrowest}) strategy.
It combines (i) a deterministic ``seeded'' family of candidate intervals (as in seeded binary segmentation \citep{k2023}) with (ii) the NOT selection rule: among all intervals whose single-change-point evidence exceeds a threshold, pick the \emph{narrowest} one.

Let $\mathcal I_n$ denote the seeded interval family
\[
\mathcal I_n := \bigcup_{j=0}^{\lfloor \log_2 n\rfloor}\Big\{[s,e]:\ e-s+1=2^j,\ s=1,\,1+2^{j-1},\,1+2\cdot 2^{j-1},\dots,\ e\le n\Big\}.
\]

For an interval $I = [s,e]$ in $\mathcal{I}_n$ and a split point $b \in \{s+3,\dots,e-4\}$, we define
\begin{align*}
&M(I) := \max_{b\in\{s+3,\dots,e-4\}}\frac{(b-s+1)(e-b)}{\rev{n^2}}\,T(\mathbf{X}_{s:b},\mathbf{X}_{(b+1):e}),\\
&\widehat\nu(I):=\argmax_b \frac{(b-s+1)(e-b)}{\rev{n^2}}\,\widehat E_{\gamma}(\mathbf{X}_{s:b},\mathbf{X}_{(b+1):e}).
\end{align*}

Given a threshold $\lambda_{I}$ (which may depend on the interval), we form the set of ``significant'' intervals
\[
\mathcal I_n(s,e;\lambda):=\Big\{I'=[s',e']\in\mathcal I_n:\ [s',e']\subseteq[s,e],\ M(I')>\lambda_{I'}\Big\},
\]
and select the narrowest significant interval $I^\star=\argmin_{I'\in\mathcal I_n(s,e;\lambda)} |I'|$ (breaking ties by the largest $M(I')$).
We then output $\widehat\nu(I^\star)$ and recurse on $[s,\widehat\nu(I^\star)]$ and $[\widehat\nu(I^\star)+1,e]$.
A precise pseudocode description is given in Algorithm~\ref{alg:seededNOT}.

\begin{algorithm}[t]
\caption{Seeded NOT with the single-change-point locator $\widehat\nu(\cdot)$}\label{alg:seededNOT}
\begin{algorithmic}[1]
\Require Data $\{\mathbf X_t\}_{t=1}^n$, seeded interval family $\mathcal I_n$, thresholds $\{\lambda_m\}_{m\ge 1}$.
\State Initialize an empty set of estimated change points $\widehat{\mathcal T}\gets\emptyset$.
\Procedure{NOT-Recursion}{$s,e$}
\If{$e-s+1<8$} \State \Return \EndIf
\State Form $\mathcal I_n(s,e;\lambda)$ and \textbf{if} it is empty, \Return
\State Choose $I^\star=\argmin_{I'\in\mathcal I_n(s,e;\lambda)}|I'|$ (ties by largest $M(I')$)
\State $\widehat\nu\gets \widehat\nu(I^\star)$, \ \ $\widehat{\mathcal T}\gets \widehat{\mathcal T}\cup\{\widehat\nu\}$
\State \Call{NOT-Recursion}{$s,\widehat\nu$}; \ \ \Call{NOT-Recursion}{$\widehat\nu+1,e$}
\EndProcedure
\State \Call{NOT-Recursion}{$1,n$}; \ \ Output $\widehat{\mathcal T}$.
\end{algorithmic}
\end{algorithm}

\begin{remark}
{\rm 
The seeded family $\mathcal I_n$ has cardinality $|\mathcal I_n|=O(n\log n)$ and contains, for each change-point that is separated from its neighbors,
an ``isolating'' interval that contains this change-point but no others; this property underpins the consistency result in Theorem~\ref{mult_consistency_SBS}
below. 
}
\end{remark}

For the multiple change point model specified in (\ref{alt:mult}), we define $\delta_{i,j} = 2\tau^{(i,j)} - \tau^{(i,i)} - \tau^{(j,j)}$. Also, let $V_n = \max\{\log(n) V_n^{(1)},V_n^{(2)}\}$,where
\[\begin{split}
    V_{n,i}^{(1)} = \max\{\E(H(X,X')^2), \E(H(Y,Y')^2), \E(H(X,Y)^2)\},
\end{split}\]
and
\begin{equation*}
\begin{split}V_{n,i}^{(2)} = \max\{n\var(\tau^{(i,i+1)}\E[L(X,Y)|X] - \tau^{(i,i)}\E[L(X,X')|X]),\\n\var(\tau^{(i,i+1)}\E[L(X,Y)|X] - \tau^{(i,i)}\E[L(Y,Y')|Y])\}.
\end{split}
\end{equation*}
where $X \sim F_i$ and $Y \sim F_{i+1}$ for $i = 1,...,N$. The next theorem states the consistency result for our multiple change point estimators.

\begin{theorem}\label{mult_consistency_SBS}
(Consistency of Seeded NOT for multiple change-points.)
Consider the multiple change-point model \rev{\eqref{alt:mult} with $N\ge 1$ change-points at $0< \zeta_1<\cdots<\zeta_N<1$, and write $\zeta_\ell:=\nu_\ell/n$.
Assume that $N$ is fixed as $n,p\to\infty$.}
Let $\widehat{\mathcal T}^{\mathrm{NOT}}$ be the set of estimated change-point locations returned by Algorithm~\ref{alg:seededNOT},
and denote its ordered elements by $\widehat\nu^{\mathrm{NOT}}_{(1)}<\cdots<\widehat\nu^{\mathrm{NOT}}_{(\widehat N)}$
(with $\widehat N:=|\widehat{\mathcal T}^{\mathrm{NOT}}|$) and $\widehat\zeta^{\mathrm{NOT}}_{(\ell)}:=\widehat\nu^{\mathrm{NOT}}_{(\ell)}/n$.

If for an interval $I = [s,e]$, the threshold $\lambda_{I}$ for each subsample of $X_s,...,X_e$ is chosen such that $\lambda_{{I}}\rightarrow \infty$ and
\begin{enumerate}
\item  $\sqrt{V_{n,l}/V_{n,l}^{(1)}} = o(\lambda_{I})$;
\item $\lambda_{I} = o_p(b_{|I|,\ell})$,
\end{enumerate}
then under Assumptions \ref{ass0_new}-\ref{ass2_new}, 
\[
P\!\left(\widehat N=N\ \text{ and }\ \max_{1\le \ell\le N} b_{n,\ell}\big|\widehat\zeta^{\mathrm{NOT}}_{(\ell)}-\zeta_\ell\big|\le C\right)\to 1,
\]
where $b_{n,\ell} := n\delta_{\ell,\ell+1}/\sqrt{V_{n,\ell}}\rightarrow \infty$ and $C$ is a positive constant. 
\end{theorem}

\begin{remark}
    In the above theorem, the two conditions together determine an admissible growth rate for $\lambda_I$. The first condition ensures that, with high probability, no interval that does not contain a change point is selected. The second condition ensures that at least one interval containing a change point is selected. In practice, $\lambda_I$ is chosen as a high quantile of the limiting null distribution.
\end{remark}

\section{Numerical studies} \label{sec:num}
\subsection{Simulation studies}\label{simulations}

In this subsection, we examine the finite sample performance of our proposed methodology for single and multiple change-point detection via simulation studies. We evaluate two variations of our procedure implemented in the R package \texttt{KDist}:
\begin{enumerate}
    \item \textbf{KDist:} The standard procedure using the distance metric $\gamma(z,z')=\|z-z'\|_1^{1/2}$ on the original data.
    \item \textbf{KDist-MI:} The componentwise monotone-invariant estimator ($M_n^{\mathrm{MI}}$) described in Section \ref{sec:mono_invariant}, which applies the same metric to the rank-transformed data.
\end{enumerate}
For multiple change-point detection, we employ the Seeded NOT algorithm (Section \ref{sec:seeded_not}). We compare our approach against the following state-of-the-art methods:
\begin{itemize}
	\item \textbf{MJ:} The E-Divisive procedure \citep{MJ} (R package `ecp').
	\item \textbf{CZ:} The graph-based original scan statistic \citep{CZ} (R package `gSeg').
	\item \textbf{CC:} The max-type edge-count test \citep{CC} (R package `gSeg').
	\item \textbf{WS:} The INSPECT procedure \citep{WS} (R package `InspectChangepoint').
	\item \textbf{AB:} The covariance change-point test \citep{avanesov} (R package `covcp').
    \item \textbf{KCPD:} The kernel multiple change-point algorithm \citep{arlot} using the `ruptures' Python library (RBF kernel). \textbf{KCPD*} denotes a modified version that permits returning zero change-points.
\end{itemize}
It is important to note that the methodology proposed by \cite{WS} focuses on detecting mean shifts in high-dimensional data. Similarly, the procedure developed by \cite{arlot} is aimed at identifying changes in high-dimensional covariance structures. In KCPD, as described by \cite{arlot}, we utilize the radial basis function kernel, with the bandwidth determined by the median heuristics. Additionally, we set the constants \(c_1\) and \(c_2\) in KCPD using the ``slope heuristics'' method outlined in Section 6.2 of \cite{arlot}. We note that the original KCPD will always return at least one change-point. To address this issue, we also implement a version of KCPD (denoted by KCPD*) with the penalty term being zero when there is no change point, which allows the algorithm to return a zero number of change points. We compare our method to these competitors to demonstrate that, when changes occur in higher-order moments, our approach outperforms theirs in both detecting and localizing the unknown change-points. We first consider examples under the null hypothesis and single change-point alternatives.

\begin{example}[No structural break]\label{eg0}
	{\rm 
		~
		\begin{enumerate}
			\item $X_t \overset{i.i.d.}{\sim} N(0, \mathbf{I}_p)$ \, for\, $1\leq t \leq  n $.
			\item $X_t \overset{i.i.d.}{\sim} N(0, \Sigma)$ \, for\, $1\leq t \leq n$,\, where $\Sigma = (\sigma_{ij})_{i,j=1}^p$ with $\sigma_{ij} = 0.7^{|i-j|}$.
			\item For each\, $i=1, \dots, p$, $\{X_{1,i}, \dots , X_{n,i}\}$ is generated independently from the ARCH(2) model $X_{t,i} = \sigma_{t,i}\, \epsilon_{t,i}$, with\, $\sigma_{t,i}^2 = \alpha_0 + \alpha_1\,X_{t-1,i}^2 + \alpha_2\,X_{t-2,i}^2\,,$ where $\epsilon_{t,i} \overset{i.i.d.}{\sim} N(0,1)$ for $1\leq t \leq n$. We consider\, $\alpha_0=10^{-6},\, \alpha_1=0.008$\, and\, $\alpha_2=0.001$.
			\item For each\, $i=1, \dots, p$, $\{X_{1,i}, \dots , X_{n,i}\}$ is generated independently from the GARCH(1,1) model $X_{t,i} = \sigma_{t,i}\, \epsilon_{t,i}$, with\, $\sigma_{t,i}^2 = \alpha_0 + \alpha_1\,X_{t-1,i}^2 + \beta_1\,\sigma_{t-1,i}^2\,,$ where $\epsilon_{t,i} \overset{i.i.d.}{\sim} N(0,1)$ for $1\leq t \leq n$. We consider\, $\alpha_0=10^{-6},\, \alpha_1=0.001$\, and\, $\beta_1=0.001$.
		\end{enumerate}
	}
\end{example}

\begin{example}[Single change-point in mean]\label{eg1} 
	{\rm 
		~
		\begin{enumerate}
			\item $X_t \overset{i.i.d.}{\sim} N(0, \mathbf{I}_p)$ \, for\, $1\leq t \leq \lfloor n/2 \rfloor$\, and \,$X_t \overset{i.i.d.}{\sim} N(\mu, \mathbf{I}_p)$ \, for $ \lfloor n/2 \rfloor +1 \leq t \leq n$, where $\mu = (0.6, \dots, 0.6) \in \mathbb{R}^p$.
			\item $X_t \overset{i.i.d.}{\sim} N(0, \Sigma)$ \, for\, $1\leq t \leq \lfloor n/2 \rfloor$\, and \,$X_t \overset{i.i.d.}{\sim} N(\mu, \Sigma)$ \, for $ \lfloor n/2 \rfloor +1 \leq t \leq n$, where $\Sigma = (\sigma_{ij})_{i,j=1}^p$ with $\sigma_{ij} = 0.7^{|i-j|}$, and $\mu = (0.6, \dots, 0.6) \in \mathbb{R}^p$.
		\end{enumerate}
	}
\end{example}

\begin{example}[Single change-point in higher-order moments]\label{eg2}
	{\rm 
		~
		\begin{enumerate}
			\item $X_t \overset{i.i.d.}{\sim} N(\mu, \mathbf{I}_p)$ with $\mu = (1, \dots, 1)\in\mathbb{R}^p$\, for\, $1\leq t \leq \lfloor n/2 \rfloor$\, and \,$X_{t,i} \overset{i.i.d.}{\sim}$ Exponential\,$(1)$ for $i=1,\dots,p$\, and\, $ \lfloor n/2 \rfloor +1 \leq t \leq n$.
			
			\item $X_t = \underbrace{(X_{t,1}, \dots, X_{t,p})}_{ \overset{i.i.d.}{\sim} \text{Poisson}(1)} -\, 1$\, for\, $1\leq t \leq \lfloor n/2 \rfloor$\, and \,$X_t = (X_{t,1},\, \dots,\, X_{t,\lfloor  p/2\rfloor},\, X_{t,(\lfloor  p/2\rfloor +1)},\,\\\dots ,\,  X_{t,p})$ \, where $X_{t,1}, \dots, X_{t,\lfloor  p/2\rfloor} \overset{i.i.d.}{\sim}$ \text{Poisson}\,$(1) -1$,\, and \,$X_{t,(\lfloor p/2\rfloor +1)},\, \dots ,\,  X_{t,p} \overset{i.i.d.}{\sim} $ Rademacher\,$(0.5)$\, for \, $\lfloor n/2 \rfloor +1 \leq t \leq n$.
			
			\item $X_t = \underbrace{(X_{t,1}, \dots, X_{t,p})}_{ \overset{i.i.d.}{\sim} \text{Poisson}(1)} -\, 1$\, for\, $1\leq t \leq \lfloor n/2 \rfloor$\, and \,$X_t = (X_{t,1},\, \dots,\, X_{t,\lfloor 4 p/5\rfloor},\, X_{t,(\lfloor 4 p/5\rfloor +1)},\,\\ \dots ,\,  X_{t,p})$ \, where $X_{t,1},\, \dots,\, X_{t,\lfloor 4 p/5\rfloor} \overset{i.i.d.}{\sim}$ \text{Poisson}\,$(1) -1$,\, and \,$X_{t,(\lfloor 4 p/5\rfloor +1)},\, \dots ,\,  X_{t,p} \overset{i.i.d.}{\sim} $ Rademacher\,$(0.5)$\, for \, $\lfloor n/2 \rfloor +1 \leq t \leq n$.
			
			\item $X_t = R^{1/2} Z_{1t}$ \, for\, $1\leq t \leq \lfloor n/2 \rfloor$\, and \,$X_t = R^{1/2} Z_{2t}$ \, for $ \lfloor n/2 \rfloor +1 \leq t \leq n$,\, where $R = (r_{ij})_{i,j=1}^p$ with $r_{ii}=1$ for\, $i=1, \dots, p$, $r_{ij} = 0.25$ if $1 \leq |i-j| \leq 2$ and $r_{ij} = 0$ otherwise, $Z_{1t} \overset{i.i.d.}{\sim} N(0, \mathbf{I}_p)$ and  $Z_{2t} = \underbrace{(Z_{2t,1}, \dots, Z_{2t,p})}_{ \overset{i.i.d.}{\sim} \text{Exponential}(1)} -\, 1.$
		\end{enumerate}
	}
\end{example}

In Example \ref{eg2}, the change occurs in the higher-order moments or the distributional form, while the mean and covariance structure may remain constant (or similar). We consider $n=100$ and $p=100,200$. We implement Algorithm 1 with \rev{$B_{\mathrm{perm}}=199$} permutation replicates and a significance level of $\alpha=0.05$. We cluster the observations based on the estimated significant change-point locations and compute the Adjusted Rand Index (ARI) \citep{MA}. The ARI is a positive value between 0 and 1. The ARI value is 0 when there is no change-point, but the method estimates one (or more) change-point location. The ARI value is 1 when the estimation is perfect. The higher the value of ARI, the more accurate the estimation of the change-point locations. We conduct 100 simulations for each example mentioned above, calculating the ARI value and reporting it in the table below.

\begin{table}[!h]\footnotesize 
	\centering
	\caption{Comparison of average ARI values for different methods over 100 simulations, where $n=100$.}
	\label{tab1}
	\begin{tabular}{c c c c c c c c c c c c}
		\toprule
		& & $p$ & \textbf{KDist} & \textbf{KDist-MI} & MJ & CC & CZ & WS & AB & KCPD  & KCPD* \\
		\hline
		\multirow{8}{*}{Ex \ref{eg0}} & (1) & 100 & 0.980 & 0.940 & 0.970 & 0.970 & 0.970 & 0.000 & 1.000  & 0.000 & 1.000 \\
		& (1) & 200 & 0.970 & 0.920 & 0.980 & 0.960 & 0.960 & 0.000 & 1.000 & 0.000 & 1.000\\
		& (2) & 100 & 0.930 & 0.890 & 0.970 & 0.910 & 0.920 & 0.000 & 1.000 & 0.000 & 1.000\\
		& (2) & 200 & 0.970 & 0.920 & 0.970 & 0.950 & 0.980 & 0.000 & 1.000 & 0.000 & 1.000\\
		& (3) & 100 & 0.960 & 0.920 & 0.940 & 0.980 & 0.910 & 0.000 & 1.000  & 0.000 & 1.000\\
		& (3) & 200 & 0.970 & 0.920 & 0.950 & 0.960 & 0.960 & 0.000 & 1.000 & 0.000& 1.000\\
		& (4) & 100 & 0.950 & 0.960 & 0.950 & 0.990 & 0.930 & 0.000 & 1.000  & 0.000 & 1.000\\
		& (4) & 200 & 0.970 & 0.970 & 0.960 & 0.920 & 0.920 & 0.000 & 1.000  & 0.000 & 1.000\\
		\hline
		\multirow{4}{*}{Ex \ref{eg1}} & (1) & 100 & 1.000 & 1.000 & 1.000 & 0.997 & 0.999 & 1.000 & 0.121  & 1.000  & 1.000\\
		& (1) & 200 &  1.000 & 1.000 & 1.000 & 0.999 & 0.999 & 1.000 & 0.111 & 1.000 & 0.000\\
		& (2) & 100 & 0.984 & 0.987 & 0.986 & 0.867 & 0.946 & 0.981  & 0.256 & 0.986 & 0.992 \\
		& (2) & 200 & 0.996 & 0.996 & 0.996 & 0.978 & 0.983 & 0.993 & 0.138 & 0.993 & 0.000\\
		\hline
		\multirow{6}{*}{Ex \ref{eg2}}  & (1) & 100 & 0.993 & 0.991 & 0.014 & 0.004 & 0.027 & 0.390 & 0.000 & 0.197 & 0.000\\
		& (1) & 200 & 1.000 & 0.998 & 0.030 & 0.007 & 0.037 & 0.414  & 0.000  &  0.238 & 0.000\\
		& (2) & 100 & 0.999 & 0.998 & 0.034 & 0.001 & 0.059 & 0.468 & 0.425   & 0.214  & 0.000 \\
		& (2) & 200 & 1.000 & 1.000 & 0.032 & 0.001 & 0.055 & 0.502 & 0.529  & 0.243 & 0.000\\
		& (3) & 100 &  0.976 & 0.962 & 0.018 & 0.002 & 0.040 & 0.450 & 0.188  & 0.213 & 0.000\\
		& (3) & 200 & 0.992 & 0.987 & 0.042 & 0.000 & 0.050 &  0.494 & 0.242 & 0.214 & 0.000 \\
		& (4) & 100 &  0.978 & 0.964 & 0.024 & 0.021 & 0.065 & 0.402 & 0.000 & 0.154 & 0.000 \\
		& (4) & 200 & 0.992 & 0.978 & 0.029 & 0.006 & 0.040 & 0.363 & 0.000 & 0.249 & 0.000 \\
		\bottomrule
	\end{tabular}
\end{table}

The results presented in Table \ref{tab1} show that most methods perform similarly well when there is no structural break or when a simple mean shift occurs. However, several competitors exhibit specific limitations. In the absence of a structural break (Ex \ref{eg0}), the procedure developed by \cite{WS} incorrectly detects a break, resulting in a zero ARI value. By design, KCPD also fails under the null as it always reports at least one change point; in contrast, KCPD* operates effectively in situations where there are no change points.

Although our methodology is designed for an i.i.d. sequence of observations, the results from Examples \ref{eg0}.2 and \ref{eg2}.2 indicate that both KDist and KDist-MI perform reasonably well even in the presence of relatively weak conditional heteroskedasticity (ARCH/GARCH) and temporal dependence. In contrast, the method proposed by \cite{avanesov} (AB) does not perform well in Examples \ref{eg1}.1-\ref{eg1}.2, where there is a change in the mean while the covariance structure remains unchanged. Additionally, we note that KCPD* fails when \(n=100\) and \(p=200\) in Example \ref{eg1}.

Most interestingly, when changes occur in the distribution beyond the first two moments (Example \ref{eg2}), our method significantly outperforms the competitors. Both KDist and KDist-MI maintain ARI scores above 0.96 across these challenging scenarios (including tail and copula changes). Intuitively, the E-Divisive procedure (MJ) has low detection power here, as the Euclidean energy distance fails to capture inhomogeneity between two high-dimensional distributions beyond the first two moments. Similarly, while the graph-based methods of \cite{CZ} (CZ) and \cite{CC} (CC) are effective for location and scale alternatives, they are ineffective in detecting changes in higher-order moments. The performances of KCPD and KCPD* are also lacking in this scenario, likely due to the ineffectiveness of the standard Gaussian kernel in detecting complex distributional changes in high dimensions. Notably, KDist-MI performs nearly identically to KDist across all alternatives, demonstrating that the gain in theoretical invariance comes with minimal loss of statistical power.

In Table \ref{tab1}, we illustrated the average ARI values obtained over 100 simulated datasets on Examples \ref{eg0}-\ref{eg2}, implementing Algorithm 1 by approximating the quantiles of $M_n$ via a permutation procedure. Alternatively, in Remark \ref{remark quantiles} (Section \ref{sec:methods:test stat}), we presented the approximated quantiles of the asymptotic null distribution. Table \ref{tab1_asymp_quantiles1} below compares the Type I error rates (the proportion of false detections in Example \ref{eg0}) using these permutation-based versus asymptotic critical values. The results indicate that the proportions of false detection are quite close in both cases. Using asymptotic quantiles provides approximately valid Type~I error control for both KDist and KDist-MI, with empirical sizes generally close to the nominal 5\% level. While we observe slight size inflation for the asymptotic calibration in some settings (most notably under dependence), the permutation method tends to be slightly more conservative. \rev{These empirical results suggest asymptotic calibration as a faster heuristic alternative to Algorithm~\ref{alg1}, especially for KDist-MI, rather than establishing its general validity. In this example it is about seven times faster because it avoids repeated resampling, while remaining reasonably accurate under the simulated weak conditional heteroskedasticity.}
\begin{table}[!ht]\footnotesize
	\centering
	\caption{Percentage of false positives (Type I error) in Example \ref{eg0} over 100 simulations ($n=100, \alpha=0.05$).}
	\label{tab1_asymp_quantiles1}
	\begin{tabular}{c c c c c c c}
		\toprule
		& & $p$ 
		& \multicolumn{2}{c}{\textbf{KDist}} 
		& \multicolumn{2}{c}{\textbf{KDist-MI}}\\
		\cmidrule(lr){4-5}\cmidrule(lr){6-7}
		& & 
		& Permutation & Asymptotic
		& Permutation & Asymptotic \\
		\midrule
		\multirow{8}{*}{Ex \ref{eg0}} 
		& (1) & 100 & 0.02 & 0.04 & 0.02 & 0.02 \\
		& (1) & 200 & 0.03 & 0.03 & 0.04 & 0.04 \\
		& (2) & 100 & 0.07 & 0.10 & 0.08 & 0.10 \\
		& (2) & 200 & 0.03 & 0.07 & 0.07 & 0.11 \\
		& (3) & 100 & 0.04 & 0.04 & 0.05 & 0.05 \\
		& (3) & 200 & 0.03 & 0.03 & 0.05 & 0.04 \\
		& (4) & 100 & 0.05 & 0.05 & 0.04 & 0.05 \\
		& (4) & 200 & 0.03 & 0.03 & 0.06 & 0.06 \\
		\bottomrule
	\end{tabular}
\end{table}

The following examples illustrate the performance of Algorithm 2 in the cases of two change-points.

\begin{example}[Two change-points in mean]\label{eg3}
	{\rm 
		~
		\begin{enumerate}
			\item $X_t \overset{i.i.d.}{\sim} N(0, \mathbf{I}_p)$ \, for\, $1\leq t \leq \lfloor n/3 \rfloor$\, and\, $2\lfloor n/3 \rfloor +1\leq t \leq n$,\, and \,$X_t \overset{i.i.d.}{\sim} N(\mu, \mathbf{I}_p)$ \, for $ \lfloor n/3 \rfloor +1 \leq t \leq 2\lfloor n/3 \rfloor$, where\, $\mu = (0.6, \dots, 0.6) \in \mathbb{R}^p$.
			\item $X_t \overset{i.i.d.}{\sim} N(0, \Sigma)$ \, for\, $1\leq t \leq \lfloor n/3 \rfloor$\, and\, $2\lfloor n/3 \rfloor +1\leq t \leq n$,\, and \,$X_t \overset{i.i.d.}{\sim} N(\mu, \Sigma)$ \, for $ \lfloor n/3 \rfloor +1 \leq t \leq 2\lfloor n/3 \rfloor$, where\, $\Sigma = (\sigma_{ij})_{i,j=1}^p$ with $\sigma_{ij} = 0.7^{|i-j|}$ and $\mu = (0.6, \dots, 0.6) \in \mathbb{R}^p$.
		\end{enumerate}
	}
\end{example}

\begin{example}[Two change-points in higher-order moments]\label{eg4}
	{\rm 
		~
		\begin{enumerate}
			\item $X_t \overset{i.i.d.}{\sim} N(\mu, \mathbf{I}_p)$ with $\mu = (1, \dots, 1)\in\mathbb{R}^p$\, for\, $1\leq t \leq \lfloor n/3 \rfloor$\, and\, $2\lfloor n/3 \rfloor +1\leq t \leq n$,\\ and \,$X_{t,i} \overset{i.i.d.}{\sim}$ Exponential\,$(1)$\, for \,$i=1,\dots,p$\, and \,$ \lfloor n/3 \rfloor +1 \leq t \leq 2\lfloor n/3 \rfloor$.
			
			\item $X_t = \underbrace{(X_{t,1},\, \dots,\, X_{t,p})}_{ \overset{i.i.d.}{\sim} \text{Poisson}(1)} -\, 1$ \, for\, $1\leq t \leq \lfloor n/3 \rfloor$\, and \,\,$2\lfloor n/3 \rfloor +1\leq t \leq n$,\, and $X_t = (X_{t,1},\, \dots,\, X_{t,\lfloor  p/2 \rfloor},\, X_{t,(\lfloor p/2\rfloor +1)},\, \dots,\,  X_{t,p})$, where $X_{t,1},\, \dots,\, X_{t,\lfloor p/2\rfloor} \overset{i.i.d.}{\sim}$ \text{Poisson}\,$(1) -1$,\, and $X_{t,(\lfloor p/2\rfloor +1)},\, \dots,\,  X_{t,p} \overset{i.i.d.}{\sim} $ Rademacher\,$(0.5)$\, for \,$ \lfloor n/3 \rfloor +1 \leq t\leq 2\lfloor n/3 \rfloor$.
			
			\item $X_t = \underbrace{(X_{t,1},\, \dots,\, X_{t,p})}_{ \overset{i.i.d.}{\sim} \text{Poisson}(1)} -\, 1$ \, for\, $1\leq t \leq \lfloor n/3 \rfloor$\, and \,\,$2\lfloor n/3 \rfloor +1\leq t \leq n$,\, and $X_t = (X_{t,1},\, \dots,\, X_{t,\lfloor 4 p/5\rfloor},\, X_{t,(\lfloor 4 p/5\rfloor +1)},\, \dots ,  X_{t,p})$, where $X_{t,1},\, \dots,\, X_{t,\lfloor 4 p/5\rfloor} \overset{i.i.d.}{\sim}$ \text{Poisson}\,$(1) -1$,\, and $X_{t,(\lfloor 4 p/5\rfloor +1)},\, \dots,\,  X_{t,p} \overset{i.i.d.}{\sim} $ Rademacher\,$(0.5)$\, for \,$ \lfloor n/3 \rfloor +1 \leq t\leq 2\lfloor n/3 \rfloor$.
		\end{enumerate}
	}
\end{example}

\begin{table}[!ht]\footnotesize
	\centering
	\caption{Comparison of average ARI values for multiple change-points (100 simulations, $n=100$).}
	\label{tab2}
	\begin{tabular}{c c c c c c c c c c c}
		\toprule
		& & $p$ & \textbf{KDist} & \textbf{KDist-MI} & MJ & CC & CZ &  WS & KCPD & KCPD*\\
		\hline
		\multirow{4}{*}{Ex \ref{eg3}}  
		& (1) & 100 & 0.994 & 0.997 & 1.000 & 0.975 & 0.960 & 0.916 & 0.620 & 0.020 \\
		& (1) & 200 & 0.998 & 0.998 & 1.000 & 0.996 & 0.959 & 0.886 & 0.568 & 0.000 \\
		& (2) & 100 & 0.958 & 0.960 & 0.978 & 0.747 & 0.885 & 0.643 & 0.526 & 0.162 \\
		& (2) & 200 & 0.984 & 0.982 & 0.994 & 0.912 & 0.935 & 0.619 & 0.543 & 0.000 \\
		\hline
		\multirow{6}{*}{Ex \ref{eg4}}   
		& (1) & 100 & 0.967 & 0.942 & 0.024 & 0.487 & 0.412 & 0.383 & 0.210 & 0.000 \\
		& (1) & 200 & 0.989 & 0.985 & 0.054 & 0.496 & 0.413 & 0.279 & 0.190 & 0.000 \\
		& (2) & 100 & 0.987 & 0.982 & 0.028 & 0.214 & 0.528 &  0.179 & 0.143 & 0.000 \\
		& (2) & 200 & 0.989 & 0.991 & 0.028 & 0.244 & 0.534 & 0.172 & 0.170 & 0.000 \\
		& (3) & 100 & 0.599 & 0.510 & 0.023 & 0.255 & 0.473 &  0.204 & 0.149 & 0.000 \\
		& (3) & 200 & 0.863 & 0.801 & 0.016 & 0.277 & 0.504 & 0.193 & 0.184 & 0.000 \\
		\bottomrule
	\end{tabular}
\end{table}

The method proposed by \cite{avanesov} (AB) cannot be compared in this setting because the R package `covcp' is restricted to testing for a single change-point. Examples \ref{eg4}.2 and \ref{eg4}.3 are similar in structure; the primary difference is that \ref{eg4}.3 considers distributional changes in a sparser subset of the components.

The results from Table \ref{tab2} indicate that almost all methods (except for KCPD) perform nearly equally well in Example \ref{eg3}.1, where there are two change-points in the mean. However, in Example \ref{eg3}.2 (Rademacher Mean Shift), our methodology—both KDist and KDist-MI—along with the E-Divisive procedure and graph-based methods, perform considerably better than the rest.
Most interestingly, when there are two change-points in the distribution beyond the first two moments (Example \ref{eg4}), our method significantly outperforms the other competitors in accurately estimating the locations. As expected, the E-Divisive procedure (MJ) suffers from low detection power here, as the Euclidean energy distance fails to capture inhomogeneity between high-dimensional distributions beyond the first two moments. Similarly, KCPD and KCPD* are ineffective at detecting these distributional changes in high dimensions, likely due to their reliance on the standard Gaussian kernel. Our results also indicate that our method performs significantly better than the graph-based methods proposed by \cite{CZ} and \cite{CC} in detecting and localizing general types of changes in the underlying distribution beyond the first two moments.
In summary, our numerical investigations illustrate that our methodology is more sensitive to general distributional changes in high dimensions compared to existing state-of-the-art methods.

The following examples illustrate the performance of Algorithm 2 in scenarios with changes at varying sparsity levels.

\begin{example}[Changes with varying sparsity levels]\label{eg6}
{\rm 
	\begin{enumerate}
	\item $X_t \overset{i.i.d.}{\sim} N(0, \mathbf{I}_p)$ \, for\, $1\leq t \leq \lf 
         n/3 \rf$\, and\, $2\lf n/3 \rf +1\leq t \leq n$,\, and \,$X_t \overset{i.i.d.}{\sim} N(\mu_s, \mathbf{I}_p)$ \, for $ \lf n/3 \rf +1 \leq t \leq 2\lf n/3 \rf$, where\, $\mu_s = (\underbrace{0.4, \dots, 0.4}_{s \cdot p}, \underbrace{0, \dots, 0}_{(1-s) \cdot p}) \in {\bb R}^p$ and $s \in (0,1]$ represents the sparsity level. We consider $n = 150$ observations in $p = 100$ dimensions, with true change-points at positions $\lfloor n/3 \rfloor = 50$ and $\lfloor 2n/3 \rfloor = 100$. The parameter $s$ controls the proportion of components exhibiting the mean shift, ranging from very sparse ($s = 0.05$) to dense ($s = 1$), while maintaining a constant signal strength of $\delta = 0.4$ across all affected components.

        \item $X_t \overset{i.i.d.}{\sim} N(0, \mathbf{I}_p)$ \, for\, $1\leq t \leq \lf n/3 \rf$\, and\, $2\lf n/3 \rf +1\leq t \leq n$. For $ \lf n/3 \rf +1 \leq t \leq 2\lf n/3 \rf$, $X_{t,i} \overset{i.i.d.}{\sim} \chi^2_2$\, (centered and scaled to have mean 0 and variance 1) for $1 \leq i \leq \lfloor s \cdot p\rfloor$, and $X_{t,i} \overset{i.i.d.}{\sim} N(0,1)$ for $\lfloor s \cdot p \rfloor +1 \leq  i \leq p$, where $s \in (0,1]$ represents the sparsity level.
		\end{enumerate}
}
\end{example}

Figure \ref{fig_sparse} presents a comparative analysis of our method against the approach of \cite{WS} across various sparsity levels in the detection of change-points. In Example~\ref{eg6}.1, where the changes manifest as mean shifts in only a subset of components, we observe an interesting pattern in performance. When the sparsity level is low ($s < 0.2$), Wang and Samworth's method demonstrates superior performance, which aligns with its design optimization for sparse mean-shift detection. However, as the signal becomes more dense ($s \geq 0.2$), our method achieves higher ARI values, indicating better change-point detection accuracy.

We observe a similar trend in Example~\ref{eg6}.2, as illustrated in Figure \ref{fig_sparse}. Our method outperforms Wang and Samworth's approach when \( s = 0.3 \) and maintains a relatively high power for denser signals. In contrast, the power of Wang and Samworth's method does not increase with the level of sparsity. This underscores an advantage of our distribution-based approach: it can detect changes beyond the first two moments and exhibits a monotonically increasing power as the sparsity level rises.

\begin{figure}[!ht]
	\caption{Comparison of change-point detection performance between our method and \cite{WS} across varying sparsity levels. Panel A shows performance for sparse mean shifts with constant signal strength $\delta=0.4$, while Panel B shows performance for sparse distributional changes in higher-order moments (using $\chi^2_2$ distribution with 2 degrees of freedom, centered and scaled).}\label{fig_sparse}
	\centering
	\includegraphics[width=1\textwidth]{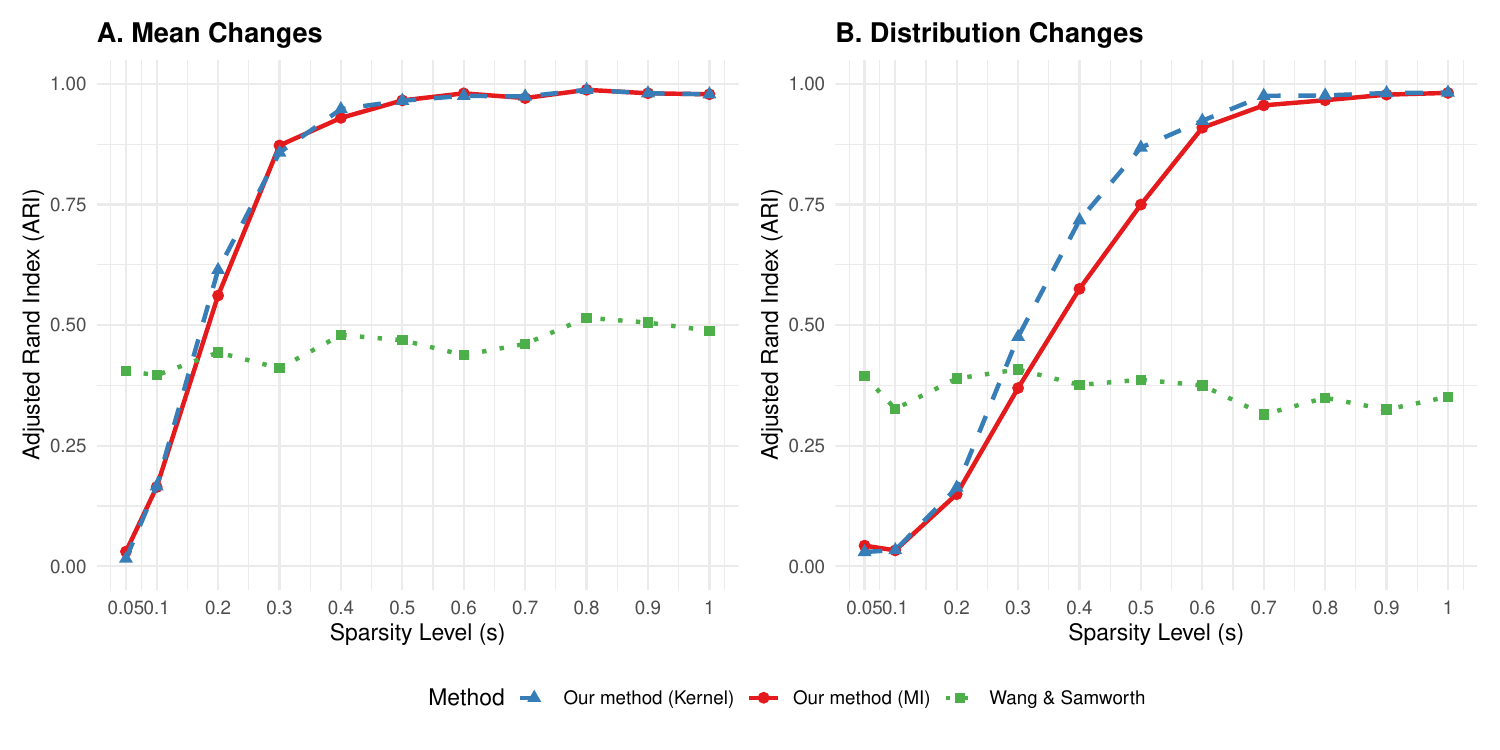}
\end{figure}

Our next example illustrates the practical trade-off introduced by the sketching surrogate in Section~\ref{sec:surrogates}. 

\begin{example}
\label{eg:surrogate_tradeoff}
{\rm
We illustrate the trade-off induced by Surrogate~A (coordinate sketching) in an ultra-high dimensional regime.
We generate an independent sequence $\{X_t\}_{t=1}^n$ with $X_t\sim N(0,\mathbf{I}_p)$, $n=200$ and $p=10{,}000$, with a single change-point at $\nu=100$.
We apply KDist-MI using asymptotic calibration, and
evaluate the localization accuracy $\mathbb{P}(|\widehat{\nu}-\nu|\le \delta)$ with tolerance $\delta=5$ over 100 replications.
We consider a sketching grid with subset sizes $s\in\{20,50,100,200,500,1000\}$ and repetitions $R\in\{5,10,20\}$. We consider two signal structures:

\smallskip
\noindent\emph{Sparse change}: Only $20$ coordinates change: for $j=1,\dots,20$ we add a mean shift of size $1.5$ after time $\nu$, while the remaining coordinates are unchanged.
We use max-aggregation across sketches, i.e., we select the sketch with the largest scan statistic and report its estimated location.
The exact (non-sketched) method attains accuracy $1.00$ with average runtime $0.78$s.
Sketching yields substantial speed-ups but may lose accuracy when most subsets miss the signal:
for example, $(s,R)=(50,20)$ achieves accuracy $0.81$ with a $3.9\times$ speed-up (0.20s),
while $(s,R)=(100,20)$ improves accuracy to $0.86$ with a $2.9\times$ speed-up (0.27s).
Very aggressive sketching such as $(s,R)=(20,5)$ is fast ($17.6\times$) but much less accurate (0.31).

\smallskip
\noindent \emph{Diffuse change}: All $p$ coordinates shift by a small amount $0.12$ after time $\nu$.
We use mean-aggregation of the $R$ location estimates (rounded to an integer), which stabilizes the estimator when every subset carries signal.
In this regime, moderate sketches match the exact accuracy with clear computational gains:
for example, $(s,R)=(200,5)$ attains accuracy $0.99$ with an $8.0\times$ speed-up (0.10s),
and $(s,R)=(500,5)$ achieves accuracy $1.00$ with a $3.9\times$ speed-up (0.20s).
Overall, sketching is markedly more effective for diffuse changes than for sparse changes, and the aggregation rule
(max vs.\ mean) is critical for approaching the best speed--accuracy frontier.
}
\end{example}

\begin{figure}[!ht]
	\centering
	\includegraphics[width=0.95\textwidth]{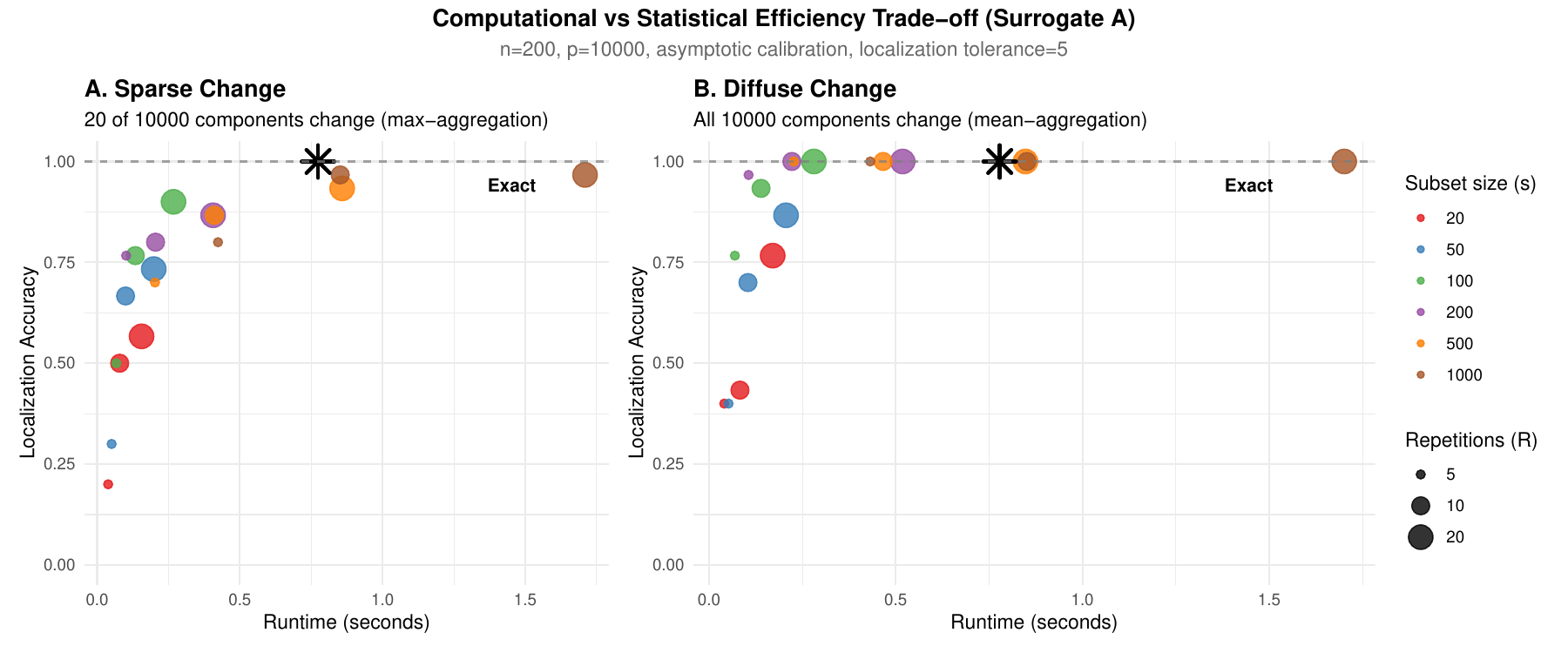}
	\caption{Trade-off between computational efficiency (runtime) and statistical efficiency (localization accuracy $\mathbb{P}(|\widehat{\nu}-\nu|\le 5)$) for Surrogate~A (coordinate sketching) with $n=200$ and $p=10{,}000$ over 100 replications.
	Left: sparse change with max-aggregation across sketches.
	Right: diffuse change with mean-aggregation across sketches.
	Each point corresponds to a sketch configuration $(s,R)$, while the exact method (no sketching, denoted by the black asterisk) serves as the baseline.}
	\label{fig:surrogate_tradeoff}
\end{figure}

Finally, we consider the following example to demonstrate the advantage of the 
componentwise monotone-invariant procedure.

\begin{example}\label{eg:heavy_tails_outliers}
{\rm 
We consider a three-segment mean-change model with two change-points at
$\nu_1 = n/3$ and $\nu_2 = 2n/3$ (so $\nu_1=50$ and $\nu_2=100$ when $n=150$).
The dimension is $p=100$, and only the first $p/2$ components are affected by the change.
Specifically, for $j=1,\dots,p/2$ we add a mean shift of magnitude $\delta=0.4$ in the middle
segment $(\nu_1,\nu_2]$, while the remaining components $j>p/2$ remain unchanged.
We run $50$ Monte Carlo replications per setting and evaluate segmentation accuracy using the ARI. We compare:
(i) the componentwise monotone-invariant (rank-based) version (KDist-MI), obtained by transforming
each coordinate to mid-rank pseudo-observations $U_{t,j}=(\mathrm{rank}(X_{t,j})-0.5)/n$, \rev{where tied values receive their average rank,} and then
applying \texttt{kcpd\_sbs} with $\gamma(z,z')=\|z-z'\|_1^{1/2}$, and
(ii) the corresponding non-rank version (KDist) that applies the same procedure directly to the raw data.

\smallskip
\noindent\emph{Setting A (heavy tails).}
We generate independent coordinates from a Student-$t$ distribution with degrees of freedom
$\mathrm{df}\in\{1,1.5,2,3,5,10,30,\infty\}$ and impose the above mean shift on the first $p/2$
components in the middle segment.

\smallskip
\noindent\emph{Setting B ($\varepsilon$-contamination).}
We generate clean data with i.i.d.\ $N(0,1)$ entries and impose the same mean shift in the middle
segment for $j\le p/2$. We then contaminate an $\varepsilon$ fraction of the $np$ entries by
randomly selecting $\varepsilon np$ cells and replacing them by outliers of magnitude $\pm 10$
(with random signs), for $\varepsilon\in\{0,0.02,0.05,0.08,0.10,0.15,0.20,0.25,0.30\}$.

\smallskip
As shown in Figure~\ref{fig:heavy_tails_outliers}, the KDist-MI procedure is substantially
more robust under heavy tails and outlier contamination. In Setting A, KDist-MI improves markedly over
the non-rank version for $\mathrm{df}\le 2$ (e.g., at $\mathrm{df}=1$ (Cauchy) the average ARI is
$0.456$ for KDist-MI versus $0.261$ for the non-rank method), while the two procedures become comparable
as the tails become lighter. In Setting B, KDist-MI maintains strong performance up to moderate
contamination levels (e.g., ARI $\approx 0.88$ at $10\%$ contamination and $0.889$ at $15\%$),
whereas the non-rank method degrades rapidly (e.g., ARI $0.490$ at $10\%$ and $0.235$ at $15\%$).
}
\end{example}

\begin{figure}[!ht]
	\centering
	\includegraphics[width=0.95\textwidth]{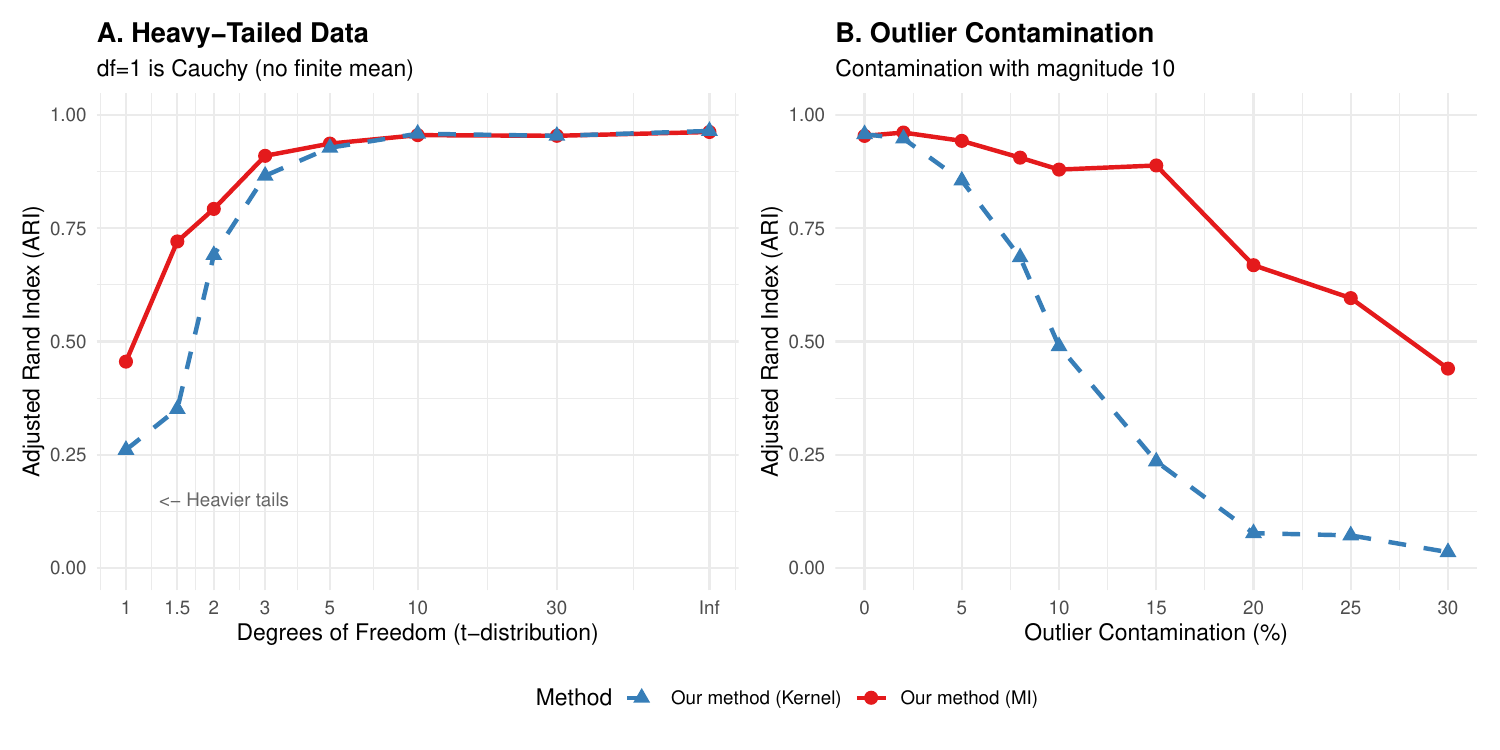}
	\caption{Robustness to heavy tails and outliers. Panel A: average ARI versus degrees of freedom
	for Student-$t$ data ($\mathrm{df}=1$ is Cauchy; $\mathrm{df}=\infty$ is Gaussian). Panel B:
	average ARI versus the fraction of contaminated entries replaced by $\pm 10$.}
	\label{fig:heavy_tails_outliers}
\end{figure}

\subsection{Real data illustration}\label{real data}
We analyze the daily closing stock prices of $p=72$ companies in the Consumer Defensive sector, listed on the NYSE and NASDAQ exchanges. The data consists of observations on the first trading day of each month from January 1, 2005, to December 31, 2010, obtained from Yahoo Finance via the R package \texttt{quantmod}. Let $X_t = (X_{t,1}, \dots, X_{t,p})$ denote the closing prices at time $t$ for $1\leq t \leq 72$. We perform the analysis on the log-returns, defined as $S^X_t = (S^X_{t,1}, \dots, S^X_{t,p})$ where $S^X_{t,i} = \log (X_{t+1,i}/X_{t,i})$, resulting in a sample size of $n=71$ with dimension $p=72$.

This period encompasses the Global Financial Crisis. According to the U.S. National Bureau of Economic Research (NBER), the recession officially began in December 2007 and ended in June 2009, lasting 19 months. Key events included the onset of the liquidity crisis in late 2007, the unprecedented \$700 billion bank bailout in October 2008, and the \$787 billion fiscal stimulus package in February 2009. While the Consumer Defensive sector is traditionally considered resilient to economic downturns compared to sectors like Finance or Real Estate, the magnitude of this crisis suggests that structural breaks should be detectable even in these stable assets.

We apply our proposed methodology using the $L_1$-based metric $\gamma(x,x')=\|x-x'\|^{1/2}_1$ and compare it with several state-of-the-art methods. The results are summarized below and visualized in Figure \ref{fig_real}:

\begin{itemize}
	\item \textbf{Proposed Method:} Both KDist and KDist-MI detect the same two change-points: \textbf{October 1, 2007}, and \textbf{February 1, 2009}. These dates align closely with the onset of the recession (preceding the official December start by a quarter, reflecting market anticipation) and the rollout of major fiscal stimulus measures in early 2009.
	
	\item \textbf{Matteson and James (2014):} Notably, the E-Divisive procedure fails to detect \emph{any} change-points during this period. This null result highlights the limitation of standard Euclidean energy distance in high-dimensional settings ($p \approx n$), as predicted by our theoretical analysis.
	
	\item \textbf{Chen and Zhang (2015):} The graph-based original scan statistic detects a single change-point on \textbf{March 1, 2009}, coinciding with the market bottom and the stimulus package.
	
	\item \textbf{Chu and Chen (2019):} The max-type edge-count test identifies two change-points: \textbf{May 1, 2008}, and \textbf{September 1, 2008} (around the collapse of Lehman Brothers).
	
	\item \textbf{Wang and Samworth (2018):} This methodology detects 18 change-points. This high number likely indicates an over-segmentation or high false positive rate in this medium-sample-size regime.
\end{itemize}

Overall, our method provides a parsimonious and interpretable segmentation that captures the beginning and the turning point of the crisis, whereas competitors either miss the signal entirely or produce fragmented segmentations.

\begin{figure}[!ht]
	\centering
	\includegraphics[width=0.7\textwidth]{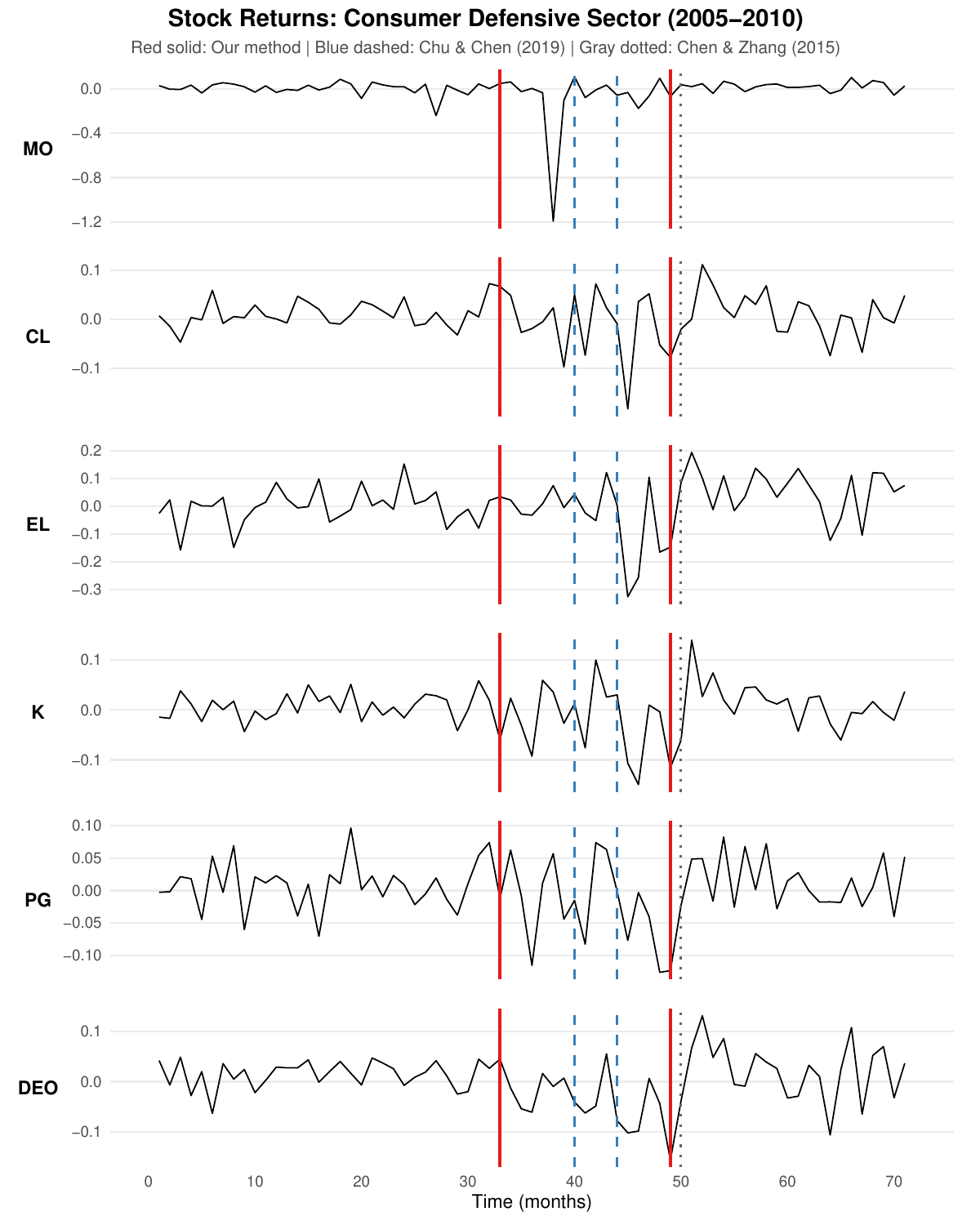}
	\caption{Time series plots of stock returns for six representative companies in the Consumer Defensive sector. The solid red lines indicate change-points detected by our proposed methodology. The dotted blue and gray lines represent change-points detected by \cite{CC} and \cite{CZ}, respectively.}
	\label{fig_real}
\end{figure}

\section{Incorporating graph information}\label{sec:graph_info}
Two critical questions remain regarding the generalized Euclidean distance defined in Definition \ref{def_GED}: first, how to perform the partitioning or grouping optimally in practice, and second, whether it is possible to completely characterize the homogeneity between two high-dimensional random vectors using these partitions. We present two examples below where external undirected or directed graph information is available to guide the partition. In both cases, the corresponding generalized energy distance completely characterizes the homogeneity between two high-dimensional random vectors. By incorporating this structural information, our change-point detection procedure can achieve higher statistical efficiency.

\subsection{Undirected graph parameterized by exponential family}
Suppose the distribution of $X=(x_1,\dots,x_p)$ belongs to an exponential family of the form
\begin{align*}
	\exp\left\{\sum_{C\in\mathcal{I}}\theta_C\phi_C(\widetilde{x}_C)-A(\theta)\right\}, 
\end{align*}
where $\mathcal{I}$ denotes a collection of subsets of $[p]=\{1,2,\dots,p\}$, $\widetilde{x}=(\widetilde{x}_1,\dots,\widetilde{x}_p)$ are sufficient statistics, and $\theta=\{\theta_C:C\in\mathcal{I}\}$ are canonical parameters. A special case is the pairwise graphical model, where
\begin{align*}
	\sum_{C\in\mathcal{I}}\theta_C\phi_C(\widetilde{x}_C)=
	\sum_{i\in [p]}\theta_{i}\phi_{i}(\widetilde{x}_{i})+\sum_{(i,j)\in E}\theta_{ij}\phi_{ij}(\widetilde{x}_{ij}),
\end{align*}
with $\widetilde{x}_{ij}=(\widetilde{x}_i,\widetilde{x}_j)$ and $E\subseteq [p]\times [p]$ denoting the set of edges. Examples include Gaussian graphical models and Ising models. We assume the minimal representation condition holds, meaning there does not exist a non-zero vector $\{\beta_C:C\in\mathcal{I}\}$ such that $\sum_{C\in\mathcal{I}}\beta_C\phi_C(\widetilde{x}_C)$ is constant.

Let $\mu_{C}=\mathbb{E}_{\theta}[\phi_{C}(x_{C})]$ be the mean parameter. By standard exponential family theory \citep[see e.g., Proposition 3.2 in][]{WJ}, the gradient map $\nabla A: \theta \mapsto \{\mu_C\}_{C\in\mathcal{I}}$ is one-to-one under the minimal representation condition. The mean parameters are determined by the set of marginal distributions $\{p_C:C\in\mathcal{I}\}$ via $\mu_{C}=\int \phi_{C}(\widetilde{x}_{C}) p_{C}(\widetilde{x}_{C})d\widetilde{x}_{C}$. Consequently, the collection of marginal distributions $\{p_C:C\in\mathcal{I}\}$ completely determines the full joint distribution of $X$. Thus, for two such random vectors $X$ and $Y$:
$$X \overset{d}{=} Y \quad \text{if and only if} \quad x_C\overset{d}{=} y_C\text{ for all } C\in\mathcal{I}.$$ 
To leverage this, we construct the graph-guided generalized Euclidean distance as:
\[
\gamma(z,z')=\sqrt{\sum_{C\in\mathcal{I}}\rho_C(z_C,z_C')}.    
\]
We illustrate the utility of this strategy with a toy example.

\begin{example}[Fully visible Boltzmann machine]\label{eg:fvbm}
	{\rm 
		Let $X=(X_1,\dots,X_p)$ be a $p$-variate binary random vector with $X_i \in \{-1,1\}$. Suppose the probability mass function is given by 
		$$f(\widetilde{x}\,;\, b,M) = \frac{1}{Z(b,M)} \exp \left(\frac{1}{2} \widetilde{x}^\top M \widetilde{x} + b^\top\widetilde{x} \right),$$ 
        where $M$ is a symmetric $p\times p$ matrix with zero diagonal entries, $b \in \mathbb{R}^p$, and $Z(b,M)$ is the partition function. This is known as a fully visible Boltzmann machine (FVBM; \cite{nguyen}).
        
        We generate a sequence where for $1\leq t \leq \lfloor n/2 \rfloor$, $X_t$ follows an FVBM with $b=0.1\times {\bf 1}_p$, $M(a,b) = 0.1$ for $|a-b|=1$ and $0$ otherwise. For $\lfloor n/2 \rfloor +1 \leq t \leq n$, $X_t$ follows an FVBM with $b=0.5\times {\bf 1}_p$, $M(a,b) = 0.3$ for $|a-b|=1$ and $0$ otherwise. 
		
	    The clique set is $\mathcal{I}=\{ \{1,2\}, \{2,3\}, \dots, \{p-1,p\} \}$. We consider $n=50$ and $p=25$, using the R package \texttt{BoltzMM} for data generation. We implement our test using \rev{$B_{\mathrm{perm}}=199$} permutations at $\alpha=0.05$. The metric $\gamma$ is defined via the partition induced by $\mathcal{I}$. Table \ref{table:ug} reports the average Adjusted Rand Index (ARI) over $100$ simulations. Despite $p<n$, our graph-aware test significantly outperforms the standard energy distance test \citep{MJ} and other competitors.

    	\begin{table}[h]
    	\centering
    	\caption{Comparison of average ARI values over 100 simulations (FVBM).}
    	\label{table:ug}
    	\begin{tabular}{ccccc}
    		\toprule
    		\textbf{KDist} & MJ & CC & CZ & WS \\
    		\hline
    		0.974 & 0.000 & 0.311 & 0.712 & 0.931  \\
    		\bottomrule
    	\end{tabular}
    \end{table}
	}
\end{example}

\subsection{Directed graph/Bayesian networks}
Consider a Bayesian network, where the joint distribution factorizes over a directed acyclic graph (DAG) as $P(X)=\prod_{i=1}^p P(x_i|x_{\pi(i)})$, with $\pi(i)$ denoting the parent set of node $i$. Two distributions $X$ and $Y$ obeying the same DAG structure are identical if and only if their conditional distributions are identical, which implies:
$$X \overset{d}{=} Y \quad \text{if and only if} \quad x_{i\,\cup\, \pi(i)}\overset{d}{=}y_{i\,\cup\, \pi(i)}\text{ for all } i=1,\dots,p.$$ 
Accordingly, we define the DAG-guided distance:
\begin{align*}
	\gamma(z,z')=\sqrt{\sum_{i=1}^p \rho_i(z_{i\,\cup\, \pi(i)},z_{i\,\cup\, \pi(i)}')}.    
\end{align*}

\begin{example}[Directed Chain]
	{\rm 
	We consider an autoregressive model $X_{i} = \phi X_{i-1} + \epsilon_{i}$ for $2 \leq i \leq p$, with $X_{1}=\epsilon_{1}$. This corresponds to a chain graph where $\pi(i) = \{i-1\}$. We set $n=100,\, p=100$ and $\phi=0.5$. The error terms switch distributions at $t = \lfloor n/2 \rfloor$: pre-change $\epsilon_{t} \sim N(\mathbf{1}_p, \mathbf{I}_p)$, and post-change $\epsilon_{t,i} \sim \text{Exponential}(1)$.
	
    The partition used is $(\{X_1\}, \{X_1,X_2\}, \dots, \{X_{p-1}, X_p\})$. Table \ref{table:dg} shows that our method yields superior localization accuracy.

    \begin{table}[h]
    	\centering
    	\caption{Comparison of average ARI values over 100 simulations (Directed Chain).}
    	\label{table:dg}
    	\begin{tabular}{ccccc}
    		\toprule
    		\textbf{KDist} & MJ & CC & CZ & WS \\
    		\hline
    		0.949 & 0.050 & 0.010 & 0.070 & 0.376   \\
    		\bottomrule
    	\end{tabular}
    \end{table}		
	}
\end{example}

\section{Future directions}\label{sec:conclusion}
We have developed a non-parametric framework for change-point detection in high-dimensional data that is sensitive to changes beyond the first two moments. By leveraging a generalized energy distance and the Seeded NOT, we established consistency for estimating multiple change-points. We further introduced a componentwise monotone-invariant \rev{heuristic} extension to address robustness and invariance concerns.

Several avenues for future research remain. First, extending our methodology to weakly dependent high-dimensional time series is of significant practical interest, though it introduces theoretical challenges regarding the convergence of the empirical process. Second, while we discussed incorporating \emph{known} graph structures, in many applications the graph is unknown. Integrating structure learning (e.g., estimating the DAG or partial correlation graph) simultaneously with change-point detection would be a powerful, albeit computationally demanding, extension. Finally, exploring the optimal choice of the sketching dimension $s$ or the subsampling rate \rev{$N_{\mathrm{pair}}$} for our computational surrogates in specific regimes remains an open question for optimizing the trade-off between statistical power and computational efficiency.


\section{Supplementary information} The Supplementary Materials contain rigorous proofs of the theoretical results presented in the paper.



\bibliography{reference}

@article{baranowski2019narrowest,
  title={Narrowest-over-threshold detection of multiple change points and change-point-like features},
  author={Baranowski, Rafal and Chen, Yining and Fryzlewicz, Piotr},
  journal={Journal of the Royal Statistical Society Series B: Statistical Methodology},
  volume={81},
  number={3},
  pages={649--672},
  year={2019},
  publisher={Oxford University Press}
}

@article{arlot,
  author = {Arlot, S. and Celisse, A. and Harchaoui, Z.},
  title = {A Kernel Multiple Change-point Algorithm via Model Selection},
  journal = {Journal of Machine Learning Research},
  volume = {20},
  pages = {1--56},
  year = {2019}
}

@article{aue,
  author = {Aue, A. and Horv{\'a}th, L.},
  title = {Structural breaks in time series},
  journal = {Journal of Time Series Analysis},
  volume = {34},
  number = {1},
  pages = {1--16},
  year = {2012}
}

@article{avanesov,
  author = {Avanesov, V. and Buzun, N.},
  title = {Change-point detection in high-dimensional covariance structure},
  journal = {Electronic Journal of Statistics},
  volume = {12},
  number = {2},
  pages = {3254--3294},
  year = {2018}
}

@article{bf,
  author = {Baringhaus, L. and Franz, C.},
  title = {On a New Multivariate Two-sample Test},
  journal = {Journal of Multivariate Analysis},
  volume = {88},
  number = {1},
  pages = {190--206},
  year = {2004}
}

@article{biau,
  author = {Biau, G. and Bleakley, K. and Mason, D.M.},
  title = {Long Signal Change-point Detection},
  journal = {Electronic Journal of Statistics},
  volume = {10},
  number = {2},
  pages = {2097--2123},
  year = {2016}
}

@article{carlstein,
  author = {Carlstein, E.},
  title = {Nonparametric change-point estimation},
  journal = {The Annals of Statistics},
  volume = {16},
  number = {1},
  pages = {188--197},
  year = {1988}
}

@article{us,
  author = {Chakraborty, S. and Zhang, X.},
  title = {A New Framework for Distance and Kernel-based Metrics in High Dimensions},
  journal = {Electronic Journal of Statistics},
  volume = {15},
  number = {2},
  pages = {5455--5522},
  year = {2021}
}

@article{CZ,
  author = {Chen, H. and Zhang, N.},
  title = {Graph-based Change-point Detection},
  journal = {Annals of Statistics},
  volume = {43},
  number = {1},
  pages = {139--176},
  year = {2015}
}

@article{CF,
  author = {Cho, H. and Fryzlewicz, P.},
  title = {Multiple-change-point detection for high dimensional time series via sparsified binary segmentation},
  journal = {Journal of the Royal Statistical Society, Series B},
  volume = {77},
  number = {2},
  pages = {475--507},
  year = {2015}
}

@article{CC,
  author = {Chu, L. and Chen, H.},
  title = {Asymptotic Distribution-free Change-point Detection for Multivariate and Non-Euclidean Data},
  journal = {Annals of Statistics},
  volume = {47},
  number = {1},
  pages = {382--414},
  year = {2019}
}

@article{curtis,
  author = {Curtis, R. and Xiang, J. and Parikh, A. and Kinnaird, P. and Xing, E.P.},
  title = {Enabling Dynamic Network Analysis Through Visualization in TVNViewer},
  journal = {BMC Bioinformatics},
  volume = {13},
  number = {204},
  year = {2012}
}

@article{dette,
  author = {Dette, H. and Pan, G. and Yang, Q.},
  title = {Estimating a Change Point in a Sequence of Very High-Dimensional Covariance Matrices},
  journal = {Journal of the American Statistical Association},
  volume = {115},
  number = {530},
  pages = {944--955},
  year = {2020}
}

@article{dumbgen,
  author = {D{\"u}mbgen, L.},
  title = {The asymptotic behavior of some nonparametric change-point estimators},
  journal = {The Annals of Statistics},
  volume = {19},
  number = {3},
  pages = {1471--1495},
  year = {1991}
}

@article{enik,
  author = {Enikeeva, F. and Harchaoui, Z.},
  title = {High-dimensional change-point detection under sparse alternatives},
  journal = {Annals of Statistics},
  volume = {47},
  number = {4},
  pages = {2051--2079},
  year = {2019}
}

@inproceedings{hc,
  author = {Harchaoui, Z. and Capp{\'e}, O.},
  title = {Retrospective Change-point Estimation with Kernels},
  booktitle = {IEEE Workshop on Statistical Signal Processing},
  year = {2007}
}

@article{jan,
  author = {Jandhyala, V. and Fotopoulos, S. and MacNeill, I. and Liu, P.},
  title = {Inference for single and multiple change‐points in time series},
  journal = {Journal of Time Series Analysis},
  volume = {34},
  number = {4},
  pages = {423--446},
  year = {2013}
}

@article{jirak,
  author = {Jirak, M.},
  title = {Uniform Change Point Tests in High Dimension},
  journal = {Annals of Statistics},
  volume = {43},
  number = {6},
  pages = {2451--2483},
  year = {2015}
}

@article{kaul2023inference,
  author = {Kaul, A. and Zhang, H. and Tsampourakis, K.},
  title = {Inference on the Change Point under a High Dimensional Covariance Shift},
  journal = {Journal of Machine Learning Research},
  volume = {24},
  number = {168},
  pages = {1--68},
  year = {2023}
}

@article{k2023,
  author = {Kov{\'a}cs, S. and B{\"u}hlmann, P. and Li, H. and Munk, A.},
  title = {Seeded Binary Segmentation: A General Methodology for Fast and Optimal Changepoint Detection},
  journal = {Biometrika},
  volume = {110},
  number = {1},
  pages = {249--256},
  year = {2023}
}

@article{li2023online,
  author = {Li, L. and Li, J.},
  title = {Online Change-Point Detection in High-Dimensional Covariance Structure with Application to Dynamic Networks},
  journal = {Journal of Machine Learning Research},
  volume = {24},
  number = {51},
  pages = {1--44},
  year = {2023}
}

@article{lung,
  author = {Lung-Yut-Fong, A. and L{\'e}vy-Leduc, C. and Capp{\'e}, O.},
  title = {Homogeneity and Change-point Detection Tests for Multivariate Data Using Rank Statistics},
  journal = {Journal de la Soci{\'e}t{\'e} Francaise de Statistique},
  volume = {156},
  number = {4},
  pages = {133--162},
  year = {2015}
}

@article{lyons,
  author = {Lyons, R.},
  title = {Distance Covariance in Metric Spaces},
  journal = {Annals of Probability},
  volume = {41},
  number = {5},
  pages = {3284--3305},
  year = {2013}
}

@article{MJ,
  author = {Matteson, D.S. and James, N.A.},
  title = {A Nonparametric Approach for Multiple Change Point Analysis of Multivariate Data},
  journal = {Journal of the American Statistical Association},
  volume = {109},
  number = {505},
  pages = {334--345},
  year = {2014}
}

@phdthesis{mcculloh,
  author = {McCulloh, I.},
  title = {Detecting Changes in a Dynamic Social Network},
  school = {Institute for Software Research, School of Computer Science, Carnegie Mellon University},
  year = {2009},
  note = {CMU-ISR-09-104}
}

@article{MA,
  author = {Morey, L.C. and Agresti, A.},
  title = {The Measurement of Classification Agreement: An Adjustment to the Rand Statistic for Chance Agreement},
  journal = {Educational and Psychological Measurement},
  volume = {44},
  number = {1},
  pages = {33--37},
  year = {1984}
}

@article{nguyen,
  author = {Nguyen, H.D. and Wood, I.A.},
  title = {Asymptotic Normality of the Maximum Pseudolikelihood Estimator for Fully Visible Boltzmann Machines},
  journal = {IEEE Transactions on Neural Networks and Learning Systems},
  volume = {27},
  number = {4},
  pages = {897--902},
  year = {2016}
}

@inproceedings{park,
  author = {Park, Y. and Wang, H. and Napp{\"o}bauer, T. and Vaziri, A. and Priebe, C.E.},
  title = {Anomaly Detection on Whole-Brain Functional Imaging of Neuronal Activity using Graph Scan Statistics},
  booktitle = {ACM Conference on Knowledge Discovery and Data Mining (KDD), Workshop on Outlier Definition, Detection, and Description (ODDx3)},
  year = {2015}
}

@article{picard,
  author = {Picard, F. and Robin, S. and Lavielle, M. and Vaisse, C. and Daudin, J.-J.},
  title = {A Statistical Approach for Array CGH Data Analysis},
  journal = {BMC Bioinformatics},
  volume = {6},
  number = {27},
  year = {2005}
}

@book{pollard,
  author = {Pollard, D.},
  title = {Empirical Processes: Theory and Applications},
  series = {NSF-CBMS Regional Conference Series in Probability and Statistics},
  volume = {2},
  pages = {i--iii+v+vii--viii+1--86},
  year = {1990}
}

@article{rabiner,
  author = {Rabiner, L.R. and Sch{\"a}fer, R.W.},
  title = {Introduction to Digital Speech Processing},
  journal = {Foundations and Trends in Signal Processing},
  volume = {1},
  number = {1-2},
  pages = {1--194},
  year = {2007}
}

@article{ryan2023detecting,
  author = {Ryan, S. and Killick, R.},
  title = {Detecting changes in covariance via random matrix theory},
  journal = {Technometrics},
  volume = {65},
  number = {4},
  pages = {480--491},
  year = {2023}
}

@article{ssgf,
  author = {Sejdinovic, D. and Sriperumbudur, B. and Gretton, A. and Fukumizu, K.},
  title = {Equivalence of Distance-based and RKHS-based Statistics in Hypothesis Testing},
  journal = {Annals of Statistics},
  volume = {41},
  number = {5},
  pages = {2263--2291},
  year = {2013}
}

@article{sr2004,
  author = {Sz{\'e}kely, G.J. and Rizzo, M.L.},
  title = {Testing for Equal Distributions in High Dimension},
  journal = {InterStat},
  volume = {5},
  year = {2004}
}

@article{sr2005,
  author = {Sz{\'e}kely, G.J. and Rizzo, M.L.},
  title = {Hierarchical Clustering via Joint Between-within Distances: Extending Ward's Minimum Variance Method},
  journal = {Journal of Classification},
  volume = {22},
  number = {2},
  pages = {151--183},
  year = {2005}
}

@article{T2020,
  author = {Truong, C. and Oudre, L. and Vayatis, N.},
  title = {Selective Review of Offline Change Point Detection Methods},
  journal = {Signal Processing},
  volume = {167},
  pages = {107299},
  year = {2020}
}

@article{wangfeng2023,
  author = {Wang, G. and Feng, L.},
  title = {Computationally efficient and data-adaptive changepoint inference in high dimension},
  journal = {Journal of the Royal Statistical Society Series B: Statistical Methodology},
  volume = {85},
  number = {3},
  pages = {936--958},
  year = {2023}
}

@article{wang2021optimal,
  author = {Wang, D. and Yu, Y. and Rinaldo, A.},
  title = {Optimal covariance change point localization in high dimensions},
  journal = {Bernoulli},
  volume = {27},
  number = {1},
  pages = {554--575},
  year = {2021}
}

@article{WS,
  author = {Wang, T. and Samworth, R.J.},
  title = {High Dimensional Change Point Estimation via Sparse Projection},
  journal = {Journal of the Royal Statistical Society, Series B},
  volume = {80},
  number = {1},
  pages = {57--83},
  year = {2018}
}

@article{WVS,
  author = {Wang, R. and Zhu, C. and Volgushev, S. and Shao, X.},
  title = {Inference for Change Points in High Dimensional Data via Self-Normalization},
  journal = {Annals of Statistics},
  volume = {50},
  number = {2},
  pages = {781--806},
  year = {2022}
}

@book{WJ,
  author = {Wainwright, M.J. and Jordan, M.I.},
  title = {Graphical Models, Exponential Families, and Variational Inference},
  publisher = {Now Publishers Inc.},
  year = {2008}
}

@article{yz,
  author = {Yan, J. and Zhang, X.},
  title = {Kernel Two-sample Tests in High Dimensions: Interplay Between Moment Discrepancy and Dimension-and-Sample Orders},
  journal = {Biometrika},
  volume = {110},
  number = {2},
  pages = {411--430},
  year = {2022}
}

@article{yu2021,
  author = {Yu, M. and Chen, X.},
  title = {Finite sample change point inference and identification for high-dimensional mean vectors},
  journal = {Journal of the Royal Statistical Society Series B: Statistical Methodology},
  volume = {83},
  number = {2},
  pages = {247--270},
  year = {2021}
}

@article{zhang2022,
  author = {Zhang, Y. and Wang, R. and Shao, X.},
  title = {Adaptive inference for change points in high-dimensional data},
  journal = {Journal of the American Statistical Association},
  volume = {117},
  number = {540},
  pages = {1751--1762},
  year = {2022}
}

@article{zys2018,
  author = {Zhang, X. and Yao, S. and Shao, X.},
  title = {Conditional Mean and Quantile Dependence Testing in High Dimension},
  journal = {Annals of Statistics},
  volume = {46},
  number = {1},
  pages = {219--246},
  year = {2018}
}

@article{zs2019,
  author = {Zhu, C. and Shao, X.},
  title = {Interpoint Distance Based Two Sample Tests in High Dimension},
  journal = {Bernoulli},
  volume = {27},
  number = {2},
  pages = {1189--1211},
  year = {2021}
}

@article{zou,
  author = {Zou, C. and Yin, G. and Feng, L. and Wang, Z.},
  title = {Nonparametric Maximum Likelihood Approach to Multiple Change-point Problems},
  journal = {Annals of Statistics},
  volume = {42},
  number = {3},
  pages = {970--1002},
  year = {2014}
}

\newpage

\begin{alphasection}

\begin{center}
{\bf \Large Supplementary Materials for ``High-dimensional Change-point Detection Using Generalized Homogeneity Metrics"}
\end{center}

\vspace{0.3cm}

The Supplementary Materials are organized as follows.
Section~\ref{proof_theorem1} contains the proof of Theorem~\ref{theorem1} (null weak convergence) and auxiliary results used in the null analysis. Section~\ref{proof_alternative} \rev{provides the proof of} Theorem~\ref{alt:fix_alternative}, the consistency of the test under alternatives. Sections~\ref{proof_consistency} and \ref{sec:theory_not} establish the single and multiple change-point consistency results for the NOT procedures. \rev{Section~\ref{sec:theory_MI} provides heuristic considerations, rather than a theorem, for the componentwise monotone-invariant procedure.} Additional proofs of secondary results and further technical lemmas are collected in Section~\ref{tech appx}.

\section{Proof of Theorem \ref{theorem1}}\label{proof_theorem1}
For the ease of notation, we write $\widehat{E}_{n,k}=\widehat{E}_{\gamma}(\mathbf{X}_{1:k},\mathbf{X}_{k+1:n})$, $\widehat{\cal{D}}^2_{1:k}=\widehat{\cal{D}}^2(\mathbf{X}_{1:k})$, $\widehat{\cal{D}}^2_{k+1:n}=\widehat{\cal{D}}^2(\mathbf{X}_{k+1:n})$ and $\widehat{\cal{C}}_{1,k,n}=\widehat{\mathcal{C}}(\mathbf{X}_{1:k},\mathbf{Y}_{k+1:n}).$ From the proof of Lemma D.1 in the Supplementary Materials of \cite{us}, we can write under $H_0$, 
\begin{align}\label{eq_sup_1}
\widehat{E}_{n,k}=L_{n,k}+R_{n,k}, 
\end{align}
where
\begin{align}\label{eq_sup_2}
\begin{split}
L_{n,k} \;=&\; \frac{1}{k (n-k)}\sum_{i_1=1}^{k}\sum_{i_2=k+1}^{n} H(X_{i_1}, X_{i_2})\,-\,\frac{1}{k(k-1)}\sum_{1 \leq i_1< i_2 \leq k} H(X_{i_1},X_{i_2}) 
\\&-\,\frac{1}{(n-k)(n-k-1)}\sum_{k+1 \leq i_1< i_2 \leq n} H(X_{i_1},X_{i_1})\,,\\
R_{n,k} \;=&\; \frac{2\tau}{k (n-k)}\sum_{i_1=1}^{k}\sum_{i_2=k+1}^{n} R(X_{i_1}, X_{i_2})\,-\,\frac{\tau}{k(k-1)}\sum_{1 \leq i_1\neq i_2 \leq k} R(X_{i_1},X_{i_2}) 
\\&-\,\frac{\tau}{(n-k)(n-k-1)}\sum_{k+1 \leq i_1\neq i_2 \leq n} R(X_{i_1},X_{i_1})\,.
\end{split}
\end{align}
Following the discussions in Section D in the Supplementary Materials of \cite{us}, the variance of $L_{n,k}$ is given by
\begin{align}\label{eq_sup_0}
V_{n,k} \;:=a_{k,n-k}^2\E\,[H^2(X,X')] \,,
\end{align}
which can be estimated by 
\begin{align}\label{eq_sup_0.5}
\begin{split}
\widehat{V}_{n,k}=\frac{4\widehat{\cal{C}}_{1,k,n}}{k(n-k)} + \frac{2\widehat{\cal{D}}^2_{1:k}}{k(k-1)} + \frac{2\widehat{\cal{D}}^2_{k+1:n}}{(n-k)(n-k-1)}.
\end{split}
\end{align}
Define 
\begin{align}\label{eq_sup_0.6}
\breve{T}_{1,n,k} \;=\;\frac{\widehat{E}_{n,k}}{\sqrt{V_{n,k}}}.
\end{align}
For $1\leq l< k < m-1 \leq n-1$, define \,$\widetilde{S}_n(k,m) := \sum_{i_2=k+1}^{m}\sum_{i_1=k}^{i_2-1} H(X_{i_1}, X_{i_2})$\, and
\begin{align}\label{eq_sup_3}
\begin{split}
\widetilde{L}_n(k;l,m)
=&\frac{1}{(k-l+1) (m-k)}\sum_{i_2=k+1}^{m}\sum_{i_1=l}^{k} H(X_{i_1}, X_{i_2})
\\&-\frac{1}{(k-l+1)(k-l)}\sum_{l \leq i_1< i_2 \leq k} H(X_{i_1},X_{i_2})
\\&-\frac{1}{(m-k)(m-k-1)}\sum_{k+1 \leq i_1< i_2 \leq m} H(X_{i_1},X_{i_1}).
\end{split}
\end{align}
Let $\widetilde{S}_n(k,m)=0$ and $\widetilde{L}_n(k;l,m)=0$ for $k\leq l$ or $k\geq m-1$ or $m>n$. From (\ref{eq_sup_2}) and (\ref{eq_sup_3}), it is easy to see that $L_{n,k} = \widetilde{L}_n(k\,;1,n)$. With the definition of $\widetilde{S}_n(k,m)$ as above, we can write
\begin{align}\label{eq_sup_4}
\begin{split}
\widetilde{L}_n(k\,;l,m) \;=&\; -\,\frac{1}{(k-l+1)(k-l)}\,\widetilde{S}_n(l,k) \,-\, \frac{1}{(m-k)(m-k-1)}\,\widetilde{S}_n(k+1,m) \\
&  +\, \frac{1}{(k-l+1) (m-k)}\, \left\{ \widetilde{S}_n(l,m) - \widetilde{S}_n(l,k) - \widetilde{S}_n(k+1,m) \right\},
\end{split}
\end{align}
for $1\leq l< k < m-1 \leq n-1$.

Next, we define
\begin{align}\label{new_eq_1}
\begin{split}
 \widetilde{\Delta}_n(k\,;l,m) \;:=&\; (k-l+1)(m-k)\; \widetilde{L}_n(k\,;l,m) \\
 =&\; -\,\frac{(m-k)}{(k-l)}\,\widetilde{S}_n(l,k) \;-\; \frac{(k-l+1)}{(m-k-1)}\,\widetilde{S}_n(k+1,m)\; 
 \\&+\; \, \left\{ \widetilde{S}_n(l,m) - \widetilde{S}_n(l,k) - \widetilde{S}_n(k+1,m) \right\},
\end{split}
\end{align}
and
\begin{align}\label{eq_sup_6}
\widetilde{V}_n(k\,;l,m) \;&:=\; a^2_{k-l+1,m-k}\, V_0, 
\end{align}
for $1\leq l< k < m-1 \leq n-1$ and zero otherwise, where \,$V_0 := \E\,H^2(X,X')$\, and 
\begin{align}\label{new_eq_2}
\begin{split}
 a^2_{k-l+1,m-k} \;&=\; \frac{1}{(k-l+1)(m-k)} \,+\, \frac{1}{2(k-l+1)(k-l)}  \,+\, \frac{1}{2(m-k)(m-k-1)}\\
 &=\; \frac{2(k-l)(m-k-1) \,+\, (m-k)(m-k-1) \,+\, (k-l+1)(k-l)}{2(k-l+1)(k-l)(m-k)(m-k-1)}\,.
 \end{split}
\end{align}
From (\ref{eq_sup_0}) and (\ref{eq_sup_6}), it is easy to check that $V_{n,k} = \widetilde{V}_n(k\,;1,n)$. From (\ref{new_eq_1}), we can write 
\begin{align}\label{new_eq_3}
\begin{split}
&\frac{\widetilde{\Delta}_n(k\,;l,m)}{\sqrt{\widetilde{V}_n(k\,;l,m)}} \;=\; \frac{\widetilde{\Delta}_n(k\,;l,m)}{a_{k-l+1,m-k}\,\sqrt{V_0}}
\\ =&\; \left\{\frac{(k-l+1)(k-l)(m-k)(m-k-1)}{2(k-l)(m-k-1) \,+\, (m-k)(m-k-1) \,+\, (k-l+1)(k-l)}\right\}^{1/2}\\
&\times\, \Bigg[\,-\,\frac{(m-k)}{(k-l)}\,\frac{\sqrt{2}\,\widetilde{S}_n(l,k)}{\sqrt{V_0}} \;-\; \frac{(k-l+1)}{(m-k-1)}\,\frac{\sqrt{2}\,\widetilde{S}_n(k+1,m)}{\sqrt{V_0}}\\
 & \quad +\;  \, \left\{ \frac{\sqrt{2}\,\widetilde{S}_n(l,m)}{\sqrt{V_0}} - \frac{\sqrt{2}\,\widetilde{S}_n(l,k)}{\sqrt{V_0}} - \frac{\sqrt{2}\,\widetilde{S}_n(k+1,m)}{\sqrt{V_0}} \right\}\Bigg]\,.
\end{split}
\end{align}
Dividing both sides by $n^2$,\, we get

\begin{align}\label{new_eq_4}
\begin{split}
\frac{\widetilde{\Delta}_n(k\,;l,m)}{n^2\,\sqrt{\widetilde{V}_n(k\,;l,m)}}\;&=\;  \left\{\frac{ \frac{(k-l+1)(k-l)(m-k)(m-k-1)}{n^4} }{ \frac{2(k-l)(m-k-1)}{n^2} \,+\, \frac{(m-k)(m-k-1)}{n^2} \,+\, \frac{(k-l+1)(k-l)}{n^2} }\right\}^{1/2}\\
&\qquad \times\, \Bigg[\,-\,\frac{(m-k)}{(k-l)}\,\frac{\sqrt{2}\,\widetilde{S}_n(l,k)}{n\,\sqrt{V_0}} \;-\; \frac{(k-l+1)}{(m-k-1)}\,\frac{\sqrt{2}\,\widetilde{S}_n(k+1,m)}{n\,\sqrt{V_0}}\\
 & \qquad \qquad +\; \, \left\{ \frac{\sqrt{2}\,\widetilde{S}_n(l,m)}{n\,\sqrt{V_0}} - \frac{\sqrt{2}\,\widetilde{S}_n(l,k)}{n\,\sqrt{V_0}} - \frac{\sqrt{2}\,\widetilde{S}_n(k+1,m)}{n\,\sqrt{V_0}} \right\}\Bigg]\,.
\end{split}    
\end{align}
Denote\, $S_n(a,b) := \widetilde{S}_n(\lfloor na \rfloor +1, \lfloor nb \rfloor)$ \,for any \,$0\leq a<b\leq 1$. Further let\, $l=\lfloor na \rfloor +1, k=\lfloor nr \rfloor$, and $m=\lfloor nb \rfloor$\, for \,$0\leq a<r<b\leq 1$. 

\begin{theorem}\label{sup_theorem1}
Under Assumption \ref{ass1_new}, as $n,p \to \infty$,
\begin{align*}
&\Big\{ \frac{\sqrt{2}}{n \sqrt{V_0}}\, S_n(a,b) \Big\}_{a,b\, \in\, [0,1]} \;\rightsquigarrow \; Q \qquad \textrm{in} \;\; L^{\infty}\left( [0,1]^2 \right)\,,
\end{align*}
where $Q$ is a centered Gaussian process with the covariance function given by 
\begin{align*}
\cov\,\big( Q(a_1,b_1)\,,\, Q(a_2,b_2) \big) \;=\; \big(b_1 \land b_2 \,-\, a_1 \lor a_2\big)^2 \,\, \mathbbm{1}\big(b_1 \land b_2 > a_1 \lor a_2 \big)\,.
\end{align*}
In particular, $var\,\big(Q(a,b)\big) = (b-a)^2\, \mathbbm{1}(b>a)$.
\end{theorem}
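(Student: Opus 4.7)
The plan is to establish weak convergence of the two-parameter process $\tilde W_n(a,b):=(\sqrt{2}/(n\sqrt{V_0}))\,S_n(a,b)$ in $L^\infty([0,1]^2)$ via the standard two-step route: (i) convergence of the finite-dimensional distributions to the Gaussian process $Q$, and (ii) stochastic equicontinuity of $\{\tilde W_n\}$ on $[0,1]^2$. Step (i) together with step (ii) feeds directly into Theorem~10.2 of Pollard (1990), which the authors already plan to invoke for Theorem~\ref{theorem1}, and yields the claimed weak convergence.

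For step (i), the crucial feature of $H$ is that it is a completely degenerate kernel of order two: $\E[H(X,X')\mid X]=0$ a.s., by the double-centering in its definition. Hence $S_n(a,b)$ is, up to the pair-counting normalization, a degenerate $U$-statistic of order two over the observations indexed by the integer interval $(\lfloor na\rfloor,\lfloor nb\rfloor]$. I would invoke the CLT for degenerate $U$-statistics of order two for triangular arrays (e.g.\ Hall, 1984; de Jong, 1987), which gives asymptotic normality under exactly the three vanishing ratios imposed in Assumption~\ref{ass1} (a Lindeberg-type ratio, a ``two pairs sharing a vertex'' ratio, and a ``four-cycle'' ratio). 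For joint convergence at finitely many points $(a_1,b_1),\dots,(a_K,b_K)$, I would apply the Cram\'er--Wold device: any linear combination $\sum_k c_k S_n(a_k,b_k)$ is again a degenerate $U$-statistic with kernel $H$ weighted by $\sum_k c_k \mathbbm{1}(\lfloor na_k\rfloor<i<j\leq \lfloor nb_k\rfloor)$, so the same CLT applies. The limiting covariance is computed directly: by degeneracy, $\E[H(X_i,X_j)H(X_k,X_l)]=0$ unless $\{i,j\}=\{k,l\}$ (if the pairs share exactly one index, both factors are mean zero conditional on the shared variable; if they share none, they are independent), so only common pairs contribute. The common pairs lie in $(\lfloor n(a_1\vee a_2)\rfloor,\lfloor n(b_1\wedge b_2)\rfloor]$, producing the claimed covariance $(b_1\wedge b_2-a_1\vee a_2)^2\mathbbm{1}(b_1\wedge b_2>a_1\vee a_2)$ after the $\sqrt{2}/(n\sqrt{V_0})$ normalization.

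Step (ii) is the main technical load. My plan is to bound fourth moments of rectangle-increments of $\tilde W_n$ and combine them with a two-parameter chaining criterion of Bickel--Wichura (1971) type. For $(a,b),(a',b')\in[0,1]^2$, the increment $S_n(a,b)-S_n(a',b')$ is itself a degenerate $U$-statistic whose index set is the symmetric difference of two rectangles, decomposable into a bounded number of thin strips. Expanding $\E[(S_n(a,b)-S_n(a',b'))^4]$ and using degeneracy to kill every configuration in which some index appears only once, the surviving patterns fall into three classes: (a) pair-coincidence patterns controlled by $(\E[H^2])^2$, (b) ``two pairs sharing one vertex'' patterns controlled by the second ratio in Assumption~\ref{ass1}, and (c) ``four-cycle'' patterns controlled by the third ratio. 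Matching the combinatorial counts against the strip widths yields a bound of the form $\E[(\tilde W_n(a,b)-\tilde W_n(a',b'))^4]\lesssim (|a-a'|+|b-b'|)^2$, and in fact a Bickel--Wichura-style product bound $|a-a'|\cdot|b-b'|$ after a suitable reparametrisation of the rectangle, which is sufficient for stochastic equicontinuity via dyadic chaining in $[0,1]^2$.

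The hard part is the fourth-moment bookkeeping: one must enumerate the surviving index configurations in the expansion of $\E[(S_n(a,b)-S_n(a',b'))^4]$, match each to the correct ratio in Assumption~\ref{ass1}, and verify that the combinatorial factors do not outrun the small area of the increment region. In two parameters, ordinary one-axis Kolmogorov chaining with a $(2+\epsilon)$-moment bound is borderline insufficient, which is why a Bickel--Wichura type product criterion is the preferred tool; an alternative is to decompose $S_n(a,b)=\Phi_n(b)-\Phi_n(a)-\Theta_n(a,b)$ into one- and two-parameter pieces and chain each piece separately. Either way, all quantitative estimates ultimately reduce to the three conditions already imposed in Assumption~\ref{ass1}, so no additional hypothesis should be needed beyond those already in force for Theorem~\ref{theorem1}.
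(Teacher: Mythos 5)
Your overall architecture (Cram\'er--Wold finite-dimensional convergence plus a moment-based tightness argument, fed into Pollard's Theorem 10.2) is the same as the paper's, and your finite-dimensional step is essentially sound: the paper proves T1 by hand, writing the linear combination as a martingale-difference array in the filtration $\sigma(X_i,X_{i-1},\dots)$ and invoking Hall--Heyde, and its conditions P2--P3 reduce exactly to the three ratios in Assumption \ref{ass1}; your proposal to cite a degenerate-quadratic-form CLT instead is a legitimate shortcut, with the caveat that de Jong's (1987) conditions are the ones that match Assumption \ref{ass1} term by term, whereas Hall's (1984) condition involves $n^{-1}\E[H^4]/(\E[H^2])^2$, which is strictly stronger than the $n^{-2}$ ratio actually assumed, so you would need de Jong (or the paper's martingale route) rather than Hall. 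The covariance identification via ``only coincident pairs contribute'' is the same computation as in the paper.

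The gap is in the tightness step. Your claimed increment bound $\E\big[(\widetilde W_n(a,b)-\widetilde W_n(a',b'))^4\big]\lesssim(|a-a'|+|b-b'|)^2$ does not follow from Assumption \ref{ass1}: in the fourth-moment expansion the configuration in which all four pairs coincide contributes, after normalization by $n^4V_0^2$, a term of order $\delta\,\E[H^4]/(n^2V_0^2)$ with $\delta=|a-a'|+|b-b'|$, i.e.\ $o(1)\cdot\delta$ but \emph{not} $O(\delta^2)$; Assumption \ref{ass1} controls $\E[H^4]/(n^2V_0^2)$ only to $o(1)$, so at increment scales down to the grid spacing $1/n$ this diagonal term sits exactly at the boundary exponent $1$ that a Bickel--Wichura product criterion cannot absorb. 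You would need an extra device (separate treatment of the diagonal part, a truncation, or a strengthened moment assumption) to make the chaining close. The paper avoids this entirely: it proves stochastic equicontinuity from \emph{second} moments only, by working with the snowflaked semimetric $\kappa(u,v)=\Vert u-v\Vert_1^{1/2}$ and the criterion of Li and Racine (Theorem A.8), so that the elementary degeneracy computation $\E|W_n(u)-W_n(v)|^2\lesssim\Vert u-v\Vert_1=\kappa^2(u,v)$ already supplies the required exponent $\gamma=2>1$, with no appeal to fourth moments or to Assumption \ref{ass1} in that step. If you retain your fourth-moment route you must either repair the diagonal-term bound or switch to a second-moment criterion of this type.
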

The proof of Theorem \ref{sup_theorem1} is given in Section \ref{tech appx}. Combining (\ref{new_eq_4}) with Theorem \ref{sup_theorem1}, it is not hard to see that as $n,p \to \infty$,
\begin{align}\label{eq_sup_7}
&\Bigg\{ \frac{\widetilde{\Delta}_n(\lfloor nr \rfloor\,;\lfloor na \rfloor +1,\lfloor nb \rfloor)}{n^2\,\sqrt{\widetilde{V}_n(\lfloor nr \rfloor\,;\lfloor na \rfloor +1,\lfloor nb \rfloor)}} \Bigg\}_{a,r,b \,\in\, [0,1]} \;\rightsquigarrow \; G' \qquad \textrm{in} \;\; L^{\infty}\left( [0,1]^3 \right)\,,
\end{align}
where 
\begin{align*}
G'(r\,;\, a,b)
:=& \; \sqrt{\frac{(r-a)^2\,(b-r)^2}{2\,(r-a)(b-r) \,+\, (r-a)^2\,+\, (b-r)^2}}  \times \, \Bigg[ -\,\frac{(b-r)}{(r-a)}\,Q(a,r)\, -\,\frac{(r-a)}{(b-r)}\,Q(r,b) \,
\\&+\,  Q(a,b) - Q(a,r) - Q(r,b) \Bigg]\\
=& \; \frac{(r-a)(b-r)}{(b-a)}\,\left[ -\,\frac{(b-r)}{(r-a)}\,Q(a,r)\, -\,\frac{(r-a)}{(b-r)}\,Q(r,b) \,+\,  Q(a,b) - Q(a,r) - Q(r,b)\right],
\end{align*}
for\, $0\leq a<r<b\leq 1$ and zero otherwise. Putting $a=0$ and $b=1$ yields 
\begin{align}\label{eq_sup_8}
&\Bigg\{ \frac{\widetilde{\Delta}_n(\lfloor nr \rfloor\,;1,n))}{n^2\,\sqrt{\widetilde{V}_n(\lfloor nr \rfloor\,;1,n)}} \Bigg\}_{r \,\in\, [0,1]} \;\rightsquigarrow \; G_0 \qquad \textrm{in} \;\; L^{\infty}\left( [0,1] \right)\,,
\end{align}
where 
\begin{align}\label{eq_sup_9}
\begin{split}
G_0(r)\; :=& \;\; r(1-r)\,\Big[ -\,\frac{(1-r)}{r}\,Q(0,r) \,-\, \frac{r}{1-r}\,Q(r,1)\,  \,+\, \big\{ Q(0,1) - Q(0,r) - Q(r,1) \big\} \Big]\\
=&\;\;  r(1-r)\, Q(0,1)\,-\, (1-r)\, Q(0,r) \,-\, r\, Q(r,1)\,
\end{split}
\end{align}
for $0<r< 1$ and zero otherwise. The second equality in (\ref{eq_sup_9}) follows from some straightforward calculations. \\

Exploring the connection between $\widetilde{\Delta}_n(\lfloor nr \rfloor\,;1,n)$ and $\widetilde{L}_n(\lfloor nr \rfloor\,;1,n)$ from (\ref{new_eq_1}), we get from (\ref{eq_sup_8}) and (\ref{eq_sup_9}) 

\begin{align}\label{new_eq_5}
&\Bigg\{ \frac{\lfloor nr \rfloor  (n-\lfloor nr \rfloor) }{n^2} \,\frac{\widetilde{L}_n(\lfloor nr \rfloor\,;1,n))}{\sqrt{\widetilde{V}_n(\lfloor nr \rfloor\,;1,n)}} \Bigg\}_{r \,\in\, [0,1]} \;\rightsquigarrow \; G_0 \qquad \textrm{in} \;\; L^{\infty}\left( [0,1] \right)
\end{align}
for $0<r< 1$ and zero otherwise.\\

Now for $1\leq l< k < m-1 \leq n-1$, define \,$\widetilde{R}_n(k,m) := \sum_{i_2=k+1}^{m}\sum_{i_1=k}^{i_2-1} \,\tau \, R(X_{i_1}, X_{i_2})$ and
\begin{align}\label{eq_sup_10}
\begin{split}
\widetilde{Q}_n(k\,;l,m)
:=&\; \frac{2\tau}{(k-l+1) (m-k)}\sum_{i_2=k+1}^{m}\sum_{i_1=l}^{k} R(X_{i_1}, X_{i_2})\,
\\&-\,\frac{\tau}{(k-l+1)(k-l)}\sum_{l \leq i_1\neq i_2 \leq k} R(X_{i_1},X_{i_2})\\
& -\,\frac{\tau}{(m-k)(m-k-1)}\sum_{k+1 \leq i_1\neq i_2 \leq m} R(X_{i_1},X_{i_1})\,.
\end{split}
\end{align}
Define $\widetilde{R}_n(k,m)$ and $\widetilde{Q}_n(k\,;l,m)$ to be zero otherwise. Comparing (\ref{eq_sup_2}) and (\ref{eq_sup_10}), it is easy to verify that $R_{n,k} = \widetilde{Q}_n(k\,;1,n)$.\\

With the definition of $\widetilde{R}_n(k,m)$ as above, we have 
\begin{align}\label{eq_sup_11}
\begin{split}
\widetilde{\Delta}^1_n(k\,;l,m) \;:=&\; (k-l+1)(m-k)\; \widetilde{Q}_n(k\,;l,m) \\
 =&\; 2 \, \left\{ \widetilde{R}_n(l,m) - \widetilde{R}_n(l,k) - \widetilde{R}_n(k+1,m) \right\}\; -\,2\frac{(m-k)}{(k-l)}\,\widetilde{R}_n(l,k) \;
 \\&-\; 2\frac{(k-l+1)}{(m-k-1)}\,\widetilde{R}_n(k+1,m)\,.
\end{split}
\end{align}
Letting\, $l=\lfloor na \rfloor +1, k=\lfloor nr \rfloor$, and $m=\lfloor nb \rfloor$\, for \,$0\leq a<r<b\leq 1$\, and proceeding along similar lines as before, we have 
\begin{align}\label{new_eq_6}
\begin{split}
\frac{\widetilde{\Delta}^1_n(k\,;l,m)}{n^2\,\sqrt{\widetilde{V}_n(k\,;l,m)}}\;=&\;  \left\{\frac{ \frac{(k-l+1)(k-l)(m-k)(m-k-1)}{n^4} }{ \frac{2(k-l)(m-k-1)}{n^2} \,+\, \frac{(m-k)(m-k-1)}{n^2} \,+\, \frac{(k-l+1)(k-l)}{n^2} }\right\}^{1/2}\\
&\times\, \Bigg[\,-\,\frac{(m-k)}{(k-l)}\,\frac{2\sqrt{2}\,\widetilde{R}_n(l,k)}{n\,\sqrt{V_0}} \;-\; \frac{(k-l+1)}{(m-k-1)}\,\frac{2\sqrt{2}\,\widetilde{R}_n(k+1,m)}{n\,\sqrt{V_0}}\\
 &\quad +\; \, \left\{ \frac{2\sqrt{2}\,\widetilde{R}_n(l,m)}{n\,\sqrt{V_0}} - \frac{2\sqrt{2}\,\widetilde{R}_n(l,k)}{n\,\sqrt{V_0}} - \frac{2\sqrt{2}\,\widetilde{R}_n(k+1,m)}{n\,\sqrt{V_0}} \right\}\Bigg]\,.
\end{split}    
\end{align}
Define\, $R_n(a,b) := \widetilde{R}_n(\lfloor na \rfloor +1, \lfloor nb \rfloor)$ \, and \, $G_n(a,b) := \frac{1}{n \sqrt{V_0}}\, R_n(a,b)$ \,for any \,$0\leq a<b\leq 1$.\\ 

\begin{theorem}\label{sup_theorem2}
Under Assumption \ref{ass2_new},\, as\, $n,p \to \infty$, $\dis\sup_{a,b\, \in\, [0,1]} \, \vert G_n(a,b) \vert \;=\; o_p(1)$.
\end{theorem}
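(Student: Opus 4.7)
The plan is to bypass any uniform-in-$(a,b)$ chaining by dominating the supremum by a single random variable (a sum of absolute values), and then applying Markov's inequality together with the fourth-moment control supplied by Assumption~\ref{ass2}.

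\textbf{Reduction to a single sum.} For every $(a,b)\in[0,1]^2$, $R_n(a,b)=\widetilde R_n(\lfloor na\rfloor+1,\lfloor nb\rfloor)$ is a signed sum of terms of the form $\tau R(X_{i_1},X_{i_2})$ over a subset of ordered index pairs $1\leq i_1<i_2\leq n$. The triangle inequality therefore yields
\[
\sup_{(a,b)\in[0,1]^2}|R_n(a,b)|\;\leq\;\tau\,T_n,\qquad T_n\;:=\;\sum_{1\leq i_1<i_2\leq n}\bigl|R(X_{i_1},X_{i_2})\bigr|,
\]
so the theorem reduces to showing $\tau T_n/(n\sqrt{V_0})\cop 0$, where $V_0:=\E[H^2(X,X')]$.

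\textbf{Markov plus H\"older.} By the i.i.d.\ property of $\{X_i\}$ and H\"older's inequality, $\E[T_n]=\binom{n}{2}\,\E[|R(X,X')|]\leq \binom{n}{2}(\E[R^4(X,X')])^{1/4}$. Markov's inequality then gives, for any $\epsilon>0$,
\[
\p\!\left(\sup_{(a,b)}|G_n(a,b)|>\epsilon\right)\;\leq\;\frac{\tau\,\E[T_n]}{\epsilon\,n\sqrt{V_0}}\;\lesssim\;\frac{\tau\,n^2\,(\E[R^4(X,X')])^{1/4}}{\epsilon\,n\sqrt{V_0}}.
\]
Assumption~\ref{ass2} asserts $n^4\tau^4\,\E[R^4(X,X')]/V_0^2=o(1)$, i.e.\ $(\E[R^4(X,X')])^{1/4}=o\!\bigl(\sqrt{V_0}/(n\tau)\bigr)$. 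Substituting turns the right-hand side into $\tau n^2\cdot o(\sqrt{V_0}/(n\tau))/(n\sqrt{V_0})=o(1)$, which finishes the proof.

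\textbf{Main difficulty.} The only conceptual point is that replacing $\sup_{(a,b)}|R_n|$ by $\tau T_n$ is not too wasteful: Assumption~\ref{ass2} is calibrated precisely to absorb the $n^2\tau$ prefactor produced by passing to absolute values. In particular, no chaining or maximal-inequality argument over the index set $[0,1]^2$ is required, and the discretization by $\lfloor n\cdot\rfloor$ is also immaterial since $R_n$ takes only $O(n^2)$ distinct values. The mild background check is that $V_0\asymp 1$ (as discussed in Remark~\ref{rem_ass2}), so that the bound from Assumption~\ref{ass2} is genuinely informative.
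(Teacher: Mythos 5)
Your proof is correct, and it takes a genuinely different and more elementary route than the paper. The paper proves the uniform statement through Andrews' (1992) generic uniform convergence theorem: it verifies total boundedness of $[0,1]^2$ under $\kappa(u,v)=\Vert u-v\Vert_1^{1/2}$, pointwise convergence via Chebyshev together with a second-moment bound on $R_n(a,b)$ obtained from H\"older applied to the cross terms, and asymptotic stochastic equicontinuity via an increment bound $\E|G_n(u)-G_n(v)|^2\lesssim \kappa^2(u,v)$ combined with Theorem A.8 of Li and Racine (2007). You instead observe that every $R_n(a,b)$ is a sub-sum, with identical nonnegative weight $\tau$, of the full collection of pairs, so the supremum is dominated by the single nonnegative variable $\tau\sum_{i<j}|R(X_i,X_j)|$; a first-moment Markov bound plus Lyapunov's inequality $\E|R|\le(\E R^4)^{1/4}$ then yields a bound of order $\epsilon^{-1}\bigl(n^4\tau^4\,\E R^4/V_0^2\bigr)^{1/4}$, which is $o(1)$ precisely by the displayed condition in Assumption \ref{ass2}. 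Both arguments ultimately rest on the same crude order $n^2\tau$ times a moment of $R$ (the paper's pointwise Chebyshev bound is of the same order, with the $\epsilon^{-2}$ and a $1/2$-power instead of your $\epsilon^{-1}$ and $1/4$-power), so nothing is lost by discarding signs and uniformity is obtained for free; what the paper's heavier machinery buys is chiefly uniformity of exposition with the treatment of the leading term $S_n$ in Theorem \ref{sup_theorem1}, where a genuine weak-convergence (chaining/equicontinuity) argument is unavoidable, and increment bounds that are reused in the proof of Theorem \ref{sup_theorem3}. One cosmetic remark: the summands in $\widetilde R_n$ all enter with the same positive coefficient rather than with varying signs, but your triangle-inequality domination is valid either way, and the side comment that $V_0\asymp 1$ is not actually needed since $V_0$ cancels between the normalization and the assumption.
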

The proof of Theorem \ref{sup_theorem2} is given in Section \ref{tech appx}. As a consequence of Theorem \ref{sup_theorem2}, we have as\, $n,p \to \infty$,
\begin{align}\label{eq_sup_12}
\dis\sup_{a,r,b\, \in\, [0,1]} \, \left\vert \frac{\widetilde{\Delta}^1_n(\lfloor nr \rfloor\,;\lfloor na \rfloor + 1,\lfloor nb \rfloor)}{n^2\,\sqrt{\widetilde{V}_n(\lfloor nr \rfloor\,;\lfloor na \rfloor + 1,\lfloor nb \rfloor)}} \right\vert \;=\;  o_p(1) \qquad  \textrm{as} \;\;\;n,p \to \infty\,.
\end{align}
As a special case, putting $a=0$ and $b=1$, we get from (\ref{eq_sup_12})
\begin{align}\label{eq_sup_13}
\dis\sup_{r\, \in\, [0,1]} \, \left\vert \frac{\widetilde{\Delta}^1_n(\lfloor nr \rfloor\,; 1,n)}{n^2\,\sqrt{\widetilde{V}_n(\lfloor nr \rfloor\,; 1,n)}}\right\vert \;=\;  o_p(1) \qquad  \textrm{as} \;\;\;n,p \to \infty\,.
\end{align}
Exploring the connection between $\widetilde{\Delta}^1_n(\lfloor nr \rfloor\,; 1,n)$ and $\widetilde{Q}^1_n(\lfloor nr \rfloor\,; 1,n)$ from (\ref{eq_sup_11}), and following similar arguments as before, we can write
\begin{align}\label{new_eq_7}
& \dis\sup_{r\, \in\, [0,1]} \, \left\vert \frac{\lfloor nr \rfloor  (n-\lfloor nr \rfloor) }{n^2} \,\frac{\widetilde{Q}_n(\lfloor nr \rfloor\,;1,n))}{\sqrt{\widetilde{V}_n(\lfloor nr \rfloor\,;1,n)}} \right\vert \;=\;  o_p(1) \qquad  \textrm{as} \;\;\;n,p \to \infty\,.
\end{align}
Finally, define
\begin{align*}
\begin{split}
\widehat{V}_{n}(k\,;l,m)\;&:=\;\frac{4\widehat{\cal{C}}_{l,k,m}}{(k-l+1)(m-k)}\, +\, \frac{2\widehat{\cal{D}}^2_{l:k}}{(k-l+1)(k-l)} \,+\, \frac{2\widehat{\cal{D}}^2_{k+1:m}}{(m-k)(m-k-1)},
\end{split}
\end{align*}
and
\begin{align}\label{eq_sup_6.1}
\begin{split}
\widehat{V}^{\Delta}_n(k\,;l,m) \;:=&\; (k-l+1)(k-l)(m-k)(m-k-1)\,\, \widehat{V}_{n}(k\,;l,m) \\
 =&\; 2(m-k-1)(k-l) \,\, 4\widehat{\cal{C}}_{l,k,m} \; 
 -\;2(m-k)(m-k-1)\,\,2\widehat{\cal{D}}^2_{l:k}  
 \\&-\; 2(k-l+1)(k-l)\,\,2\widehat{\cal{D}}^2_{k+1:m}\\
 =&\; \widehat{V}^{\Delta 1}_n(k\,;l,m) \,+\, \widehat{V}^{\Delta 2}_n(k\,;l,m)  \,+\, \widehat{V}^{\Delta 3}_n(k\,;l,m),
\end{split}
\end{align}
for $1\leq l< k < m-1 \leq n-1$  and zero otherwise. From (\ref{eq_sup_0.5}) and (\ref{eq_sup_6.1}), it is not hard to see that $\widehat{V}_{n,k} = \widehat{V}_{n}(k\,;1,n)$. And from (\ref{eq_sup_6}) and (\ref{new_eq_2}), we can write
\begin{align*}
\begin{split}
\widetilde{V}_n(k\,;l,m) \;&=\; \frac{1}{(k-l+1)(m-k)}\,V_0 \,+\, \frac{1}{2(k-l+1)(k-l)}\,V_0  \,+\, \frac{1}{2(m-k)(m-k-1)}\,V_0,
\end{split}
\end{align*}
and
\begin{align}\label{new_eq_8}
\begin{split}
\widetilde{V}^{\Delta}_n(k\,;l,m) \;&:=\; (k-l+1)(k-l)(m-k)(m-k-1)\,\, \widetilde{V}_{n}(k\,;l,m)\\ &=\; (m-k-1)(k-l)\,V_0 \,+\, \frac{1}{2}\,(m-k)(m-k-1)\,V_0  \,+\, \frac{1}{2}\,(k-l)(k-l+1)\,V_0 \\
&=\; \widetilde{V}^{\Delta 1}_n(k\,;l,m) \,+\, \widetilde{V}^{\Delta 2}_n(k\,;l,m) \,+\, \widetilde{V}^{\Delta 3}_n(k\,;l,m)\,.
\end{split}
\end{align}

\begin{theorem}\label{sup_theorem3}
Under Assumptions \ref{ass1_new} and \ref{ass2_new},\, as $n,p \to \infty$, $$ \dis\sup_{a,r,b\, \in\, [0,1]} \, \left\vert \frac{\widehat{V}^{\Delta}_n(\lfloor nr \rfloor\,;\lfloor na \rfloor + 1,\lfloor nb \rfloor)}{\widetilde{V}^{\Delta}_n(\lfloor nr \rfloor\,;\lfloor na \rfloor + 1,\lfloor nb \rfloor)} \,-\, 1  \right\vert \;=\;  o_p(1) \,.$$
\end{theorem}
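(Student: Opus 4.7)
My plan is to decompose $\widehat V_n$ and $\widetilde V_n$ into their three additive pieces and handle each ratio separately. Since $\sum_{j=1}^3 \widetilde V_{nj}/\widetilde V_n = 1$ with each weight in $[0,1]$, we have
\begin{align*}
\left|\frac{\widehat V_n(k;l,m)}{\widetilde V_n(k;l,m)}-1\right|\;\leq\;\max_{j=1,2,3}\left|\frac{\widehat V_{nj}(k;l,m)}{\widetilde V_{nj}(k;l,m)}-1\right|,
\end{align*}
so it is enough to control each component uniformly. For each $j$, I would apply Proposition \ref{Prop 4.1 in CZ} to expand the relevant object among $\widehat{\mathcal{C}}_{l,k,m}$, $\widehat{\mathcal{D}}^2_{l:k}$ and $\widehat{\mathcal{D}}^2_{k+1:m}$ as a leading degenerate quadratic U-statistic in $H$ with expectation $V_0/4$, plus a remainder driven by $R$. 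Under Assumption \ref{ass2}, the $R$-driven remainder is uniformly negligible relative to $\widetilde V_{nj}$, paralleling the numerator expansion $\widehat E_{n,k}=L_{n,k}+R_{n,k}$ used in \eqref{eq_sup_1}--\eqref{eq_sup_2}; this reduces the problem to the pure $H$-piece.

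Next, for the leading U-statistic part, I would compute its variance relative to $\widetilde V_{nj}^2$ and check that the three moment quantities appearing in Assumption \ref{ass1} force this relative variance to be $o(1)$ as soon as the effective sub-sample sizes $k-l+1$ and $m-k$ grow with $n$. Pointwise, this is essentially the content of Lemma D.1 in the supplement of Chakraborty and Zhang (2019), applied to an arbitrary sub-sample $\mathbf X_{l:k}$ or the pair $(\mathbf X_{l:k},\mathbf X_{(k+1):m})$ rather than the full sample. To upgrade to the uniform statement, I would use that $(a,r,b)\mapsto(\lfloor na\rfloor+1,\lfloor nr\rfloor,\lfloor nb\rfloor)$ takes at most $O(n^3)$ distinct values, so $\widehat V_{nj}/\widetilde V_{nj}-1$ is piecewise constant on a grid of cardinality $O(n^3)$. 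A sufficiently high even-moment bound (say of order $2q$ with $2q>6$) on this random quantity, obtained either by iterating the variance calculation or by hypercontractivity for degenerate U-statistics of order two, combined with Markov's inequality and a union bound over the grid, then yields uniform convergence to zero in probability.

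The main obstacle will be controlling the ratio uniformly near the boundary where $r$ is close to either $a$ or $b$, since then the effective sub-sample size on one side becomes small and the pointwise concentration rate from Assumption \ref{ass1} deteriorates. A clean way to handle this is to split $[0,1]^3$ into a bulk region $\{(a,r,b):\min(r-a,b-r)\geq \delta_n\}$ for a slowly vanishing $\delta_n$, where sub-samples are of order $n\delta_n$ and the union-bound argument above carries through, plus a boundary region of vanishing measure. On the boundary region, at least one of $\widetilde V_{n2}$ or $\widetilde V_{n3}$ dominates $\widetilde V_n$ by construction, so the ratio $\widehat V_n/\widetilde V_n$ essentially reduces to the ratio for that dominating term, which can be controlled by a direct moment bound on the corresponding $\widehat{\mathcal{D}}^2$ using Assumption \ref{ass1}. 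Calibrating $\delta_n$ against the moment-decay rate in Assumption \ref{ass1}, together with the convention $\widetilde V_n(k;l,m)=\widehat V_n(k;l,m)=1$ outside the valid index range, closes the argument.
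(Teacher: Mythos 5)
Your opening reduction (bounding the ratio deviation by the maximum of the three component-wise deviations, since the weights $\widetilde V_{nj}/\widetilde V_n$ are nonnegative and sum to one) and the expansion of each component into an $H$-driven leading part plus an $R$-driven remainder are sound and consistent with the paper's strategy. The genuine gap is in how you propose to get uniformity. A union bound over the $O(n^3)$ grid of values of $(\lfloor na\rfloor+1,\lfloor nr\rfloor,\lfloor nb\rfloor)$ needs, at each grid point, a tail probability of order $o(n^{-3})$, hence moment bounds of order $2q>6$ or an explicit polynomial rate. Assumptions \ref{ass1} and \ref{ass2} supply neither: they control only fourth moments of $H$ and $R$, and only as $o(1)$ after the stated $n$-normalizations, with no quantitative decay rate. ``Iterating the variance calculation'' cannot manufacture eighth-moment bounds from fourth-moment hypotheses, and hypercontractivity is not available for degenerate U-statistics built from a general double-centered $H$ absent boundedness or chaos-type structure that the paper never assumes. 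The paper sidesteps this entirely: it invokes Andrews (1992) generic uniform convergence, i.e.\ pointwise convergence (imported from Lemma D.4 in the supplement of Chakraborty and Zhang (2019)) plus asymptotic stochastic equicontinuity, the latter verified via Theorem A.8 of Li and Racine (2007) by showing that the second moment of an increment of the centered process is bounded by the squared metric distance between index points; that argument uses exactly the fourth-moment quantities already assumed and requires no rate.

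Your boundary patch has the same defect in sharper form. On the region $\min(r-a,b-r)<\delta_n$, the dominating component is, say, $\widehat V_{n2}/\widetilde V_{n2}=4\widehat{\mathcal{D}}^2_{l:k}/V_0$ with subsample size $k-l+1$ possibly bounded (as small as four or five observations). The variance of this ratio involves terms of order $\E[H^4(X,X')]/\{(k-l+1)^2 V_0^2\}$ and $\E[H^2(X,X'')H^2(X',X'')]/\{(k-l+1)V_0^2\}$, and Assumption \ref{ass1} only makes these small after dividing by $n^2$ and $n$ respectively, not by the (possibly bounded) subsample size. So ``a direct moment bound on the corresponding $\widehat{\mathcal{D}}^2$ using Assumption \ref{ass1}'' gives no concentration there; indeed no pointwise concentration is available at such configurations, which is precisely why the paper controls increments of the centered process over the whole index set rather than the ratio itself near the boundary. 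As written, both the grid union bound and the boundary step fail under the stated assumptions, so the proposal does not yield the theorem.
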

The proof of Theorem \ref{sup_theorem3} is given in Section \ref{tech appx}. As a special case, putting $a=0$ and $b=1$, we get from Theorem  \ref{sup_theorem3} that as $n,p \to \infty$,
\begin{align}\label{eq_sup_13.1}
\dis\sup_{r\, \in\, [0,1]} \, \left\vert \frac{\widehat{V}^{\Delta}_n(\lfloor nr \rfloor\,; 1,n)}{\widetilde{V}^{\Delta}_n(\lfloor nr \rfloor\,; 1,n)} \,-\, 1 \right\vert \;=\;  o_p(1) \,.
\end{align}
With all the above, the proof of Theorem \ref{theorem1} can be completed as below.

\begin{proof}[Proof of Theorem \ref{theorem1}]
Combining (\ref{eq_sup_1}) and (\ref{eq_sup_0.6}) with (\ref{new_eq_5}) and (\ref{new_eq_7}) yields that
\begin{align}\label{new_eq_9}
&\Big\{ \breve{T}_{n}(\lfloor nr \rfloor)\Big\}_{r \in [0,1]} \;\rightsquigarrow \; G_0 \qquad \textrm{in} \;\; L^{\infty}\left( [0,1] \right)\,,
\end{align}
as $n,p \to \infty$. And (\ref{eq_sup_13.1}) implies
\begin{align}\label{new_eq_10}
\dis\sup_{r\, \in\, [0,1]} \, \left\vert \frac{\widehat{V}_n(\lfloor nr \rfloor\,; 1,n)}{\widetilde{V}_n(\lfloor nr \rfloor\,; 1,n)} \,-\, 1 \right\vert \;=\;  o_p(1) \,,
\end{align}
as $n,p \to \infty$. This equipped with (\ref{new_eq_9}) completes the proof of Theorem \ref{theorem1}.
\end{proof}

All along our derivations, we use the simple facts that for $0<a \leq 1$, $\lfloor na \rfloor \asymp n$ and 
\begin{align}\label{eq_sup_13.2}
\dis\lim_{n \to \infty} \frac{\lfloor na \rfloor}{n} = \dis\lim_{n \to \infty} \frac{na - \{na\}}{n} = a\,
\end{align}
as $n \to \infty$, since $0 \leq \{na\} <1$. 

\section{Proof of Theorem \ref{alt:fix_alternative}} \label{proof_alternative}
The first result can be proved straightforwardly. Note that $\E[L(X_1,X_1')] = \E[L(X_n,X_n')] = \E[L(X_1,X_n)] = 0$. Therefore, \[E_{\gamma}(X_1,X_n) = 2\E[\tau_3 + \tau_3R(X_1,X_n)] - \E[\tau_1 + \tau_1R(X_1,X_1')] - \E[\tau_2 + \tau_2R(X_n,X_n')].\] By Assumption \ref{ass2_new}, we have $|\E[\tau_1R(X,X_1')]| = o(\sqrt{V_1}/n)$, $|\E[\tau_2R(X_n,X_n')]| = o(\sqrt{V_2}/n)$ and $|\E[\tau_3R(X_1,X_n)]| = o(\sqrt{V_3}/n)$. Hence, $E_{\gamma}(X_1,X_n) = 2\tau_3 - \tau_1 - \tau_2 + o(\sqrt{V}/n)$.

For the second result, by the definition of $M_n$,
    \[M_n = \max_{4 \leq k \leq n-4}\frac{k(n-k)}{n^2}T_n(k)  \geq \frac{\nu(n - \nu)}{n^2}T_n(\nu).\]
    Therefore, it suffices to show \[\frac{\nu(n - \nu)}{n^2}T_n(\nu) \overset{P}{\rightarrow} \infty,\] as $n \wedge p \rightarrow \infty$. Based on Proposition \ref{Prop 4.1 in CZ}, we obtain a decomposition of $\widehat E_{n,k} := \widehat E_\gamma(\mathbf{X}_{1:k},\mathbf{X}_{k+1:n})$. To state the result, we define the following terms: 
		\begin{align*}
				\widetilde E_{n,k} \;=&\; \frac{2}{k (n-k)}\sum_{i_1=1}^{k}\sum_{i_2=k+1}^{n} \tau_{i_1,i_2}\,-\,\frac{2}{k(k-1)}\sum_{1 \leq i_1< i_2 \leq k} \tau_{i_1,i_2} 
		  \\&-\,\frac{2}{(n-k)(n-k-1)}\sum_{k+1 \leq i_1< i_2 \leq n} \tau_{i_1,i_2},\\
				L_{n,k} \;=&\; \frac{1}{k (n-k)}\sum_{i_1=1}^{k}\sum_{i_2=k+1}^{n} H(X_{i_1}, X_{i_2})\,
				-\,\frac{1}{k(k-1)}\sum_{1 \leq i_1< i_2 \leq k} H(X_{i_1},X_{i_2})
				\\& -\,\frac{1}{(n-k)(n-k-1)}\sum_{k+1 \leq i_1< i_2 \leq n} H(X_{i_1},X_{i_2})\,,\\
				R_{n,k} \;=&\; \frac{2}{k (n-k)}\sum_{i_1=1}^{k}\sum_{i_2=k+1}^{n}\tau_{i_1,i_2} R(X_{i_1}, X_{i_2})\,-\,\frac{1}{k(k-1)}\sum_{1 \leq i_1\neq i_2 \leq k} \tau_{i_1,i_2}R(X_{i_1},X_{i_2})\\
				&  -\,\frac{1}{(n-k)(n-k-1)}\sum_{k+1 \leq i_1\neq i_2 \leq n}\tau_{i_1,i_2} R(X_{i_1},X_{i_2})\,,
				\end{align*}
		where $\tau_{i_1,i_2}^2 = \E[\gamma^2(X_{i_1},X_{i_2})]$. Further, for the ease of notations, we let $X$ and $Y$ be two independent random vectors that are also independent of any $X_i$ for all $i$ and $X \sim  F_1$ and $Y \sim F_2$.  Let
		\begin{align*}
			  U_{n,k} =& \frac{1}{k}\sum_{i = 1}^k\frac{n-\nu}{n-k}(\tau_3\E[L(X_i,Y)|X_i] - \tau_1\E[L(X_i,X)|X_i])
			  \\&-\frac{1}{n-k}\sum_{i = k+1}^{\nu}\frac{n-\nu}{n-k-1}(\tau_3\E[L(X_i,Y)|X_i] - \tau_1\E[L(X_i,X)|X_i])\\
				&+\frac{1}{n-k}\sum_{i = \nu+1}^{n}\frac{n-\nu-1}{n-k-1}(\tau_3\E[L(X_i,X)|X_i] - \tau_2\E[L(X_i,Y)|X_i]),
		  \end{align*}
		  for $k \leq \nu$ and
		  \begin{align*}
			  U_{n,k} =& \frac{1}{k}\sum_{i = 1}^{\nu}\frac{\nu-1}{k-1}(\tau_3\E[L(X_i,Y)|X_i] - \tau_1\E[L(X_i,X)|X_i])
			  \\&-\frac{1}{k}\sum_{i = \nu+1}^k\frac{\nu}{k-1}(\tau_3\E[L(X_i,X)|X_i] - \tau_2\E[L(X_i,Y)|X_i])\\
				&+\frac{1}{n-k}\sum_{i = k+1}^n\frac{\nu}{k}(\tau_3\E[L(X_i,X)|X_i] - \tau_2\E[L(X_i,Y)|X_i]),
		  \end{align*}
		for $k >\nu$.
    
    \begin{lemma}\label{alt:decomposition}
		We have a decomposition for $\widehat E_{n,k}$ given by $$\widehat E_{n,k} = \widetilde E_{n,k} + L_{n,k} + U_{n,k} + R_{n,k},$$ where 
		$$\widetilde E_{n,k} = 
		\begin{cases}
		(2\tau_3 - \tau_1 - \tau_2)\frac{(n - \nu)(n - \nu - 1)}{(n - k)(n-k-1)},\quad &\text{ for $k \leq \nu$},\\
		(2\tau_3 - \tau_1 - \tau_2)\frac{\nu(\nu - 1)}{k(k-1)},\quad &\text{ for } k>\nu.
		\end{cases}$$
		\end{lemma}

    Therefore 
    \begin{align*}
        T_{n}(\nu) := \frac{\widetilde{E}_{n,\nu}+L_{n,\nu}+U_{n,\nu} + R_{n,\nu}}{a_{\nu,(n-\nu)}\widehat{S}(\mathbf{X}_{1:\nu},\mathbf{X}_{(\nu+1):n})}
    \end{align*}
 by Lemma \ref{alt:decomposition}, which is a direct consequence of Proposition \ref{Prop 4.1 in CZ}. Next, we analyze the orders of $L_{n,\nu}$, $U_{n,\nu}$, $R_{n,\nu}$ and $\widehat{S}(\mathbf{X}_{1:\nu}, \mathbf{X}_{(\nu+1):n})$. The results are presented in the following lemmas.

		

	\begin{lemma} \label{alt:jointnormal}
	Under Assumptions \ref{ass0_new} and \ref{ass1_new}, 
		$\left( L_{n,\nu}^{(1)},
			L_{n,\nu}^{(2)},
			L_{n,\nu}^{(3)}
		\right)^\top \overset{d}{\rightarrow} N\left(\boldsymbol{0}, \mathbf{I}_p\right)$, where
\begin{align*}
& L_{n,\nu}^{(1)} := \frac{\sqrt{2}\zeta n}{\sqrt{V_1}}\frac{1}{\nu(\nu-1)}\sum_{1 \leq i < j \leq \nu}H(X_i,X_j), \\   
&L_{n,\nu}^{(2)} := \frac{\sqrt{2}(1-\zeta)n}{\sqrt{V_2}}\frac{1}{(n - \nu)(n-\nu-1)}\sum_{\nu+1 \leq i < j \leq n}H(X_i,X_j),
\end{align*}
		and
		\[L_{n,\nu}^{(3)} := \frac{n\sqrt{\zeta(1-\zeta)}}{\sqrt{V_3}}\frac{1}{\nu(n - \nu)}\sum_{i = 1}^{\nu}\sum_{j = \nu+1}^n H(X_i,X_j).\]
			
		\end{lemma}

\begin{lemma}\label{alt:jointnormal2}
    Under Assumption \ref{ass0_new}, we have $U_{n,\nu} = O_p(\sqrt{\max(\Gamma_1,\Gamma_2)/n})$.
\end{lemma}
		
		\begin{lemma}\label{alt:denominator}
			Under Assumptions \ref{ass0_new}, \ref{ass1_new} and \ref{ass2_new}, 
   \begin{align*}
   & 4\widehat{\mathcal{D}}^2(\mathbf{X}_{1:\nu})/V_1\overset{P}{\rightarrow}1,\\
   & 4\widehat{\mathcal{D}}^2(\mathbf{X}_{\nu+1:n})/V_2\overset{P}{\rightarrow}1,
   \end{align*}
			and
			\[4\widehat{\mathcal{C}}(\mathbf{X}_{1:\nu},\mathbf{X}_{\nu+1:n})/V_3\overset{P}{\rightarrow}1.\]
			As a result, $a_{\nu,n-\nu}^2\widehat{S}^2(\mathbf{X}_{1:\nu},\mathbf{X}_{(\nu+1),n}) = O_p(\max\{V_1,V_2,V_3\}/n^2)$.
		\end{lemma}

		\begin{lemma}\label{alt:negligible}
			Under Assumptions \ref{ass0_new}, \ref{ass1_new} and \ref{ass2_new}, $R_{n,\nu} = o_p(n^{-1}\max\{\sqrt{V_1},\sqrt{V_2},\sqrt{V_3}\}).$
		\end{lemma}
	
	According to Lemmas \ref{alt:jointnormal}, \ref{alt:denominator} and \ref{alt:negligible}, 
     \begin{align*}
     & L_{n,\nu} = O_p(n^{-1}\max\{\sqrt{V_1},\sqrt{V_2},\sqrt{V_3}\}),\\& R_{n,\nu} = o_p(n^{-1}\max\{\sqrt{V_1},\sqrt{V_2},\sqrt{V_3}\}),\\& U_{n,\nu} = O_p\left(\sqrt{\frac{\max\{\Gamma_1,\Gamma_2\}}{n})}\right),   
     \end{align*}
 and 
    \[a_{\nu,n-\nu}\widehat{S}(\mathbf{X}_{1:\nu},\mathbf{X}_{(\nu+1),n}) = O_p(n^{-1}\max\{\sqrt{V_1},\sqrt{V_2},\sqrt{V_3}\}).\]
    Hence,
    \begin{align*}
        T_{n}(\nu) &:= \frac{\widetilde{E}_{n,\nu}+L_{n,\nu} + U_{n,\nu} + R_{n,\nu}}{a_{\nu,(n-\nu)}\widehat{S}(\mathbf{X}_{1:\nu},\mathbf{X}_{(\nu+1):n})}
        \\&= \left[\frac{2\tau_3-\tau_1-\tau_2}{n^{-1}\sqrt{V}} + O_p(1)\right]\left[\frac{n^{-1}\sqrt{V}}{a_{\nu,(n-\nu)}\widehat{S}(\mathbf{X}_{1:\nu},\mathbf{X}_{(\nu+1):n})}\right] \rightarrow \infty,
    \end{align*}
    if \[\frac{n(2\tau_3 - \tau_1 - \tau_2)}{\sqrt{V}}\rightarrow \infty,\]
    which completes the proof. 

\section{Proof of Theorem \ref{alt:consistency}}\label{proof_consistency}
We first state the following technical lemmas, which are essential for the proof.

\begin{lemma}\label{lem:alt_process}
    Under Assumptions \ref{ass0_new} and \ref{ass1_new}, define $S_n'(a,b)$ as \[S_n'(a,b):= \sum_{i = \floor{na}+2}^{\floor{nb}}\sum_{j = \floor{na}+1}^{i-1}\frac{H(X_{i},X_{j})}{\sqrt{\E[H(X_{i},X_{j})^2]}},\]
    for any $0\leq a < b \leq 1$. Then we have
    \[\frac{\sqrt{2}}{n}S_{n}'(a,b) \rightsquigarrow Q(a,b),\]
    in $\mathcal{L}^{\infty}([0,1]^2)$, where $Q$ was defined in Theorem \ref{theorem1}. 
\end{lemma}

\begin{lemma} \label{lem:alt_U}
    Under Assumption \ref{ass0_new}, \[\sup_{k}\frac{k(n-k)}{n^2}|U_{n,k}| = O_p\left(\sqrt{\frac{\max\{\Gamma_1, \Gamma_2\}}{n}}\right).\]
\end{lemma}

\begin{lemma}\label{lem:alt_rem}
    Under Assumptions \ref{ass0_new}, \ref{ass1_new} and \ref{ass2_new}, $\sup_{k}|R_{n,k}| = o_p(n^{-1}\max\{\sqrt{V_1},\sqrt{V_2},\sqrt{V_3}\})$.
\end{lemma}

\begin{proof}[Proof of Theorem \ref{alt:consistency}]
    Consider the case when $\widehat{\nu}^* <\nu$. The other case can be handled in a similar fashion.  
    By the definition of $\widehat{\nu}^*$ and Theorem \ref{alt:decomposition},
    \begin{align*}
     0\leq & \frac{\widehat{\nu}^*(n-\widehat{\nu}^*)}{n^2}\widehat{E}_{\gamma}(\mathbf{X}_{1:\widehat{\nu}^*},\mathbf{X}_{(\widehat{\nu}^*+1):n}) - \frac{\nu(n-\nu)}{n^2}\widehat{E}_{\gamma}(\mathbf{X}_{1:\nu},\mathbf{X}_{(\nu+1):n})\\
        =&\frac{\widehat{\nu}^*(n-\widehat{\nu}^*)}{n^2}\left\{(2\tau_3-\tau_1-\tau_2)\frac{(n -\nu)(n-\nu-1)}{(n - \widehat{\nu}^*)(n-\widehat{\nu}^*-1)} + L_{n,\widehat{\nu}^*} + U_{n,\widehat{\nu}^*} + R_{n,\widehat{\nu}^*}\right\}\\
        &-\frac{\nu(n-\nu)}{n^2}\left\{2\tau_3-\tau_1-\tau_2 + L_{n,\nu} + U_{n,\nu} + R_{n,\nu}\right\}\\
        \leq &\frac{2\tau_3 - \tau_1 - \tau_2}{n^2}\left\{\frac{\widehat \nu^*}{n - \widehat \nu^* -1 }(n - \nu)(n-\nu-1) - \nu(n - \nu)\right\} 
        \\&+ 2\sup_{k}\frac{k(n-k)}{n^2}\left|L_{n,k}\right| + 2\sup_k\frac{k(n-k)}{n^2}\left|U_{n,k}\right|+2\sup_{k}\frac{k(n-k)}{n^2}\left|R_{n,k}\right|\\
        \leq &(2\tau_3 - \tau_1 - \tau_2)(\widehat \nu^* - \nu)\frac{(n-1)(n-\nu)}{2n^2(n-\widehat \nu^*-1)}+ 2\sup_{k}\frac{k(n-k)}{n^2}\left|L_{n,k}\right|\\
        &+ 2\sup_k\frac{k(n-k)}{n^2}\left|U_{n,k}\right| +2\sup_{k}\frac{k(n-k)}{n^2}\left|R_{n,k}\right|.
    \end{align*} 
Therefore,
    \begin{align*}
        &(2\tau_3 - \tau_1 - \tau_2)(\nu - \widehat \nu^*)\frac{(n-1)(n-\nu)}{2n^2(n - \widehat \nu^* -1 )}
        \\ \leq& 2\sup_{k}\frac{k(n-k)}{n^2}\left|L_{n,k}\right|+ 2\sup_k\frac{k(n-k)}{n^2}\left|U_{n,k}\right|+2\sup_{k}\frac{k(n-k)}{n^2}\left|R_{n,k}\right|.
        \end{align*}
    By rearranging the terms, we have
        \begin{align*}
        \nu-\widehat \nu^*&\leq \frac{4n^2(n-\widehat \nu^*-1)}{(2\tau_3-\tau_1-\tau_2)(n-1)(n-\nu)}\left\{\sup_{k}\frac{k(n-k)}{n^2}|L_{n,k}|+ 2\sup_k\frac{k(n-k)}{n^2}\left|U_{n,k}\right| + \sup_{k}|R_{n,k}|\right\},
        \end{align*}
    which implies that 
        \begin{align*}
        \zeta - \widehat\zeta^* &\leq \frac{4}{(2\tau_3-\tau_1-\tau_2)(1-n^{-1})(1-\zeta)}\left\{\sup_{k}\frac{k(n-k)}{n^2}|L_{n,k}|+ 2\sup_k\frac{k(n-k)}{n^2}\left|U_{n,k}\right| + \sup_{k}|R_{n,k}|\right\}.
        \end{align*}

    It remains to show that $\sup_{k}k(n-k)n^{-2}|L_{n,k}|$, $\sup_k\frac{k(n-k)}{n^2}\left|U_{n,k}\right|$ and $\sup_{k}|R_{n,k}|$ are of the order $O_p((2\tau_3 - \tau_1 - \tau_2)a_n^{-1})$. To this end,
    assume $2 \leq k \leq \nu$ first. 
    Then $L_{n,k}$ can be expressed as 
        \begin{align*}
            L_{n,k} =& \frac{1}{k(n-k)}\sum_{i_1 = 1}^k\sum_{i_2 = k+1}^{\nu}H(X_{i_1},X_{i_2}) + \frac{1}{k(n-k)}\sum_{i_1 = 1}^k\sum_{i_2 = \nu+1}^{n}H(X_{i_1},X_{i_2}) \\
            &-\frac{1}{k(k-1)}\sum_{i_2 = 2}^k\sum_{i_1 = 1}^{i_2-1}H(X_{i_1},X_{i_2})-\frac{1}{(n-k)(n-k-1)}\sum_{\nu+1 \leq i_1 < i_2 \leq n}H(X_{i_1},X_{i_2})\\
            &-\frac{1}{(n-k)(n-k-1)}\sum_{k+1 \leq i_1 < i_2 \leq n}H(X_{i_1},X_{i_2})-\frac{1}{(n-k)(n-k-1)}\sum_{i_1 = k+1}^{\nu}\sum_{i_2 = \nu+1}^n H(X_{i_1},X_{i_2})\\
            =& \frac{\sqrt{V_1}}{k(n-k)}\left(S_n'(0,\zeta) - S_n'(0,r) - S_n'(r,\zeta)\right) + \frac{\sqrt{V_3}}{k(n-k)}\left(S_n'(0,1) + S_n'(r,\zeta) - S_n'(0,\zeta) - S_n'(\zeta,1)\right) \\
            &-\frac{\sqrt{V_1}}{k(k-1)}S_n'(0,r)-\frac{\sqrt{V_2}}{(n-k)(n-k-1)}S_n'(\zeta,1)\\
            &-\frac{\sqrt{V_1}}{(n-k)(n-k-1)}S_n'(r,\zeta)-\frac{\sqrt{V_3}}{(n-k)(n-k-1)}\left(S_n'(r,1) - S_n'(r,\zeta) - S_n'(\zeta,1)\right),
        \end{align*}
        where $r = k/n$. For the first term,
        \begin{align*}
            &\sup_{2\leq k \leq \nu}\frac{k(n-k)}{n^2}\frac{\sqrt{V_1}}{k(n-k)}|S_n'(0,\zeta) - S_n'(0,r) - S_n'(r,\zeta)| 
            \\ \lesssim&\frac{\sqrt{V_1}}{n}\sup_{2 \leq k \leq \nu}\sup_{0<a<b<1}\frac{1}{n}|S_n'(a,b)| = O_p(\sqrt{V_1}/n),
        \end{align*}
where we have used the fact that $\sup_{0<a<b<1}n^{-1}|S_n'(a,b)| = O_p(1)$ as implied by Lemma \ref{lem:alt_process} and the functional continuous mapping theorem. By similar arguments, we can obtain the order of the second term, which is given by 
\[\sup_{2\leq k \leq \nu}\frac{k(n-k)}{n^2}\frac{\sqrt{V_3}}{k(n-k)}|S_n'(0,1) + S_n'(r,\zeta) - S_n'(0,\zeta) - S_n'(\zeta,1)| \leq O_p(\sqrt{V_3}/n).\]
Similarly because $\sup_{k}k(n-k) \leq n^2$ and $\sup_{2 \leq k \leq \nu}\{(n-k)(n-k-1)\}^{-1} = \{(n - \nu)(n-\nu-1)\}^{-1} = O(n^{-2})$, the orders of the fourth, fifth and sixth terms above are all equal to $O_p((2\tau_3 - \tau_1-\tau_2)a_n^{-1})$. 
Hence, it remains to obtain the order of the third term: 
\begin{align*}
    \frac{k(n-k)}{n^2}\frac{\sqrt{V_1}}{k(k-1)}S_{n}'(r) \leq \frac{\sqrt{V_1}}{n}\frac{1}{k-1}\sum_{i = 2}^{k}\sum_{j = 1}^{i-1}\frac{H(X_{i},X_{j})}{\sqrt{V_1}} = \frac{\sqrt{V_1}}{n}\frac{1}{k-1}\sum_{i = 2}^k\xi_i,
\end{align*}
where $\xi_i = \sum_{j = 1}^{i-1}H(X_{i},X_{j})/\sqrt{V_1}$. To this end, we note that $\sum_{i = 2}^k\xi_{i}$ for $k = 2,3,\dots$ is a martingale sequence. By the H\'ajek-R\'enyi's inequality for martingale difference sequences, we have \[P\left(\sup_{2 \leq k \leq \nu}\left|\frac{1}{k-1}\sum_{i = 2}^k\xi_i\right| \geq \epsilon\right)\leq \epsilon^{-2}\sum_{j = 1}^{\nu-1}\frac{\E(\xi_j^2)}{j^2} = \epsilon^{-2}\sum_{j = 1}^{\nu - 1}j^{-1} \lesssim \log(\nu),\]  
which indicates that $$\sup_{2 \leq k \leq \nu}\left|\frac{k(n-k)}{n^2}\frac{\sqrt{V_1}}{k(k-1)}S_n'(k/n)\right| = O_p(\frac{\sqrt{V_1}\sqrt{\log(n)}}{n}).$$ Combining the above results, we prove that $\sup_{2 \leq k \leq \nu}k(n-k)n^{-2}|L_{n,k}| = O_p((2\tau_3-\tau_1-\tau_2)a_n^{-1})$.
For $U_{n,k}$, Lemma \ref{lem:alt_U} indicates that $$\sup_{k}\frac{k(n-k)}{n^2}|U_{n,k}| = O_p\left(\sqrt{\frac{\max\{\Gamma_1, \Gamma_2\}}{n}}\right).$$ For $R_{n,k}$, Lemma \ref{lem:alt_rem} implies that $\sup_{k \leq \nu}\left|R_{n,k}\right| = o_p((2\tau_3 - \tau_1 - \tau_2)a_n^{-1})$, which proves the desired results. For the case of $k > \nu$, the results can be proved in a similar fashion, which completes the proof.
\end{proof}

\section{Proof of Theorem \ref{mult_consistency_SBS}}\label{sec:theory_not}
We first state two lemmas that are essential for the proof.

\begin{lemma}\label{lem:isolating_seeded}
Let $1 \le \nu_1 < \dots < \nu_N < n$ be the true change-point locations, and define $\nu_0 := 0$ as well as $\nu_{N+1} := n$. Suppose the minimum spacing condition $\min_{0\le \ell\le N}(\nu_{\ell+1}-\nu_\ell) \ge \Delta_n$ holds. Then, for each $\ell \in \{1, \dots, N\}$, there exists a seeded interval $I_\ell \in \mathcal{I}_n$ that isolates $\nu_\ell$---meaning $I_\ell$ contains $\nu_\ell$ and no other change-points---with its length satisfying $|I_\ell| \in [\Delta_n/2, \Delta_n]$.
\end{lemma}

\begin{lemma}\label{lem:SBS}
For each $\ell \in \{1, \dots, N\}$, let $I_\ell = [s_\ell, e_\ell] \in \mathcal{I}_n$ be an isolating interval that contains $\nu_\ell$ and no other change-points, satisfying $(\nu_{\ell} - s_{\ell} + 1)(e_{\ell} - \nu_{\ell}) \asymp |I_{\ell}|^2$. Define the signal strength
\[
b_{n,\ell} := \frac{n\delta_{\ell,\ell+1}}{\sqrt{V_{n,\ell}}},
\]
and let $\lambda_{|I_\ell|}$ be a threshold depending on the interval length $|I_\ell|$ such that $\lambda_{|I_\ell|} \to \infty$ and $\lambda_{|I_\ell|} = o(b_{|I_\ell|,\ell})$. Under Assumptions \ref{ass0_new}--\ref{ass2_new}, if $b_{n,\ell} \to \infty$, then there exists a constant $C>0$ such that as $n, p \to \infty$,
\[
P\!\left(b_{n,\ell}\big|\widehat\zeta(I_\ell)-\zeta_\ell\big|\le C\right)\to 1
\quad\text{and}\quad
P\!\left(M(I_\ell)>\lambda_{|I_\ell|}\right)\to 1,
\]
where $\widehat\zeta(I_\ell) := \widehat\nu(I_\ell)/n$.
\end{lemma}

{\color{red} Suppose there are no change-points in the sequence. By equation (25) in the proof of Theorem \ref{theorem1}, $\{M([\floor{na},\floor{nb}])\}_{0 \leq a < b \leq 1}$ converges to a two-parameter stochastic process in $D([0,1]^2)$. Therefore by the continuous mapping theorem, $$\sup_{I \in \mathcal{I}_n} M(I) \leq \sup_{0 \leq a < b \leq 1}M([\floor{na},\floor{nb}]) = O_p(1).$$ Because the threshold $\lambda_I \to \infty$, $P(\bigcup_{I \in \mathcal{I}_n}\{M(I) > \lambda_I\}) \leq P(\sup_{I \in \mathcal{I}_n}M(I) > \inf_{I \in \mathcal{I}_n} \lambda_I) = o(1)$. Thus, with probability tending to 1, the algorithm terminates immediately and returns an empty set.}

Conversely, suppose there is at least one change-point. By Lemma \ref{lem:isolating_seeded}, for each $\ell \in \{1, \dots, N\}$, there exists an interval $I_\ell = [s_\ell, e_\ell] \in \mathcal{I}_n$ that isolates the change-point $\nu_\ell$ (i.e., contains no other change-points) and satisfies $|I_\ell| \asymp \Delta_n$ and $(\nu_{\ell} - s_{\ell}+1)(e_{\ell} - \nu_{\ell}) \asymp |I_{\ell}|^2$. Therefore, if the threshold satisfies $\lambda_{I_\ell} = o(b_{|I_\ell|,\ell})$ and the minimum signal strength diverges, i.e.,
\[
n \min_{1 \le \ell \le N} \frac{\delta_{\ell,\ell+1}}{\sqrt{V_{n,\ell}}} \to \infty,
\]
then Lemma \ref{lem:SBS} ensures that $P(M(I_\ell) > \lambda_{I_\ell}) \to 1$ for each $\ell = 1, \dots, N$. Furthermore, there exists a constant $C > 0$ such that $P(b_{n,\ell}|\widehat{\zeta}(I_\ell) - \zeta_{\ell}| \le C) \to 1$. \rev{Because $N$ is fixed, a union bound shows that the intersection of these $N$ detection and localization events also has probability tending to one.}

During the procedure, the algorithm selects the shortest significant interval $I^* = [s^*, e^*]$ among all seeded intervals satisfying $M(I) > \lambda_{I}$. With probability tending to 1, this minimal interval $I^*$ isolates exactly one true change-point, say $\nu_{\ell}$, and maintains the proportionality $(e^* - \nu_{\ell})(\nu_{\ell} - s^* + 1) \asymp (e^* - s^* + 1)^2$. Applying Lemma \ref{lem:SBS} to $I^*$, the estimated location $\widehat{\nu}_{\ell} := \widehat{\nu}(I^*)$ satisfies $b_{n,\ell}|\widehat{\zeta}(I^*) - \zeta_{\ell}| \le C$ with probability tending to 1, establishing that $\zeta_{\ell}$ is consistently estimated.

After detecting the first change point, the whole sample will be split into two subsamples, $\mathcal{X}_1 = \{X_1,...,X_{\widehat{\nu}_{\ell}}\}$ and $\mathcal{X}_2 = \{X_{\widehat{\nu}_{\ell+1}},...,X_n\}$. It is easy to see that only one of the two subsamples may contain $\nu_{\ell}$ as a change point (if $\nu_{\ell} \neq \widehat \nu_{\ell}$). Without loss of generality, we assume that $\nu_{\ell}$ is in $\mathcal{X}_1$. The proof of Lemma \ref{lem:SBS} indicates that with probability going to 1, $|\widehat{\nu}_{\ell} - \nu_{\ell}| \lesssim n/b_{n,\ell} \leq \sqrt{V_{|I^*|,\ell}}/\delta_{\ell,\ell+1}$. In the new segment $\mathcal{X}_1$ for an interval $I = [s,e]$ that only contains the change point $\nu_{\ell}$, if $I$ is shorter than $I^*$ from the previous step, then $M(I) < \lambda_I$ otherwise $I^*$ is not the shortest interval containing only one change point with $M(I^*) > \lambda_{I^*}$. If $I \in \mathcal{S}$ where $\mathcal{S}$ is the set of such intervals that is longer than $I^*$, by the results in the proof of Theorem \ref{alt:fix_alternative} (Lemma \ref{alt:decomposition}-\ref{alt:negligible}), 
\[M(I) \leq  {\color{red} \delta_{\ell,\ell+1}\frac{(e-\nu_{\ell})(\nu_{\ell}-s+1)}{n\sqrt{V_{|I_\ell|,\ell}}}} + M_0(I),\]
where $\sup_{I \in \mathcal{S}} M_0(I) = O(\sqrt{V_{n,l}/V_{n,l}^{(1)}})$. We also have \begin{align*}
    {\color{red}\delta_{\ell,\ell+1}\frac{(e-\nu_{\ell})(\nu_{\ell}-s+1)}{n\sqrt{V_{|I_\ell|,\ell}}}}
    \leq \delta_{\ell,\ell+1}\frac{|\widehat \nu_{\ell} - \nu_{\ell}|}{\sqrt{V_{|I_\ell|,\ell}}} 
    \leq C\frac{\sqrt{V_{|I^*|,\ell}}}{\sqrt{V_{|I|,\ell}}}\leq C\\
\end{align*}
for some positive constant $C$, as $V_{|I^*|,\ell} \leq V_{|I|,\ell}$ if $|I^*| \leq |I|$. Therefore $P(\bigcup_{I \in \mathcal{S}}\{M(I) \geq \lambda_{I}\}) = o(1)$ by the first condition of $\lambda_I$, and $\nu_{\ell}$ will not be detected again in future steps with probability tending to 1. By repeating the above arguments \rev{and using that $N$ is fixed}, we can show that all change points will be detected with probability tending to 1 and the rate of convergence is at least $b_{n,\ell}$ for $\nu_{\ell}$.

\section{Theoretical considerations for the componentwise monotone-invariant procedure}\label{sec:theory_MI}

In Section \ref{sec:mono_invariant}, we introduced a componentwise monotone-invariant version of our change-point detection procedure. In this approach, the test is applied to the pseudo-observations $\mathbf{U}_t \in [0,1]^p$, where \rev{$U_{t,j}$ is computed from the pooled empirical mid-distribution function, equivalently the average mid-rank, for the $j$-th coordinate.}

A fully rigorous theoretical treatment of this rank-based test under the HDMSS framework would require highly involved empirical process and higher-order U-statistic techniques that are beyond the scope of the current paper. Instead, we provide a high-level heuristic discussion in this section. \rev{Our goal is only to describe how the pooled mid-ranks change the underlying U-statistic order and why double-centering might attenuate part of the resulting estimation effect; no limiting theorem is asserted.}

\subsection{The oracle transform and copula space}
To build intuition, it is instructive to first consider an ``oracle'' version of the test. Suppose that under the null hypothesis, the true, continuous marginal CDFs $F_1, \dots, F_p$ were known. The oracle pseudo-observations would be given by $\mathbf{U}_t^{\circ} = (F_1(X_{t,1}), \dots, F_p(X_{t,p}))^\top$. By Sklar's theorem, the joint distribution of $\mathbf{U}_t^{\circ}$ corresponds exactly to the copula $C$ of the original random vector $X_t$.

If our test statistic were evaluated on these oracle pseudo-observations, it would remain a standard degree-2 U-statistic, as the distance between two points depends only on those two points. A key theoretical advantage of this oracle transform is that the pseudo-observations are deterministically bounded within the unit hypercube $[0,1]^p$. Consequently, the pairwise distances are uniformly bounded. The asymptotic null distribution would follow directly from Theorem \ref{theorem2}, depending solely on the underlying copula $C$ and the metric $\gamma$, rendering the oracle test strictly margin-free.

\subsection{The empirical transform and the change in U-statistic order}
In practice, the true marginal CDFs are unknown, and \rev{we use the pooled empirical mid-distribution functions $\widehat F_j^{\mathrm{mid}}$ defined in Section~\ref{sec:mono_invariant}.} Using \rev{$\widehat F_j^{\mathrm{mid}}$} instead of $F_j$ introduces a sample-dependent estimation error. From a structural standpoint, this substitution fundamentally alters the nature of the test statistic. Because the empirical \rev{rank transform is computed from} the entire sample, evaluating the pairwise distance $\gamma(\mathbf{U}_s, \mathbf{U}_t)$ inherently couples the observations. 

For example, under the $L_1$-based metric, computing the distance between empirical ranks effectively involves counting the number of observations $X_{l,j}$ that fall between $X_{s,j}$ and $X_{t,j}$. Because the core distance kernel now evaluates triplets $(X_s, X_t, X_l)$, this parameter substitution conceptually elevates the underlying estimator from a standard degree-2 U-statistic to a higher-order generalized U-statistic. 

In standard asymptotic theory, substituting a $\sqrt{n}$-consistent estimator like the empirical CDF into a U-statistic kernel introduces a leading-order perturbation that typically alters the limiting distribution. If one were to bound this perturbation uniformly using standard empirical process bounds, the accumulated error could scale with the dimension $p$, potentially diverging in the HDMSS regime where $p \to \infty$.

\subsection{Heuristic attenuation via double-centering}
It is possible, however, that the impact of such marginal perturbations is mitigated by the specific structural properties of the test statistic. The generalized energy distance operates as a contrast between intra-sample and inter-sample distances, utilizing double-centered weights whose global sum and row sums are exactly zero. Because the empirical mid-ranks are computed using the pooled full sample, the marginal transformation applies a symmetric perturbation to all observations. Heuristically, these zero-sum constraints may act to difference out the leading-order main effects of this marginal estimation error under the null hypothesis. We emphasize that formally establishing this attenuation uniformly over the entire sequence of split points $k$ in the HDMSS regime where $p \to \infty$ is highly non-trivial and remains an open theoretical question. Nevertheless, this double-centering mechanism provides a plausible conceptual rationale for why the empirical rank-based test is observed to maintain valid Type I error control in our numerical studies.

\subsection{Behavior under the alternative}
Under the single change-point alternative, the sequence is no longer identically distributed. \rev{For continuous marginals, $\widehat F_j^{\mathrm{mid}}$ has the same limit as} the pooled mixture CDF, $G_j(x) = \zeta F_j^{-}(x) + (1-\zeta) F_j^{+}(x)$, where $\zeta$ is the true change-point proportion and $F_j^{-}, F_j^{+}$ are the pre- and post-change marginal CDFs. \rev{Whether the transformed distributions retain enough separation for detection depends on the alternative and the metric.}

\rev{When the transformed pre- and post-change distributions remain separated under a sufficiently regular metric, one may heuristically expect a diverging signal-to-noise ratio. This observation is only motivation: neither the location result in Theorem~\ref{alt:consistency} nor the Seeded NOT result in Theorem~\ref{mult_consistency_SBS} is proved here for the empirical-rank statistic.} We leave the formal theoretical investigation of this higher-order rank-based scan statistic in the HDMSS regime as an interesting avenue for future research.

\section{Technical Appendix}\label{tech appx}
\begin{proof}[Proof of Proposition \ref{prop:componentwise_invariance}]
For each coordinate $j$, strict monotonicity implies that the order relations and ties among $\{Y_{t,j}\}_{t=1}^n$ coincide with those among
$\{X_{t,j}\}_{t=1}^n$. Hence $U_{t,j}(Y)=U_{t,j}(X)$ for all $t,j$, so the collections $\{\mathbf U_t(Y)\}$ and $\{\mathbf U_t(X)\}$ are identical. The conclusion follows because $M_n^{\mathrm{MI}}$ and $\widehat\nu^{\mathrm{MI}}$ are deterministic functions of $\{\mathbf U_t\}_{t=1}^n$.
\end{proof}

\begin{proof}[Proof of Lemma \ref{lemma_cusum}]
For the first part, simply note that some direct calculations yield
\begin{align}\label{cusum embed 1}
\begin{split}
S_k \;&=\; \frac{1}{\sqrt{n}} \dis \sum_{t=1}^k \left\{ \phi(X_t) - \frac{1}{n} \sum_{t=1}^n \phi(X_t) \right\} \;=\; \frac{1}{\sqrt{n}} \left\{\dis \sum_{t=1}^k \phi(X_t) \,-\, \frac{k}{n}\dis \sum_{t=1}^k \phi(X_t) \,-\, \frac{k}{n}\dis \sum_{t=k+1}^n \phi(X_t)  \right\}\\
&=\; \frac{k(n-k)}{n^{3/2}} \;\left\{\frac{1}{k}\dis \sum_{t=1}^k \phi(X_t) \,-\, \frac{1}{n-k}\dis \sum_{t=k+1}^n \phi(X_t) \right\}\,.
\end{split}
\end{align}
For the second part, note that following equation (\ref{embed eqn 2}) in the main paper, the expression of $S_k$ in equation (\ref{cusum embed 1}) and some elementary calculations, we can write
\begin{align*}
\;\frac{n^3}{k^2\,(n-k)^2}\; \Vert S_k \Vert^2 \;=&\; \frac{1}{k^2} \dis \sum_{t,t'=1}^k \langle \,\phi(X_t) \,,\,\phi(X_{t'})\, \rangle_{\cal{H}} \,+\, \frac{1}{(n-k)^2} \dis \sum_{t,t'=k+1}^n \langle \,\phi(X_t) \,,\,\phi(X_{t'})\, \rangle_{\cal{H}} 
\\&-\; \frac{2}{k(n-k)} \dis \sum_{t=1}^k \sum_{t'=k+1}^n \langle \,\phi(X_t) \,,\,\phi(X_{t'})\, \rangle_{\cal{H}}\\
=&\; \frac{2}{k(n-k)} \dis \sum_{t=1}^k \sum_{t'=k+1}^n \gamma(X_t,X_{t'}) 
- \frac{1}{k^2} \dis \sum_{t,t'=1}^k \gamma(X_t,X_{t'}) 
\\&- \frac{1}{(n-k)^2} \dis \sum_{t,t'=k+1}^n \gamma(X_t,X_{t'})\,.
\end{align*}
\end{proof}

\begin{proof}[Proof of Proposition \ref{Prop 4.1 in CZ}]
    The proof is identical to the proof of Proposition 4.1 in Chakraborty and Zhang (2021). Therefore, we omit it here.
\end{proof}

\begin{proof}[Proof of Theorem \ref{thm:perm_validity}]
\rev{Let $\mathbf{X}^{(0)} = \mathbf{X}$ denote the observed dataset. Let $\mathbf{X}^{(1)}, \dots, \mathbf{X}^{(B_{\mathrm{perm}})}$ denote $B_{\mathrm{perm}}$ independently and uniformly sampled random permutations of $\mathbf{X}$.}
Under the null hypothesis $H_0$, the observations $X_1, \dots, X_n$ are i.i.d., which implies that their joint distribution is exchangeable. That is, for any permutation $\pi$ of the indices $\{1, \dots, n\}$,
\[
(X_1, \dots, X_n) \stackrel{d}{=} (X_{\pi(1)}, \dots, X_{\pi(n)}).
\]
\begin{revblock}
Condition on the unordered set of observed values (the orbit) $\mathcal{O}=\{\{X_1,\dots,X_n\}\}$. Under $H_0$, the observed ordering is uniform on this orbit; applying independent uniform permutations gives conditionally exchangeable datasets $\mathbf X^{(0)},\dots,\mathbf X^{(B_{\mathrm{perm}})}$. Hence the statistics $T^{(j)}:=\mathcal T_n(\mathbf X^{(j)})$, $j=0,\dots,B_{\mathrm{perm}}$, are conditionally exchangeable.

Let $q_\alpha:=\lfloor\alpha(B_{\mathrm{perm}}+1)\rfloor$ and, for each label $j$, define the upper rank
\[
R_j:=\sum_{l=0}^{B_{\mathrm{perm}}}\mathbbm{1}\{T^{(l)}\ge T^{(j)}\}.
\]
Conditional on $\mathcal O$ and on the unordered multiset of statistic values, label $0$ is uniform among the $B_{\mathrm{perm}}+1$ labels. For any fixed multiset, at most $q_\alpha$ labels can satisfy $R_j\le q_\alpha$; ties can only reduce this number. Since the algorithm rejects exactly when $R_0\le q_\alpha$,
\[
P\!\left(R_0\le q_\alpha\mid\mathcal O\right)
\le \frac{q_\alpha}{B_{\mathrm{perm}}+1}
\le \alpha.
\]
Integrating over $\mathcal O$ proves the unconditional level bound without requiring distinct statistic values.
\end{revblock}
\end{proof}

\begin{proof}[Proof of Theorem \ref{sup_theorem1}]
To establish the uniform weak convergence of $\frac{\sqrt{2}}{n \sqrt{V_0}}\, S_n(a,b)$, by Theorem 10.2 in Pollard\,(1990) we need to show
\begin{enumerate}
\item[T1.] the finite-dimensional convergence, viz. $$ \Big(\frac{\sqrt{2}}{n \sqrt{V_0}}\, S_n(a_1,b_1)\,,\, \dots \,,\, \frac{\sqrt{2}}{n \sqrt{V_0}}\, S_n(a_s,b_s)\Big) \;\rightsquigarrow \; \Big( Q(a_1, b_1), \, \dots \,, Q(a_s,b_s) \Big)$$ as\, $n,p \to \infty$\, for fixed\, $0 \leq a_i < b_i \leq 1$, $1\leq i \leq s$, and
\item[T2.] asymptotic stochastic equicontinuity of $\frac{\sqrt{2}}{n \sqrt{V_0}}\, S_n(a,b)$ on $[0,1]^2$, viz. for any $x>0$, $$\lim_{\delta \downarrow 0}\, \limsup_{n,p \to \infty} \, P\left( \sup_{\Vert (a,b) - (c,d) \Vert \leq \delta} \left\vert \frac{\sqrt{2}}{n \sqrt{V_0}}\, S_n(a,b)\,-\,\frac{\sqrt{2}}{n \sqrt{V_0}}\, S_n(c,d) \right\vert \right) \;=\; 0\,.$$
\end{enumerate}
To prove T1, we will consider the case of $s=2$, and the general case can be proved in a similar fashion. By Cram{\'e}r-Wold theorem, it is equivalent to prove 
\begin{align}\label{eq_sup_14}
\alpha_1 \, \frac{\sqrt{2}}{n \sqrt{V_0}}\, S_n(a_1,b_1) \,+\, \alpha_2 \, \frac{\sqrt{2}}{n \sqrt{V_0}}\, S_n(a_2,b_2) \;\overset{d}{\longrightarrow} \; \alpha_1\,Q(a_1,b_1) \,+\, \alpha_2\,Q(a_2,b_2)
\end{align}
for any fixed $\alpha_1, \alpha_2 \in {\bb R}$, as $n,p \to \infty$. As $0 \leq a_i < b_i \leq 1$, $i=1,2$, we consider the following three cases : i) $a_1 \leq a_2 \leq b_2 \leq b_1$, ii) $a_1 \leq a_2 \leq b_1 \leq b_2$, and iii) $a_1 \leq b_1 \leq a_2 \leq b_2$. We will prove T1 and T2 for case (ii), and similar arguments can prove them for the other two cases.  

\begin{proof}[Proof of T1]
We can write 
\begin{align}\label{eq_sup_15}
\begin{split}
&\; \alpha_1 \, \frac{\sqrt{2}}{n \sqrt{V_0}}\, S_n(a_1,b_1) \,+\, \alpha_2 \, \frac{\sqrt{2}}{n \sqrt{V_0}}\, S_n(a_2,b_2)\\
= &\; \frac{\sqrt{2}}{n \sqrt{V_0}}\, \left\{ \alpha_1 \, \sum_{i=\lfloor na_1 \rfloor +2}^{\lfloor nb_1 \rfloor}\sum_{j=\lfloor na_1 \rfloor +1}^{i-1} \, H(X_i, X_j) \,+\,\alpha_2 \, \sum_{i=\lfloor na_2 \rfloor +2}^{\lfloor nb_2 \rfloor}\sum_{j=\lfloor na_2 \rfloor +1}^{i-1} \, H(X_i, X_j)  \right\} \\=& \; \sum_{i=\lfloor na_1 \rfloor +2}^{\lfloor nb_2 \rfloor} \widetilde{\xi}_{n,i} \,,
\end{split}
\end{align}
where 
\vspace{-0.2in}
\begin{align}\label{eq_sup_16}
\begin{split}
\widetilde{\xi}_{n,i} \;&:=\;\frac{\sqrt{2}}{n \sqrt{V_0}}\, \begin{cases}
 \alpha_1 \,\xi_{1,i} & \textrm{if}\;\; \lfloor na_1 \rfloor +2 \leq i \leq \lfloor na_2 \rfloor + 1,\\
\alpha_1 \,\xi_{1,i} + \alpha_2\, \xi_{2,i} &  \textrm{if}\;\; \lfloor na_2 \rfloor +2 \leq i \leq \lfloor nb_1 \rfloor,\\
\alpha_2 \,\xi_{2,i} & \textrm{if}\;\; \lfloor nb_1 \rfloor +1 \leq i \leq \lfloor nb_2 \rfloor,
\end{cases}
\end{split}
\end{align}
with $\xi_{1,i}=\sum_{j=\lfloor na_1 \rfloor +1}^{i-1} H(X_i, X_j)$ and $\xi_{2,i}=\sum_{j=\lfloor na_2 \rfloor +1}^{i-1}H(X_i, X_j).$ Define $\cal{F}_i=\sigma(X_i, X_{i-1},\dots)$. By Theorem 3.2 and Corollary 3.1 in Hall and Heyde\,(1980), it suffices to show
\begin{enumerate}
\item[P1.] For each $n\geq 1$, $\{\sum_{l=2}^i \widetilde{\xi}_{n,\lfloor na_1 \rfloor + l}, \cal{F}_i\}_{i=2}^{\lfloor nb_2 \rfloor - \lfloor na_1 \rfloor}$ is a square integrable mean-zero martingale sequence;
\item[P2.] $V_n := \sum_{i=2}^{\lfloor nb_2 \rfloor - \lfloor na_1 \rfloor} \E\,\big[ \,\widetilde{\xi}^2_{n,\lfloor na_1 \rfloor + i} \,\vert\, \cal{F}_{\lfloor na_1 \rfloor + i -1} \big] \,\overset{P}{\longrightarrow} \, \alpha_1^2\,(b_1-a_1)^2\,+\,\alpha_2^2\,(b_2-a_2)^2\,+\, 2\,\alpha_1\,\alpha_2\,(b_1-a_2)^2$, as\, $n,p \to \infty$;
\item[P3.] $\sum_{i=\lfloor na_1 \rfloor + 2}^{\lfloor nb_2 \rfloor} \, \E\,\big[\, \widetilde{\xi}_{n,i}^4 \big] \longrightarrow \,0$,\, as\, $n,p \to \infty$.
\end{enumerate}

From Theorem 3.2 in Hall and Heyde\,(1980), the variance of $\alpha_1\,Q(a_1,b_1) \,+\, \alpha_2\,Q(a_2,b_2)$ should be $\plim_{n,p \to \infty} V_n$ as in P2. From there, it is intuitive that 
\begin{align*}
\cov\,\big( Q(a_1,b_1)\,,\, Q(a_2,b_2) \big) \;=\; \big(b_1 \land b_2 \,-\, a_1 \lor a_2\big)^2 \,\, \mathbbm{1}\big(b_1 \land b_2 > a_1 \lor a_2 \big)\,.
\end{align*}
To show P1, it is easy to see that $\widetilde{\xi}_{n,\lfloor na_1 \rfloor + l}$ is square integrable, $\E\,\big(\widetilde{\xi}_{n,\lfloor na_1 \rfloor + l}\big) = 0$ and $\cal{F}_2 \subseteq \cal{F}_l$.
Moreover, $\mathbb{E}[\sum_{l=2}^i \widetilde{\xi}_{n,\lfloor na_1 \rfloor + l}|\mathcal{F}_{i'}]=\sum_{l=2}^{i'} \widetilde{\xi}_{n,\lfloor na_1 \rfloor + l}$ for $i\geq i'$ using the double-centering property of $H(X_i,X_j)$.
To prove P3, note that using the power mean inequality
\begin{align}\label{power mean ineq}
 \left|\dis \sum_{i=1}^n a_i \right|^r \; \leq \; n^{r-1} \, \dis \sum_{i=1}^n |a_i|^r 
\end{align}
for \,$a_i \in \mathbb{R},\, 1\leq i \leq n, \, n\geq 2$ and $r>1$, we can write 
\begin{align}\label{eq_sup_17}
\begin{split}
&\sum_{i=\lfloor na_1 \rfloor + 2}^{\lfloor nb_2 \rfloor} \, \E\,\big[\, \widetilde{\xi}_{n,i}^4 \big] 
\\=&\;\; \sum_{i=\lfloor na_1 \rfloor + 2}^{\lfloor na_2 \rfloor +1} \, \E\,\big[\, \widetilde{\xi}_{n,i}^4 \big] \,+\, \sum_{i=\lfloor na_2 \rfloor + 2}^{\lfloor nb_1 \rfloor} \, \E\,\big[\, \widetilde{\xi}_{n,i}^4 \big]\,+\, \sum_{i=\lfloor nb_1 \rfloor + 1}^{\lfloor nb_2 \rfloor} \, \E\,\big[\, \widetilde{\xi}_{n,i}^4 \big] \\
=&\;\; \sum_{i=\lfloor na_1 \rfloor + 2}^{\lfloor na_2 \rfloor +1} \, \E\,\left[\left(\alpha_1 \,\frac{\sqrt{2}}{n \sqrt{V_0}}\,\, \xi_{1,i}\right)^4\right] \,+\,\sum_{i=\lfloor na_2 \rfloor + 2}^{\lfloor nb_1 \rfloor} \, \E\,\left[\left(\, \alpha_1 \,\frac{\sqrt{2}}{n \sqrt{V_0}}\,\, \xi_{1,i} \,+\,\alpha_2 \,\frac{\sqrt{2}}{n \sqrt{V_0}}\,\, \xi_{2,i} \right)^4\right]\\
& +\, \sum_{i=\lfloor nb_1 \rfloor + 1}^{\lfloor nb_2 \rfloor} \, \E\,\left[\left(\alpha_2 \,\frac{\sqrt{2}}{n \sqrt{V_0}}\,\, \xi_{2,i} \right)^4\right] \\ 
\lesssim& \;\; \frac{1}{n^4\,V_0^2} \left(\,\alpha_1^4\, \sum_{i=\lfloor na_1 \rfloor + 2}^{\lfloor nb_1 \rfloor} \, \E\, \big[\xi_{1,i}^4\big] \;+\; \alpha_2^4\, \sum_{i=\lfloor na_2 \rfloor + 2}^{\lfloor nb_2 \rfloor} \, \E\, \big[\xi_{2,i}^4\big] \right)\,.
\end{split}
\end{align}
We have essentially used the definitions in (\ref{eq_sup_16}) in the above calculations. Now for the first summand in the RHS of (\ref{eq_sup_17}), using (\ref{eq_sup_16}), we have
\begin{align}\label{eq_sup_18}
\begin{split}
&\qquad\frac{1}{n^4\,V_0^2}  \sum_{i=\lfloor na_1 \rfloor + 2}^{\lfloor nb_1 \rfloor} \, \E\, \big[\xi_{1,i}^4\big]  \;=\; \frac{1}{n^4\,V_0^2}  \sum_{i=\lfloor na_1 \rfloor + 2}^{\lfloor nb_1 \rfloor} \, \E\,\left[\left( \sum_{j=\lfloor na_1 \rfloor +1}^{i-1} \, H(X_i, X_j) \,\right)^4\right] \\
&=\; \frac{1}{n^4\,V_0^2}  \sum_{i=\lfloor na_1 \rfloor + 2}^{\lfloor nb_1 \rfloor} \, \left\{\, \sum_{j=\lfloor na_1 \rfloor +1}^{i-1} \, \E[H^4(X_i, X_j)]+\; 3 \sum_{\lfloor na_1 \rfloor +1 \leq j_1\neq j_2\leq i-1} \E[H^2(X_i, X_{j_1})\,H^2(X_i, X_{j_2})]\right\}\\
&= \; \frac{1}{n^4}\,\, O\left( \frac{n^2\,\E[H^4(X, X') ]+\, n^3\, \E[H^2(X, X')H^2(X, X'')]}{(\E[H^2(X, X')])^2} \right).
\end{split}
\end{align}
Similar expressions hold for the second summand in the RHS of (\ref{eq_sup_17}). With this, it is easy to see that under Assumption \ref{ass1_new}, $$\sum_{i=\lfloor na_1 \rfloor + 2}^{\lfloor nb_2 \rfloor} \, \E\,\big[\, \widetilde{\xi}_{n,i}^4 \big] \;=\;o(1) \qquad \textrm{as}\;\; n,p \to \infty\,,$$
which completes the proof of P3. To prove P2, write 
\vspace{-0.1in}
\begin{align}\label{eq_sup_19}
\begin{split}
V_n \;&=\; \sum_{i=2}^{\lfloor nb_2 \rfloor - \lfloor na_1 \rfloor} \E\,\big[ \,\widetilde{\xi}^2_{n,\lfloor na_1 \rfloor + i} \,\vert\, \cal{F}_{\lfloor na_1 \rfloor + i -1} \big] \;=\;  \sum_{l=\lfloor na_1 \rfloor + 2}^{\lfloor nb_2 \rfloor} \E\,\big[ \,\widetilde{\xi}^2_{n,l} \,\vert\, \cal{F}_{l -1} \big]
\end{split}
\end{align}
where we have simply substituted $l=\lfloor na_1 \rfloor + i$. From (\ref{eq_sup_19}) we have
\begin{align}\label{eq_sup_20}
\begin{split}
V_n \;=&\; \sum_{l=\lfloor na_1 \rfloor + 2}^{\lfloor na_2 \rfloor +1} \E\,\left[ \,\left(\frac{\sqrt{2}}{n \sqrt{V_0}}\,\,\alpha_1\, \xi_{1,l}\right)^2 \,\Bigg\vert\, \cal{F}_{l -1} \right] +\; \sum_{l=\lfloor nb_1 \rfloor + 1}^{\lfloor nb_2 \rfloor} \E\,\left[ \,\left(\frac{\sqrt{2}}{n \sqrt{V_0}}\,\,\alpha_2\, \xi_{2,l}\right)^2 \,\Bigg\vert\, \cal{F}_{l -1} \right] \\
& \,+\, \sum_{l=\lfloor na_2 \rfloor + 2}^{\lfloor nb_1 \rfloor} \E\,\left[ \,\left\{\frac{\sqrt{2}}{n \sqrt{V_0}}\,\,\big(\alpha_1\, \xi_{1,l} \,+\, \alpha_2\, \xi_{2,l})\right\}^2 \,\Bigg\vert\, \cal{F}_{l -1} \right]\\
=&\; \frac{2}{n^2 V_0}\, \Bigg( \alpha_1^2\, \sum_{l=\lfloor na_1 \rfloor + 2}^{\lfloor nb_1 \rfloor} \E\,\big[\xi_{1,l}^2\,\big\vert\, \cal{F}_{l -1} \big] \,+\, \alpha_2^2\, \sum_{l=\lfloor na_2 \rfloor + 2}^{\lfloor nb_2 \rfloor} \E\,\big[\xi_{2,l}^2\,\big\vert\, \cal{F}_{l -1} \big] 
\\&\qquad \quad +\, 2\,\alpha_1\,\alpha_2\, \sum_{l=\lfloor na_2 \rfloor + 2}^{\lfloor nb_1 \rfloor} \E\,\big[\xi_{1,l}\,\xi_{2,l}\,\big\vert\, \cal{F}_{l -1} \big]\Bigg)\\
=&\; \alpha_1^2\, V_{1n} \,+\, \alpha_2^2\, V_{2n} \,+\, 2\,\alpha_1\alpha_2\, V_{3n}\,,
\end{split}
\end{align}
where 
\begin{align}\label{eq_sup_21}
\begin{split}
V_{1n} \;&=\; \frac{2}{n^2 V_0}\,\sum_{l=\lfloor na_1 \rfloor + 2}^{\lfloor nb_1 \rfloor} \E\,\big[\xi_{1,l}^2\,\big\vert\, \cal{F}_{l -1} \big]\,,\\
V_{2n} \;&=\; \frac{2}{n^2 V_0}\,\sum_{l=\lfloor na_2 \rfloor + 2}^{\lfloor nb_2 \rfloor} \E\,\big[\xi_{2,l}^2\,\big\vert\, \cal{F}_{l -1} \big]\,,\\
V_{3n} \;&=\; \frac{2}{n^2 V_0}\,\sum_{l=\lfloor na_2 \rfloor + 2}^{\lfloor nb_1 \rfloor} \E\,\big[\xi_{1,l}\,\xi_{2,l}\,\big\vert\, \cal{F}_{l -1} \big]\,.
\end{split}
\end{align}
Using the definition of $\xi_{1,l}$ from (\ref{eq_sup_16}), we can write
\begin{align}\label{eq_sup_21.5}
V_{1n} \;&=\; \frac{2}{n^2 V_0}\,\sum_{l=\lfloor na_1 \rfloor + 2}^{\lfloor nb_1 \rfloor} \,\sum_{j_1,j_2=\lfloor na_1 \rfloor + 1}^{l-1} \E\,\big[ H(X_l,X_{j_1})\,H(X_l,X_{j_2}) \,\big\vert\, \cal{F}_{l -1} \big]\,,
\end{align}
and therefore
\begin{align}\label{eq_sup_22}
\E\,[V_{1n}] \;&=\; \frac{2}{n^2 V_0}\,\sum_{l=\lfloor na_1 \rfloor + 2}^{\lfloor nb_1 \rfloor}\, \sum_{j=\lfloor na_1 \rfloor + 1}^{l-1} \E\,\big[ H^2(X_l,X_j)\big]\,,
\end{align}
as \,$\E\,\big[ H(X_l,X_{j_1})\,H(X_l,X_{j_2}) \big] = 0$\, for\, $j_1 \neq j_2$. Using the fact that $V_0 = \E\,\big[ H^2(X,X')\big]$, some straightforward calculations yield
\begin{align}\label{eq_sup_23}
\begin{split}
\E\,[V_{1n}] \;&=\; \frac{2}{n^2 V_0}\,\sum_{\lfloor na_1 \rfloor + 1 \leq j < l \leq \lfloor nb_1 \rfloor} \E\,\big[ H^2(X,X')\big]\;=\;  \frac{2}{n^2}\,\binom{\lfloor nb_1 \rfloor - \lfloor na_1 \rfloor}{2}\\
&=\; \frac{1}{n^2}\, \big(\lfloor nb_1 \rfloor - \lfloor na_1 \rfloor)\,\big(\lfloor nb_1 \rfloor - \lfloor na_1 \rfloor - 1)\\
& \rightarrow \; (b_1 - a_1)^2 \,,
\end{split}
\end{align}
as \,$n \to \infty$. Define $L_l(j_1,j_2) := \E\,\big[ H(X_l,X_{j_1})\,H(X_l,X_{j_2}) \,\big\vert\, \cal{F}_{l -1} \big]$. Then from (\ref{eq_sup_21.5}) we can write
\begin{align*}
V_{1n} \;&=\; \frac{2}{n^2 V_0}\,\sum_{l=\lfloor na_1 \rfloor + 2}^{\lfloor nb_1 \rfloor} \,\sum_{j_1,j_2=\lfloor na_1 \rfloor + 1}^{l-1} L_l(j_1,j_2)\,,
\end{align*}
and therefore
\begin{align*}
\var \,(V_{1n}) \;&=\; \frac{4}{n^4 V_0^2}\,\sum_{l,l'=\lfloor na_1 \rfloor + 2}^{\lfloor nb_1 \rfloor} \,\sum_{j_1,j_2=\lfloor na_1 \rfloor + 1}^{l-1}\,\sum_{j_1',j_2'=\lfloor na_1 \rfloor + 1}^{l'-1} \cov\, \big( L_l(j_1,j_2), L_{l'}(j_1',j_2')\big)\,.
\end{align*}
Following the proof of Lemma D.1 in the Supplementary Materials of Chakraborty and Zhang\,(2021), we have $\E\,L_l(j_1,j_2) = 0$\, for\, $j_1 \neq j_2$, and 
\begin{align*}
& \E\, \big[L_l(j_1,j_2) \, L_{l'}(j_1',j_2')\big] \\=& \begin{cases}
 \E\,\left[H^2(X_l,X_{j_1})\,H^2(X_{l'}',X_{j_1})\right] & \textrm{if} \;\;\; j_1=j_2=j_1'=j_2'\,,\\
\E\,\left[H(X_l,X_{j_1})\,H(X_l,X_{j_2})\,H(X_{l'}',X_{j_1})\,H(X_{l'}',X_{j_2})\right]  & \textrm{if} \;\;\; j_1=j_1'\neq j_2=j_2' \;\; \textrm{or} \;\; j_1=j_2'\neq j_1'=j_2\,,\\
 \E\,\left[H^2(X_l,X_{j_1})\right] \E\,\left[H^2(X_{l'},X_{j_1'})\right]  &  \textrm{if} \;\;\; j_1=j_2 \neq j_1'=j_2'\,,
\end{cases}
\end{align*}
where the above expression holds for $l=l'$ as well. Therefore
\begin{align*}
\var\,(V_{1n}) \;=&\; \frac{4}{n^4 V_0^2}\,\Bigg[ \dis \sum_{l=l'} \Bigg\{\sum_{j_1=\lfloor na_1 \rfloor + 1}^{l-1} \cov\,\big(H^2(X_l,X_{j_1}),H^2(X_l',X_{j_1})\big) \,\\
&+\, 2\sum_{\lfloor na_1 \rfloor + 1 \leq j_1\neq j_2\leq l-1}\E\,\big[ H(X_l,X_{j_1})\,H(X_l,X_{j_2})\,H(X_l',X_{j_1})\,H(X_l',X_{j_2}) \big]\,\Bigg\}\\
&  + \; 2\sum_{\lfloor na_1 \rfloor + 2 \leq l < l' \leq \lfloor nb_1 \rfloor} \Bigg\{\sum_{j_1=\lfloor na_1 \rfloor + 1}^{l-1} \cov\,\big(H^2(X_l,X_{j_1}),H^2(X_{l'}',X_{j_1})\big) \\
&  +\,  2\sum_{\lfloor na_1 \rfloor + 1 \leq j_1\neq j_2\leq l-1} \E\,\big[H(X_l,X_{j_1})\,H(X_l,X_{j_2})\,H(X_{l'}',X_{j_1})\,H(X_{l'}',X_{j_2})\big]\,\Bigg\}\Bigg]\,.
\end{align*}
This implies 
\begin{align}\label{eq_sup_24}
\begin{split}
\var\,(V_{1n})=&\frac{1}{V_0^2}\, O\Big(n^{-1}\E\,\big[H^2(X, X')\, H^2(X, X'') \big] \,
\\&+\, \, \E\,\big[ H(X, X'')\, H(X', X'')\, H(X, X''')\, H(X', X''') \big]\, \Big)=\; o(1)\,,
\end{split}
\end{align}
as \,$n,p \to \infty$, under Assumption \ref{ass1_new}. Combining (\ref{eq_sup_23}) and (\ref{eq_sup_24}), we get
\begin{align*}
\E\, \left[\Big(V_{1n} - (b_1-a_1)^2 \Big)^2\right] \;&=\; \var\,(V_{1n}) \,+\, \Big\{\E\,[V_{1n}] - (b_1-a_1)^2 \Big\}^2=\; o(1)\,,
\end{align*}
which, combined with Chebyshev's inequality, implies
\begin{align}\label{eq_sup_25}
V_{1n} \;& \overset{P}{\longrightarrow}\; (b_1-a_1)^2 \quad \textrm{as} \;\;\; n,p \to \infty\,.
\end{align}
Likewise it can be shown that as\, $n,p \to \infty$,
\begin{align}\label{eq_sup_26}
V_{2n} \;& \overset{P}{\longrightarrow}\; (b_2-a_2)^2 \quad \textrm{and} \quad V_{3n} \; \overset{P}{\longrightarrow}\; (b_1-a_2)^2\,.
\end{align}
Combining (\ref{eq_sup_25}) and (\ref{eq_sup_26}), we get from (\ref{eq_sup_20})
\begin{align}\label{eq_sup_27}
V_n \;& \overset{P}{\longrightarrow}\; \alpha_1^2\,(b_1-a_1)^2\,+\,\alpha_2^2\,(b_2-a_2)^2\,+\, 2\,\alpha_1\,\alpha_2\,(b_1-a_2)^2\,.
\end{align}
This completes the proof of P2 and, thereby, the proof of T1, i.e., the finite-dimensional convergence.
\end{proof}

\begin{proof}[Proof of T2]
Denote $u=(a,b)$ and $v=(c,d)$. Also define \,$W_n(u) := \frac{\sqrt{2}}{n \sqrt{V_0}}\, S_n(u)$\, for\, $u \in [0,1]^2$. To prove the stochastic equicontinuity of $W_n(u)$ for $u \in [0,1]^2$, we need to show for any $\epsilon >0$ 
\begin{align*}
\lim_{\delta \downarrow 0}\, \limsup_{n,p \to \infty} \, P\left(\, \sup_{\substack{u,v \,\in\,[0,1]^2 \\ \kappa(u,v) < \delta}} \left\vert W_n(u)\,-\,W_n(v) \right\vert\, \right) \;=\; 0\,,
\end{align*}
where $\big([0,1]^2,\kappa \big)$ is compact. By Theorem A.8 in Li and Racine\,(2007) (also see Theorem 3 of Wichura (1969), which is applicable to a martingale sequence), it suffices to show that $\forall\, u,v \in [0,1]^2$,
\begin{align}\label{eq_sup_28}
\E\,\big\vert W_n(u)\,-\,W_n(v) \big\vert^{\alpha} \;&\lesssim \; \kappa^{\,\gamma}(u,v)
\end{align}
for some $\alpha>0$ and $\gamma>1$. For our purpose, we choose $\kappa(u,v) = \Vert u-v \Vert_1^{1/2}$\, for \, $u,v \in [0,1]^2$. Note that $[0,1]^2 \subseteq {\bb R}^2$ is compact (closed and bounded) with respect to the metric $\rho(u,v) = \Vert u-v \Vert_1$. It is easy to verify that $[0,1]^2$ is closed and bounded (and hence compact) with respect to the metric $\kappa(u,v) = \rho^{1/2}(u,v)$ as well.

Choosing $\alpha=2$ and $\gamma=2$, we will prove that $\forall\, u,v \in [0,1]^2$,
\begin{align}\label{eq_sup_28.1}
\E\,\big\vert W_n(u)\,-\,W_n(v) \big\vert^2 \;&\lesssim \; \kappa^{\,2}(u,v)\,,
\end{align}
which will complete the proof. Towards that end, consider the case\, $a<c<d<b$. We will show that (\ref{eq_sup_28.1}) holds in this case, and similar arguments will do the job for the other cases. Observe that
\begin{align}\label{eq_sup_29}
\begin{split}
W_n(u)\,-\,W_n(v)  \;&= \; \frac{\sqrt{2}}{n \sqrt{V_0}}\, S_n(a,b)\,-\, \frac{\sqrt{2}}{n \sqrt{V_0}}\, S_n(c,d)\\
&=\; \frac{\sqrt{2}}{n \sqrt{V_0}}\, \Bigg[ \sum_{i=\lfloor na \rfloor +2}^{\lfloor nb \rfloor} \sum_{j=\lfloor na \rfloor + 1}^{i-1} H(X_i, X_j)\,-\, \sum_{i=\lfloor nc \rfloor +2}^{\lfloor nd \rfloor} \sum_{j=\lfloor nc \rfloor + 1}^{i-1} H(X_i, X_j) \Bigg]\\
&=\; \frac{\sqrt{2}}{n \sqrt{V_0}}\,\Bigg[ \sum_{i=\lfloor na \rfloor +2}^{\lfloor nc \rfloor} \sum_{j=\lfloor na \rfloor + 1}^{i-1} H(X_i, X_j)\,+\, \sum_{i=\lfloor nc \rfloor +1}^{\lfloor nd \rfloor} \sum_{j=\lfloor na \rfloor + 1}^{\lfloor nc \rfloor} H(X_i, X_j)\\
& \hspace{0.8in} \sum_{i=\lfloor nd \rfloor +1}^{\lfloor nb \rfloor} \sum_{j=\lfloor na \rfloor + 1}^{\lfloor nc \rfloor} H(X_i, X_j) \,+\, \sum_{i=\lfloor nd \rfloor +1}^{\lfloor nb \rfloor} \sum_{j=\lfloor nc \rfloor + 1}^{\lfloor nd \rfloor} H(X_i, X_j)\\
& \hspace{0.8in} \sum_{i=\lfloor nd \rfloor +2}^{\lfloor nb \rfloor} \sum_{j=\lfloor nd \rfloor + 1}^{i-1} H(X_i, X_j) \Bigg]\\
&=:\; I \,+\, II\,+\, III\,+\, IV\,+\, V\,.
\end{split}
\end{align}
By power mean inequality,
\begin{align}\label{eq_sup_29.5}
\begin{split}
(I \,+\, II\,+\, III\,+\, IV\,+\, V)^2 \; \lesssim \; I^2 \,+\, II^2\,+\, III^2\,+\, IV^2\,+\, V^2\,.
\end{split}
\end{align}
Now 
\begin{align*}
\E\,[I^2] \;=\; \frac{2}{n^2 V_0}\, \sum_{i_1,i_2=\lfloor na \rfloor +2}^{\lfloor nc \rfloor} \sum_{j_1=\lfloor na \rfloor + 1}^{i_1-1} \sum_{j_2=\lfloor na \rfloor + 1}^{i_2-1} \E\,\big[ H(X_{i_1}, X_{j_1})\,H(X_{i_2}, X_{j_2})\big]\,.
\end{align*}
Clearly, $\E\,\big[ H(X_{i_1}, X_{j_1})\,H(X_{i_1}, X_{j_1})\big] = 0$ if the cardinality of the set $\{i_1,j_1\} \cap \{i_2,j_2\}$ is 0 or 1. Therefore we have 
\begin{align}\label{eq_sup_30}
\begin{split}
\E\,[I^2] \;&=\; \frac{2}{n^2 V_0}\, \sum_{i=\lfloor na \rfloor +2}^{\lfloor nc \rfloor} \sum_{j=\lfloor na \rfloor + 1}^{i-1}  \E\,\big[ H^2(X_i, X_j)\big] \;=\; \frac{2}{n^2 V_0}\, \sum_{\lfloor na \rfloor +1\, \leq j \,<i \,\leq \lfloor nc \rfloor}  V_0\\
&=\; \frac{1}{n^2}\, \big( \lfloor nc \rfloor - \lfloor na \rfloor \big)\,\big( \lfloor nc \rfloor - \lfloor na \rfloor -1 \big)\,.
\end{split}
\end{align}
Note that 
\begin{align}\label{eq_sup_31}
\begin{split}
\lfloor nc \rfloor - \lfloor na \rfloor -1 \;&\leq \; nc - na + na - \lfloor na \rfloor -1 = \; n(c-a) \,+\, (\{na\} - 1)\leq \; n(c-a)\,,
\end{split}
\end{align}
as \,$\{na\} \leq 1$. Therefore we have from (\ref{eq_sup_30}) and (\ref{eq_sup_31})
\begin{align}\label{eq_sup_32}
\begin{split}
\E\,[I^2] \;&\lesssim\; c-a\,.
\end{split}
\end{align}
Likewise, it can be shown that
\begin{align}\label{eq_sup_33}
\begin{split}
\E\,[V^2] \;&\lesssim\; b-d\,.
\end{split}
\end{align}
Now
\begin{align}\label{eq_sup_34}
\begin{split}
\E\,[II^2] \;&=\; \frac{2}{n^2 V_0}\, \sum_{i_1,i_2=\lfloor nc \rfloor +1}^{\lfloor nd \rfloor} \sum_{j_1, j_2=\lfloor na \rfloor + 1}^{\lfloor nc \rfloor}  \E\,\big[ H(X_{i_1}, X_{j_1})\,H(X_{i_2}, X_{j_2})\big]\\
&=\; \frac{2}{n^2 V_0}\, \sum_{i=\lfloor nc \rfloor +1}^{\lfloor nd \rfloor} \sum_{j=\lfloor na \rfloor + 1}^{\lfloor nc \rfloor}  \E\,\big[ H^2(X_i, X_j)\big]\\
&=\; \frac{2}{n^2} \, \big( \lfloor nd \rfloor - \lfloor nc \rfloor \big)\,\big( \lfloor nc \rfloor - \lfloor na \rfloor \big)\\
&\lesssim \; \frac{1}{n}\, \big[n(c-a) + 1\big]\\
&\lesssim \; c-a\,.
\end{split}
\end{align}
Similarly, it can be shown that
\begin{align}\label{eq_sup_35}
\begin{split}
\E\,[III^2] \;&\lesssim \; c-a \quad \textrm{and} \quad \E\,[IV^2] \;\lesssim \; b-d\,.
\end{split}
\end{align}
Combining (\ref{eq_sup_32})-(\ref{eq_sup_35}) with (\ref{eq_sup_29}) and (\ref{eq_sup_29.5}), we get
\begin{align*}
\E\,\left[\big\vert W_n(u)\,-\,W_n(v) \big\vert^2\right] \;&\lesssim \; (c-a)\,+\,(b-d) \;=\; \Vert u-v \Vert_1 \;=\; \kappa^{\,2}(u,v)\,.
\end{align*}
This proves (\ref{eq_sup_28.1}) and thereby completes the proof of T2.
\end{proof}
Combining the above results, we complete the proof of Theorem \ref{sup_theorem1}.
\end{proof}

\begin{proof}[Proof of Theorem \ref{sup_theorem2}]
Again consider the subset $[0,1]^2 \subseteq {\bb R}^2$ equipped with the metric $\kappa(u,v) = \Vert u-v \Vert_1^{1/2}$\, for\, $u,v \in [0,1]^2$. By Theorem 1 in Andrews\,(1992), we essentially need to show
\begin{enumerate}
\item[A1.] $[0,1]^2$ is totally bounded with respect to the metric $\kappa$;
\item[A2.] Pointwise convergence: $G_n(u)\, \overset{P}{\rightarrow} \,0 \;\;\;\forall\, u\in[0,1]^2$\, as\, $n,p \to \infty$;
\item[A3.] Asymptotic stochastic equicontinuity: for any $\epsilon>0$, $$\lim_{\delta \downarrow 0}\, \limsup_{n,p \to \infty} \, P\left(\, \sup_{\substack{u,v \,\in\,[0,1]^2 \\ \kappa(u,v) \leq \delta}} \left\vert G_n(u)\,-\,G_n(v) \right\vert\, \right) \;=\; 0\,.$$
\end{enumerate}
As $[0,1]^2$ is compact with respect to the metric $\kappa$, it is, therefore, totally bounded. To see A2, note that for fixed $u \in [0,1]^2$, using Chebyshev's inequality we have for any $\epsilon>0$ 
\begin{align}\label{eq_sup_36}
P\big( \vert G_n(u)\vert > \epsilon \big) \;&\leq \; \frac{1}{\epsilon^2}\,\, \E\,G_n^2(u) \;=\; \frac{1}{n^2 \epsilon^2 V_0}\,\,\E\,R_n^2(a,b)\,.
\end{align}
Recalling that $R_n(a,b) = \widetilde{R}_n(\lfloor an \rfloor +1, \lfloor bn \rfloor)$ and the definition of $\widetilde{R}_n(k,m)$, it is not hard to verify that
\begin{align*}
R_n^2(a,b)\;=\; \sum_{\lfloor na \rfloor +1 \leq i_1 < i_2 \leq \lfloor nb \rfloor} \sum_{\lfloor na \rfloor +1 \leq i_1' < i_2' \leq \lfloor nb \rfloor} \tau^2\,R(X_{i_1},X_{i_2})\,R(X_{i_1'},X_{i_2'})\,.
\end{align*}
Therefore, by H{\"o}lder's inequality, we have
\begin{align}\label{eq_sup_37}
\begin{split}
\E\,R_n^2(a,b) \;&\leq \; \sum_{\lfloor na \rfloor +1 \leq i_1 < i_2 \leq \lfloor nb \rfloor} \sum_{\lfloor na \rfloor +1 \leq i_1' < i_2' \leq \lfloor nb \rfloor} \tau^2\, \big\{ \E\,R^2(X_{i_1},X_{i_2}) \big\}^{1/2}\,\big\{ \E\,R^2(X_{i_1'},X_{i_2'}) \big\}^{1/2}\\
&= \; \tau^2\, \Big[\frac{1}{2}\,\big(\lfloor nb \rfloor - \lfloor na \rfloor \big) \,\big(\lfloor nb \rfloor - \lfloor na \rfloor -1 \big)\,\big\{\E\,R^2(X,X') \big\}^{1/2} \Big]^2\\
&=\; O\Big(n^4\,\tau^2\,\,\E\,R^2(X,X')  \Big)\\
&=\; O\Big(n^4\,\Big\{\tau^4\,\,\E\,R^4(X,X') \Big\}^{1/2} \Big)\,.
\end{split}
\end{align}
Combining (\ref{eq_sup_36}) and (\ref{eq_sup_37}), we get
\begin{align}\label{eq_sup_38}
\begin{split}
P\big( \vert G_n(u)\vert > \epsilon \big) \;&= \; O\left(\frac{n^2}{\epsilon^2\,\, \E\,H^2(X,X')}\,\left\{\tau^4\,\,\E\,R^4(X,X') \right\}^{1/2} \right)
\\&= \; O\left(\frac{1}{\epsilon^2}\,\left[\,\frac{n^4\,\tau^4\,\,\E\,R^4(X,X')}{\big(\E\,H^2(X,X')\big)^2} \,\right]^{1/2}\, \right)\,.
\end{split}
\end{align}
Under Assumption \ref{ass2_new}, it is easy to see from (\ref{eq_sup_38}) that 
\begin{align*}
P\big( \vert G_n(u)\vert > \epsilon \big) \;&= \; o(1)\,,
\end{align*}
which implies $G_n(u) \overset{P}{\rightarrow} 0$ for any fixed  $u \in [0,1]^2$\, as\, $n,p \to \infty$. This proves A2.

Finally, to prove A3, again by Theorem A.8 in Li and Racine\,(2007) (also see Theorem 3 of Wichura (1969)), it will suffice to show that $\forall \,u,v \in [0,1]^2$
\begin{align}\label{eq_sup_39}
\E\left[\,\big\vert G_n(u)\,-\,G_n(v) \big\vert^2\right] \;&\lesssim \; \kappa^{\,2}(u,v)\,.
\end{align}
We will show that (\ref{eq_sup_39}) holds in the case $a<c<d<b$. Similar arguments can prove (\ref{eq_sup_39}) for other cases. Similar to the proof of T2, we have 
\begin{align}\label{eq_sup_40}
\begin{split}
G_n(u)\,-\,G_n(v)  \;&= \; \frac{1}{n \sqrt{V_0}}\, R_n(a,b)\,-\, \frac{1}{n \sqrt{V_0}}\, R_n(c,d)\\
&=\; \frac{\tau}{n \sqrt{V_0}}\, \left\{ \sum_{i=\lfloor na \rfloor +2}^{\lfloor nb \rfloor} \sum_{j=\lfloor na \rfloor + 1}^{i-1} R(X_i, X_j)\,-\, \sum_{i=\lfloor nc \rfloor +2}^{\lfloor nd \rfloor} \sum_{j=\lfloor nc \rfloor + 1}^{i-1} R(X_i, X_j) \right\}\\
&=\; \frac{\tau}{n \sqrt{V_0}}\,\Bigg\{ \sum_{i=\lfloor na \rfloor +2}^{\lfloor nc \rfloor} \sum_{j=\lfloor na \rfloor + 1}^{i-1} R(X_i, X_j)\,+\, \sum_{i=\lfloor nc \rfloor +1}^{\lfloor nd \rfloor} \sum_{j=\lfloor na \rfloor + 1}^{\lfloor nc \rfloor} R(X_i, X_j)\\
& \hspace{0.8in} \sum_{i=\lfloor nd \rfloor +1}^{\lfloor nb \rfloor} \sum_{j=\lfloor na \rfloor + 1}^{\lfloor nc \rfloor} R(X_i, X_j) \,+\, \sum_{i=\lfloor nd \rfloor +1}^{\lfloor nb \rfloor} \sum_{j=\lfloor nc \rfloor + 1}^{\lfloor nd \rfloor} R(X_i, X_j)\\
& \hspace{0.8in} \sum_{i=\lfloor nd \rfloor +2}^{\lfloor nb \rfloor} \sum_{j=\lfloor nd \rfloor + 1}^{i-1} R(X_i, X_j) \Bigg\}\\
&=:\; I_G \,+\, II_G\,+\, III_G\,+\, IV_G\,+\, V_G\,.
\end{split}
\end{align}
By the power mean inequality,
\begin{align}\label{eq_sup_41}
\begin{split}
(I_G \,+\, II_G\,+\, III_G\,+\, IV_G\,+\, V_G)^2 \; \lesssim \; I_G^2 \,+\, II_G^2\,+\, III_G^2\,+\, IV_G^2\,+\, V_G^2\,.
\end{split}
\end{align}
Now 
\begin{align}\label{eq_sup_41.5}
\E\,[I_G^2] \;=\; \frac{\tau^2}{n^2 V_0}\, \sum_{i_1,i_2=\lfloor na \rfloor +2}^{\lfloor nc \rfloor} \sum_{j_1=\lfloor na \rfloor + 1}^{i_1-1} \sum_{j_2=\lfloor na \rfloor + 1}^{i_2-1} \E\,\big[ R(X_{i_1}, X_{j_1})\,R(X_{i_2}, X_{j_2})\big]\,.
\end{align}
Again, using H{\"o}lder's inequality and similar arguments as used in deriving (\ref{eq_sup_37}), we get from (\ref{eq_sup_41.5})
\begin{align}\label{eq_sup_42}
\begin{split}
\E\,[I_G^2] \;&=\; \frac{\tau^2}{n^2 V_0}\, \left[\sum_{i=\lfloor na \rfloor +2}^{\lfloor nc \rfloor} \sum_{j=\lfloor na \rfloor + 1}^{i-1}  \big\{\E\, R^2(X_i, X_j)\big\}^{1/2}\right]^2 
\\&=\; \frac{\tau^2}{n^2 V_0}\, \,\frac{\big( \lfloor nc \rfloor - \lfloor na \rfloor \big)^2\,\big( \lfloor nc \rfloor - \lfloor na \rfloor -1 \big)^2}{4\, n^2}\,\, n^2\,\E\, R^2(X,X').
\end{split}
\end{align}
Using the fact that $\lfloor nc \rfloor - \lfloor na \rfloor -1 \leq n(c-a)$, 
$(c-a)^2 \leq (c-a)$ and H{\"o}lder's inequality, we get from (\ref{eq_sup_42})
\begin{align}\label{eq_sup_43}
\begin{split}
\E\,[I_G^2] \;&\lesssim\; (c-a)\,\, \left(\frac{n^2\,\tau^2\,\,\E\,R^2(X,X')}{\E\,H^2(X,X')}\right) \;\leq\; (c-a)\,\, \left(\frac{n^2\,\tau^2\,\,\Big(\E\,R^4(X,X')\Big)^{1/2}}{\E\,H^2(X,X')}\right)
\\&\leq \; (c-a)\,\, \left(\frac{n^4\,\tau^4\,\,\E\,R^4(X,X')}{\big[\E\,H^2(X,X')\big]^2}  \right)^{1/2}\,.
\end{split}
\end{align}
Under Assumption \ref{ass2_new}, $\frac{n^4\,\tau^4\,\,\E\,R^4(X,X')}{\big[\E\,H^2(X,X')\big]^2} = o(1)$ as $n,p \to \infty$, and hence $\frac{n^4\,\tau^4\,\,\E\,R^4(X,X')}{\big[\E\,H^2(X,X')\big]^2}$ must be a bounded sequence in $n$ and $p$. Therefore, we have from (\ref{eq_sup_43})
\begin{align}\label{eq_sup_44}
\begin{split}
\E\,[I_G^2] \;&\lesssim\; c-a\,.
\end{split}
\end{align}
Likewise, it can be shown that 
\begin{align}\label{eq_sup_45}
\begin{split}
\E\,[II_G^2] \,&\lesssim\, c-a,\quad \E\,[III_G^2] \,\lesssim\,c-a,\quad \E\,[IV_G^2] \,\lesssim\,b-d,\quad \E\,[V_G^2] \,\lesssim\,b-d\,.
\end{split}
\end{align}
Combining (\ref{eq_sup_44})-(\ref{eq_sup_45}) with (\ref{eq_sup_40}) and (\ref{eq_sup_41}), we get
\begin{align*}
\E\,\big\vert G_n(u)\,-\,G_n(v) \big\vert^2 \;&\lesssim \; (c-a)\,+\,(b-d) \;=\; \Vert u-v \Vert_1 \;=\; \kappa^{\,2}(u,v)\,.
\end{align*}
This proves (\ref{eq_sup_39}) and thereby completes the proof of A3 and hence the theorem.
\end{proof}

\begin{proof}[Proof of Theorem \ref{sup_theorem3}]
It suffices to prove 
\begin{align}\label{eq_sup_45.5}
\dis\sup_{a,r,b\, \in\, [0,1]} \, \left\vert \frac{\widehat{V}^{\Delta \eta}_{n}(\lfloor nr \rfloor\,;\lfloor na \rfloor + 1,\lfloor nb \rfloor)}{\widetilde{V}^{\Delta \eta}_{n,}(\lfloor nr \rfloor\,;\lfloor na \rfloor + 1,\lfloor nb \rfloor))} - 1 \right\vert \;=\; o_p(1)\,
\end{align}
as $n,p \to \infty$\, for $\eta=1,2,3$. We will prove it for $\eta=2$, and other cases can be proved in a similar fashion. Denote\, $\omega(n\,;a,r) := (\lfloor nr \rfloor - \lfloor na \rfloor) (\lfloor nr \rfloor - \lfloor na \rfloor - 3)$. From equations (\ref{eq_sup_6.1}) and (\ref{new_eq_8}), we can write
\begin{align}\label{eq_sup_47}
\frac{\widehat{V}^{\Delta 2}_{n}(\lfloor nr \rfloor\,;\lfloor na \rfloor + 1,\lfloor nb \rfloor)}{\widetilde{V}^{\Delta 2}_{n}(\lfloor nr \rfloor\,;\lfloor na \rfloor + 1,\lfloor nb \rfloor))} \;&=\; \frac{4}{V_0}\, \widehat{\cal{D}}^2_{(\lfloor na \rfloor + 1):\lfloor nr \rfloor}\;=\; \frac{8}{V_0\,\,\omega(n\,;a,r)}\, \sum_{i=\lfloor na \rfloor + 2}^{\lfloor nr \rfloor} \sum_{j=\lfloor na \rfloor + 1}^{i-1} \widetilde{A}_{i,j}^2 \,,
\end{align}
where $A_{i,j} = \gamma(X_i,X_j)$ and $\widetilde{A}$ is the U-centered version of $A$. The last equality above in (\ref{eq_sup_47}) follows from the definition of $\widehat{\cal{D}}^2_{1:k}$ in Section \ref{sec:t-test}.\\ 

Define\, $C_n(a,r) := 8\,\sum_{i=\lfloor na \rfloor + 2}^{\lfloor nr \rfloor} \sum_{j=\lfloor na \rfloor + 1}^{i-1} \widetilde{A}_{i,j}^2$. Then we need to prove that 
\begin{align}\label{eq_sup_48}
\begin{split}
\dis\sup_{a,r\, \in\, [0,1]} \,  \frac{n^2}{\omega(n\,;a,r)}\,\left\vert \frac{1}{n^2\,V_0}\, C_n(a,r) - \frac{\omega(n\,;a,r)}{n^2} \right\vert \;=\; o_p(1)
\end{split}
\end{align}
as $n,p \to \infty$. Define $J_n(a,r) := \frac{1}{n^2\,V_0}\, C_n(a,r)$ and  $\widetilde{J}_n(a,r) := J_n(a,r) - \frac{\omega(n\,;\,a,r)}{n^2}$. Note that if we can prove 
\begin{align}\label{eq_sup_49.5}
\dis\sup_{a,r\, \in\, [0,1]} \, \left\vert \widetilde{J}_n(a,r) \right\vert \;=\; o_p(1)
\end{align}
as $n,p \to \infty$, then (\ref{eq_sup_48}) will follow by Slutsky's theorem. 
\vspace{0.1in}

Towards that, denote $u=(a,r)$ and $u'=(a',r')$. Consider the subset $[0,1]^2 \subseteq {\bb R}^2$ equipped with the metric $\widetilde{\kappa}(u,u') := \Vert u-u' \Vert$\, for\, $u,u' \in [0,1]^2$. By Theorem 1 in Andrews\,(1992), it suffices to show
\begin{enumerate}
\item[B1.] $[0,1]^2$ is totally bounded with respect to the metric $\widetilde{\kappa}$;
\item[B2.] Pointwise convergence: $\widetilde{J}_n(u)\, \overset{P}{\rightarrow} \,0 \;\;\;\forall\, u\in[0,1]^2$\, as\, $n,p \to \infty$;
\item[B3.] Asymptotic stochastic equicontinuity: for any $\epsilon>0$, $$\lim_{\delta \downarrow 0}\, \limsup_{n,p \to \infty} \, P\left(\, \sup_{\substack{u,u' \,\in\,[0,1]^2 \\ \widetilde{\kappa}(u,u') \leq \delta}} \big\vert \widetilde{J}_n(u)\,-\,\widetilde{J}_n(u') \big\vert\, \right) \;=\; 0\,.$$
\end{enumerate}

To argue B1, note that $[0,1]^2 \subseteq {\bb R}^2$ is compact (closed and bounded) with respect to the $l_2$ distance. It is easy to check that $[0,1]^2$ is compact (and therefore totally bounded) with respect to the metric $\widetilde{\kappa}$ as well. B2 is equivalent to showing \,$\frac{1}{\omega(n\,;\,a,r)}\, \frac{C_n(a,r)}{V_0} \overset{P}{\rightarrow} 1$ as $n,p \to \infty$ for fixed $a, r \in [0,1]$. The proof of B2 will follow similar lines of\, Lemma D.4\, in the Supplementary Materials of Chakraborty and Zhang\,(2021), which essentially proves the pointwise convergence result under Assumptions \ref{ass1_new} and \ref{ass2_new}. 
\vspace{0.1in}

Finally, to prove B3, again by Theorem A.8 in Li and Racine\,(2007) (also see Theorem 3 of Wichura, 1969), it will suffice to show that $\forall \,u,u' \in [0,1]^2$
\begin{align}\label{eq_sup_50}
\E\,\big\vert \widetilde{J}_n(u)\,-\,\widetilde{J}_n(u') \big\vert^2 \;&\lesssim \; \widetilde{\kappa}^{\,2}(u,v)\,.
\end{align}
Similar to the proof of T2 earlier in the proof of Theorem \ref{sup_theorem1}, we will show that (\ref{eq_sup_50}) holds in the case $a<a'<r'<r$. Similar arguments can prove (\ref{eq_sup_50}) for the other cases. 

Note that using the triangle inequality and the power mean inequality, we can write
\vspace{-0.1in}
\begin{align}\label{eq_sup_50.5}
\big\vert \widetilde{J}_n(u)\,-\,\widetilde{J}_n(u') \big\vert^2 \;\lesssim\; \big\vert J_n(u) - J_n(u')\big\vert^2 \,+\, \Bigg\vert\frac{\omega(n\,;a,r)}{n^2} - \frac{\omega(n\,;a',r')}{n^2}\Bigg\vert^2\,.
\end{align}
For $a_i, b_i \in {\bb R}$ with $\vert a_i \vert, \vert b_i \vert \leq 1$ for $1\leq i \leq n$, the product comparison lemma (Lemma 9.7.1 in Resnick, 1999) yields
\begin{align}\label{prod comp lemma}
\Bigg\vert \prod_{i=1}^n a_i \,-\, \prod_{i=1}^n b_i \Bigg\vert \;\leq\; \sum_{i=1}^n \vert  a_i \,-\, b_i \vert\,.
\end{align}
This yields
\begin{align}\label{eq_sup_50.6}
\begin{split}
\Bigg\vert\frac{\omega(n\,;a,r)}{n^2} - \frac{\omega(n\,;a',r')}{n^2} \Bigg\vert^2 \;&\leq \; \left(\frac{2}{n}\,\big((\lfloor nr \rfloor - \lfloor nr' \rfloor) \,+\, (\lfloor na' \rfloor - \lfloor na \rfloor)\big) \right)^2
\\&\lesssim \; \frac{1}{n^2}\,\big((\lfloor nr \rfloor - \lfloor nr' \rfloor)^2 \,+\, (\lfloor na' \rfloor - \lfloor na \rfloor)^2\big)\,,
\end{split}
\end{align}
where we have used the product comparison lemma and power mean inequality to get the first and the second inequalities, respectively. Following (\ref{eq_sup_31}), we can write
\begin{align*}
\lfloor nr \rfloor - \lfloor nr' \rfloor \;\leq \; 1 \,+\, n(r-r')\qquad
\textrm{and} \qquad \lfloor na' \rfloor - \lfloor na \rfloor \;\leq \; 1 \,+\, n(a'-a)\,.
\end{align*}
With this and using the power mean inequality once again, we have from (\ref{eq_sup_50.6})
\begin{align}\label{eq_sup_50.7}
\Bigg\vert\frac{\omega(n\,;a,r)}{n^2} - \frac{\omega(n\,;a',r')}{n^2} \Bigg\vert \; \lesssim \; (a-a')^2 \,+\, (r-r')^2 \;=\; \Vert u-u' \Vert^2 \;=\; \widetilde{\kappa}^{\,2}(u,u'),
\end{align}
and
\begin{align}\label{eq_sup_50.8}
\begin{split}
 &J_n(u)\,-\,J_n(u')  
\\=& \; \frac{1}{n^2 V_0}\, C_n(a,r)\,-\, \frac{1}{n^2 V_0}\, C_n(a',r')\\
=&\; \frac{8}{n^2 V_0}\, \left\{ \sum_{i=\lfloor na \rfloor +2}^{\lfloor nr \rfloor} \sum_{j=\lfloor na \rfloor + 1}^{i-1} \widetilde{A}_{i,j}^2\;\; -\; \sum_{i=\lfloor na' \rfloor +2}^{\lfloor nr' \rfloor} \sum_{j=\lfloor na' \rfloor + 1}^{i-1} \widetilde{A}_{i,j}^2 \right\}\\
=&\; \frac{8}{n^2 V_0}\,\Bigg\{ \sum_{i=\lfloor na \rfloor +2}^{\lfloor na' \rfloor} \sum_{j=\lfloor na \rfloor + 1}^{i-1} \widetilde{A}_{i,j}^2\,+\, \sum_{i=\lfloor na' \rfloor +1}^{\lfloor nr' \rfloor} \sum_{j=\lfloor na \rfloor + 1}^{\lfloor na' \rfloor} \widetilde{A}_{i,j}^2 \,+\, \sum_{i=\lfloor nr' \rfloor +1}^{\lfloor nr \rfloor} \sum_{j=\lfloor na \rfloor + 1}^{\lfloor na' \rfloor} \widetilde{A}_{i,j}^2\\
& \hspace{0.6in}  \,+\, \sum_{i=\lfloor nr' \rfloor +1}^{\lfloor nr \rfloor} \sum_{j=\lfloor na' \rfloor + 1}^{\lfloor nr' \rfloor} \widetilde{A}_{i,j}^2\,+\, \sum_{i=\lfloor nr' \rfloor +2}^{\lfloor nr \rfloor} \sum_{j=\lfloor nr' \rfloor + 1}^{i-1} \widetilde{A}_{i,j}^2 \Bigg\}\\
&=:\; J_1 \,+\, J_2\,+\, J_3\,+\, J_4\,+\, J_5\,.
\end{split}
\end{align}
By the power mean inequality,
\begin{align}\label{eq_sup_50.9}
\begin{split}
(J_1 \,+\, J_2\,+\, J_3\,+\, J_4\,+\, J_5)^2 \; \lesssim \; J_1^2 \,+\, J_2^2\,+\, J_3^2\,+\, J_4^2\,+\, J_5^2\,,
\end{split}
\end{align}
and therefore
\begin{align}\label{eq_sup_54}
\E\,\big\vert J_n(u)\,-\,J_n(u') \big\vert^2 \;&\lesssim \; \E\,[J_1^2] \,+\, \E\,[J_2^2] \,+\, \E\,[J_3^2] \,+\, \E\,[J_4^2] \,+\, \E\,[J_5^2]\,.
\end{align}
Consider the term $J_1$. Clearly 
\vspace{-0.2in}
\begin{align}\label{eq_sup_53}
\E\,[J_1^2] \;=\; \big(\E\,[J_1]\big)^2 \,+\, \var\,(J_1)\,.
\end{align}
Lemma D.3 in the Supplementary Materials of Chakraborty and Zhang\,(2021) essentially proves that under Assumptions \ref{ass1_new} and \ref{ass2_new}, $\var\,(J_1) = o(1)$ as $n, p \to \infty$. Following (\ref{eq_sup_47}), it is not hard to see that $$J_1 \;=\; \frac{\omega(n\,;a,a')}{n^2} \,\frac{\widehat{V}^{\Delta 2}_{n}(\lfloor na' \rfloor\,;\lfloor na \rfloor + 1,\lfloor nr \rfloor)}{\widetilde{V}^{\Delta 2}_{n}(\lfloor na' \rfloor\,;\lfloor na \rfloor + 1,\lfloor nr \rfloor))}\,.$$ 
Following (\ref{eq_sup_13.2}) and the proof of Lemma D.2 in the Supplementary Materials of Chakraborty and Zhang\,(2021), it can be verified that under Assumption \ref{ass2_new}, as $n,p \to \infty$
\begin{align}\label{eq_sup_54.5}
\frac{1}{(a'-a)^2}\,\E\,[J_1] \;&\rightarrow \;  1\,,
\end{align}
i.e., $\E\,[J_1]$ and hence $\E\,[J_1^2]$ is a bounded sequence in $n$ and $p$. Therefore we can write
\begin{align}\label{eq_sup_55}
\E\,[J_1^2]  \;&\lesssim \; (a-a')^4 \;\leq \; (a-a')^2\,.
\end{align}
In the same way we can obtain
\begin{align}\label{eq_sup_56}
\E\,[J_5^2]  \;&\lesssim \; (r-r')^2\,.
\end{align}
To obtain upper bounds for the terms $\E\,[J_2^2]$, $\E\,[J_3^2]$ and $\E\,[J_4^2]$, we first introduce the double centered distance $\bar{A}_{i,j} := A_{i,j} - \E\,[A_{i,j} | X_i] - \E\,[A_{i,j} | X_j] + \E\,[A_{i,j}]$ for $i \neq j$. We define $\bar{L}(X_i,X_j)$ and $\bar{R}(X_i,X_j)$ in a similar way. Following the proof of Lemma D.3 in the Supplementary Materials of Chakraborty and Zhang\,(2021), we can argue in a similar fashion that $\var\,(J_2), \var\,(J_3)$ and $\var\,(J_4)$ are  $o(1)$ as $n, p \to \infty$. Moreover, we have $\E\,[\widetilde{A}^2_{ij}] \lesssim \E\,[\bar{A}^2_{i,j}]$, $\bar{L}(X_i,X_j) = \frac{1}{\tau} H(X_i,X_j)$ and
\begin{align}\label{eq_CZ_1}
\begin{split}
\bar{A}_{i,j} \;&=\; \frac{\tau}{2}\, \bar{L}(X_i,X_j) \,+\, \tau\,\bar{R}(X_i,X_j) \;=\; \frac{1}{2}\, H(X_i,X_j) \,+\, \tau\,\bar{R}(X_i,X_j)\,.
\end{split}
\end{align}
With all these, we can write
\begin{align}\label{eq_sup_56.1}
\begin{split}
\frac{1}{V_0}\,\E\,[\widetilde{A}^2_{ij}] \;&\lesssim\;  \frac{1}{4} \,+\, \frac{\tau^2}{V_0}\,\E\,[\bar{R}^2(X,X')]\;\leq \; \frac{1}{4} \,+\, \,\left(\E\,\left[\frac{\tau^4}{V_0^2}\,\bar{R}^4(X,X')\right]\right)^{1/2}\,,
\end{split}
\end{align}
where the first and the second inequalities follow from the power mean inequality and H{\"o}lder's inequality, respectively. This implies 
\begin{align}\label{eq_sup_56.2}
\begin{split}
\E\,[J_2] \;&=\; \frac{8}{n^2 V_0}\,\sum_{i=\lfloor na' \rfloor +1}^{\lfloor nr' \rfloor} \sum_{j=\lfloor na \rfloor + 1}^{\lfloor na' \rfloor} \E\,[\widetilde{A}_{i,j}^2] 
\\ &\lesssim\;  \frac{1}{n^2}\,\left[(\lfloor nr' \rfloor - \lfloor na' \rfloor)(\lfloor na' \rfloor - \lfloor na \rfloor) \,+\, \,O\left(\left(\frac{n^4\,\tau^4}{V_0^2}\,\E\,[R^4(X,X')]\right)^{1/2}\right)\right]\,,
\end{split}
\end{align}
where we have used the fact that $\E\,[\bar{R}^4(X,X')] = O\left(\E\,[R^4(X,X')] \right)$. Following (\ref{eq_sup_31}) and under Assumption \ref{ass2_new}, we have from (\ref{eq_sup_56.2})  
\begin{align}\label{eq_sup_56.3}
\begin{split}
\E\,[J_2] \;& \lesssim\;(r'-a')(a'-a) \,+\, o(1)\,,
\end{split}
\end{align}
and therefore $(\E\,[J_2])^2 \lesssim (a-a')^2$, which in turn implies\, $\E\,[J_2^2] \lesssim (a-a')^2$ as $\var(J_2)=o(1)$ (and hence is a bounded sequence in $n$ and $p$). 

In similar lines, we can show that $\E\,[J_3^2] \lesssim (a'-a)^2$ and $\E\,[J_4^2] \lesssim (r-r')^2$. Combining all these, we have from (\ref{eq_sup_54}) and
\begin{align}\label{eq_sup_56.4}
\E\,\big\vert J_n(u)\,-\,J_n(u') \big\vert^2 \;&\lesssim \; (a-a')^2 \,+\,(r-r')^2 \;=\; \Vert u-u' \Vert^2 \;=\; \widetilde{\kappa}^{\,2}(u,u')\,.
\end{align}
Finally combining (\ref{eq_sup_50.5}), (\ref{eq_sup_50.7}) and (\ref{eq_sup_56.4}), we get 
\begin{align*}
\E\,\big\vert \widetilde{J}_n(u)\,-\,\widetilde{J}_n(u') \big\vert^2 \;&\lesssim \; \widetilde{\kappa}^{\,2}(u,v)\,,
\end{align*}
which completes the proof of B3 and hence Theorem \ref{sup_theorem3}.

\end{proof}

\begin{proof}[Proof of Lemma \ref{alt:decomposition}]
As a direct consequence of Proposition \ref{Prop 4.1 in CZ}, we have 
\[\widehat{E}_{n,k} = \widetilde{E}_{n,k} + \widetilde L_{n,k} + R_{n,k},\]
where $\widetilde{E}_{n,k}$ and $R_{n,k}$ have been defined earlier, and 
\begin{align*}
\widetilde L_{n,k} := &\frac{1}{k (n-k)}\sum_{i_1=1}^{k}\sum_{i_2=k+1}^{n}\tau_{i_1,i_2} L(X_{i_1}, X_{i_2})\,-\,\frac{1}{k(k-1)}\sum_{1 \leq i_1< i_2 \leq k} \tau_{i_1,i_2}L(X_{i_1},X_{i_2})\\ 
	&- \frac{1}{(n-k)(n-k-1)}\sum_{k+1 \leq i_1< i_2 \leq n}\tau_{i_1,i_2} L(X_{i_1},X_{i_2}).
\end{align*}
According to the definition of $H$,
\begin{align*}
	\tau_{i,j}L(X_i,X_j) =& \frac{\gamma^2(X_i,X_j) - \tau_{i,j}^2}{\tau_{i,j}}\\
	=& \frac{1}{\tau_{i,j}}\sum_{l = 1}^g\left[\rho_l(X_{i,\mathcal{S}_l}, X_{j,\mathcal{S}_l}) - \E[\rho_l(X_{i,\mathcal{S}_l}, X_{j,\mathcal{S}_l})]\right]\\
	=& H(X_i,X_j) + \frac{1}{\tau_{i,j}}\sum_{l = 1}^g\left(\E[\rho_l(X_{i,\mathcal{S}_l}, X_{j,\mathcal{S}_l})|X_{i,\mathcal{S}_l}] - \E[\rho_l(X_{i,\mathcal{S}_l}, X_{j,\mathcal{S}_l})]\right)\\
	&+\frac{1}{\tau_{i,j}}\sum_{l = 1}^g\left(\E[\rho_l(X_{i,\mathcal{S}_l}, X_{j,\mathcal{S}_l})|X_{j,\mathcal{S}_l}] - \E[\rho_l(X_{i,\mathcal{S}_l},X_{j,\mathcal{S}_l})]\right)\\
 =&H(X_i,X_j) + \tau_{i,j}\E(L(X_i,X_j)|X_i) + \tau_{i,j}\E(L(X_i,X_j)|X_j).
\end{align*}

We first consider the case of $k < \nu$. Let $(X,Y)$ be two independent random variables such that they are independent of $X_1,..., X_n$, and $X \overset{\mathcal{D}}{=} X_1$ and $Y \overset{\mathcal{D}}{=} X_{n}$. Then we have  
    \begin{align*}
        &\frac{1}{k(n-k)}\sum_{i = 1}^{k}\sum_{j = k+1}^n\tau_{i,j}\E[L(X_i,X_j)|X_i] + \tau_{i,j}\E[L(X_i,X_j)|X_j]\\
        =&\frac{1}{k(n-k)}\left(\sum_{i = 1}^k\sum_{j = k+1}^{\nu}\tau_1\E[L(X_i,X)|X_i] + \sum_{i = 1}^k\sum_{j = \nu+1}^{n}\tau_3\E[L(X_i,Y)|X_i]\right)\\
        &+\frac{1}{k(n-k)}\left(\sum_{i = 1}^k\sum_{j = k+1}^{\nu}\tau_1\E[L(X_j,X)|X_j] + \sum_{i = 1}^k\sum_{j = \nu+1}^{n}\tau_3\E[L(X_j,X)|X_j]\right)\\
        =&\frac{1}{k}\left(\frac{\nu-k}{n-k}\sum_{i = 1}^k\tau_1\E[L(X_i,X)|X_i] + \frac{n-\nu}{n-k}\sum_{i = 1}^k\tau_3\E[L(X_i,Y)|X_i]\right)\\
        &+\frac{1}{n-k}\left(\sum_{j = k+1}^{\nu}\tau_1\E[L(X_j,X)|X_j] + \sum_{j = \nu+1}^{n}\tau_3\E[L(X_j,X)|X_j]\right).
    \end{align*}
    And
    \begin{align*}
        &\frac{1}{k(k-1)}\sum_{1 \leq i < j \leq k}\tau_{i,j}\E[L(X_i,X_j)|X_i] + \tau_{i,j}\E[L(X_i,X_j)|X_j]\\
        =&\frac{1}{k(k-1)}\sum_{i = 1}^{k-1}\sum_{j = i+1}^k\tau_1\E[L(X_i,X)|X_i]+\frac{1}{k(k-1)}\sum_{j = 2}^{k}\sum_{i = 1}^{j-1}\tau_1\E[L(X_j,X)|X_j]\\
        =&\frac{1}{k(k-1)}\sum_{i = 1}^k(k-1)\E[L(X_i,X)|X_i] = \frac{1}{k}\sum_{i = 1}^k\E[L(X_i,X)|X_i].
    \end{align*}
    In addition, 
    \begin{align*}
        &\frac{1}{(n-k)(n-k-1)}\sum_{k+1 \leq i < j \leq n}\tau_{i,j}\E[L(X_i,X_j)|X_i] + \tau_{i,j}\E[L(X_i,X_j)|X_j]\\
        =&\frac{1}{(n-k)(n-k-1)}\sum_{k+1 \leq i < j \leq \nu}\tau_{i,j}\E[L(X_i,X_j)|X_i] + \tau_{i,j}\E[L(X_i,X_j)|X_j]\\
        &+\frac{1}{(n-k)(n-k-1)}\sum_{\nu+1 \leq i < j \leq n}\tau_{i,j}\E[L(X_i,X_j)|X_i] + \tau_{i,j}\E[L(X_i,X_j)|X_j]\\
        &+\frac{1}{(n-k)(n-k-1)}\sum_{i=k+1}^{\nu}\sum_{j = \nu+1}^n\tau_{i,j}\E[L(X_i,X_j)|X_i] + \tau_{i,j}\E[L(X_i,X_j)|X_j]\\
        =&\frac{1}{(n-k)(n-k-1)}\left(\sum_{i = k+1}^{\nu}(\nu-k-1)\tau_1\E[L(X_i,X)|X_i] + \sum_{i = \nu+1}^{n}(n-\nu-1)\tau_2\E[L(X_i,Y)|X_i]\right)\\
        &+\frac{1}{(n-k)(n-k-1)}\left(\sum_{i = k+1}^{\nu}(n-\nu)\tau_3\E[L(X_i,Y)|X_i] + \sum_{j = \nu+1}^{n}(\nu-k)\tau_3\E[L(X_j,X)|X_j]\right).
    \end{align*}
Combining the above results, we have
    \begin{align*}
        \widetilde{L}_{n,k} = &L_{n,k} + \frac{1}{k}\sum_{i = 1}^k\frac{n-\nu}{n-k}(\tau_3\E[L(X_i,Y)|X_i] - \tau_1\E[L(X_i,X)|X_i])
        \\&-\frac{1}{n-k}\sum_{i = k+1}^{\nu}\frac{n-\nu}{n-k-1}(\tau_3\E[L(X_i,Y)|X_i] - \tau_1\E[L(X_i,X)|X_i])\\
        &+\frac{1}{n-k}\sum_{i = \nu+1}^{n}\frac{n-\nu-1}{n-k-1}(\tau_3\E[L(X_i,X)|X_i] - \tau_2\E[L(X_i,Y)|X_i])\\
        =&L_{n,k} + U_{n,k}.
    \end{align*}
    Similar calculation yields that 
    \begin{align*}
        U_{n,k} =& \frac{1}{k}\sum_{i = 1}^{\nu}\frac{\nu-1}{k-1}(\tau_3\E[L(X_i,Y)|X_i] - \tau_1\E[L(X_i,X)|X_i])
        \\&-\frac{1}{k}\sum_{i = \nu+1}^k\frac{\nu}{k-1}(\tau_3\E[L(X_i,X)|X_i] - \tau_2\E[L(X_i,Y)|X_i])\\
        &+\frac{1}{n-k}\sum_{i = k+1}^n\frac{\nu}{k}(\tau_3\E[L(X_i,X)|X_i] - \tau_2\E[L(X_i,Y)|X_i]),
    \end{align*}
    for $k \geq \nu$.
Furthermore, simple algebra leads to
    \begin{align*}
        \widetilde E_{n,k} =& \frac{2}{k(n-k)}\left(\sum_{i = 1}^k\sum_{j = k+1}^{\nu}\tau_1 + \sum_{i = 1}^k\sum_{j = \nu+1}^{n}\tau_3\right) - \tau_1
        \\&- \frac{2}{(n-k)(n-k-1)}\left(\sum_{k+1 \leq i < j \leq \nu}\tau_1 + \sum_{\nu+1 \leq i < j \leq n}\tau_2 + \sum_{i = k+1}^{\nu}\sum_{j = \nu}^{n}\tau_3\right)\\
        =&\tau_1\left(\frac{2(\nu-k)}{n-k} - 1 - \frac{(\nu-k)(\nu-k-1)}{(n-k)(n-k-1)}\right) - \frac{(n-\nu)(n - \nu-1)}{(n-k)(n-k-1)}\tau_2 \\&+2\tau_3\left(\frac{n - \nu}{n-k} - \frac{(\nu-k)(n - \nu)}{(n-k)(n-k-1)}\right)\\
        =&(2\tau_3 - \tau_1 - \tau_2)\frac{(n-\nu)(n-\nu-1)}{(n-k)(n-k-1)}.
    \end{align*}
    Similar arguments can be applied to show that  \[\widetilde E_{n,k} = (2\tau_3 - \tau_1 - \tau_2)\frac{\nu(\nu-1)}{k(k-1)},\] and \[\widetilde L_{n,k} = L_{n,k}(1+o_p(1)),\]
    if $k \geq \nu$, which completes the proof.
\end{proof}

\begin{proof}[Proof of Lemma \ref{alt:jointnormal}]
    It suffices to show that for any $\alpha_1,\alpha_2,\alpha_3 \in \mathbb{R}$, $\alpha_1L_{n,k}^{(1)} + \alpha_2L_{n,k}^{(2)} + \alpha_3L_{n,k}^{(3)} \overset{d}{\rightarrow} N(0,\alpha_1^2+\alpha_2^2+\alpha_3^2)$. To see this, we further denote \[\xi_{n,j} = \left\{
    \begin{matrix}
    \alpha_1\frac{\sqrt{2}\zeta n}{\sqrt{V_1}}\frac{1}{\nu(\nu-1)}\sum_{i = 1}^{j-1}H(X_i,X_j),\qquad &\text{ for } j = 2,3,\dots,\nu;\\\\
    \alpha_3\frac{n\sqrt{\zeta(1-\zeta)}}{\sqrt{V_3}}\frac{1}{\nu(n- \nu)}\sum_{i = 1}^{\nu}H(X_i,X_{\nu+1}), \qquad &\text{ for } j = \nu+1;\\\\
    \alpha_2\frac{\sqrt{2}(1-\zeta)n}{\sqrt{V_2}}\frac{1}{(n - \nu)(n - \nu - 1)}\sum_{i = \nu+1}^{j-1}H(X_i,X_j) \\+ \alpha_3\frac{n\sqrt{\zeta(1-\zeta)}}{\sqrt{V_3}}\frac{1}{\nu(n- \nu)}\sum_{i = 1}^{\nu}H(X_i,X_{j}), \qquad &\text{ for } j = \nu+2,\dots,n.
    \end{matrix}
    \right.\]
    Consider the natural filtration $\{\mathcal{F}_j\}_{j \geq 0}$ where $\mathcal{F}_j = \sigma(X_j,X_{j-1}\dots)$. It can be easily seen that $\{\epsilon_{n,j}\}_{j \geq 2}$ is a martingale difference sequence relative to $\{\mathcal{F}_j\}_{j \geq 1}$, since it is adapted and for every $j$, $\E[\xi_{n,j}|\cF_{j-1}] = 0$ as $\E[H(X_i,X_j)|\cF_{j-1}] = 0$ for any $i < j$. Therefore $\{\sum_{j = 2}^{k}\xi_{n,j}\}_{k = 2}^n$ is a square integrable martingale relative to $\{\mathcal{F}_j\}_{j \geq 1}$. From Theorem 3.2 in Hall and Heyde (1980), if we can show
    \begin{enumerate}
        \item $\sum_{j = 2}^n\E[\xi_{n,j}^4] \rightarrow 0$,
        \item $\sum_{j = 2}^n\E[\xi_{n,j}^2|\cF_{j-1}] \overset{P}{\rightarrow}\alpha_1^2+\alpha_2^2+\alpha_3^2,$
    \end{enumerate}
    as both $n,p$ grow to $\infty$, then the proof is complete. To show the first condition, for $j = 2,\dots,\nu$,
    \begin{align*}
        \E[\xi_{n,j}^4] &= \frac{4\alpha_1^4n^4\zeta^4}{V_1^2\nu^4(\nu-1)^4}\sum_{i_1,i_2,i_3,i_4 = 1}^{j-1}\E[H(X_{i_1},X_j)H(X_{i_2},X_j)H(X_{i_3},X_j)H(X_{i_4},X_j)]\\
        &= \frac{4\alpha_1^4n^4\zeta^4}{V_1^2\nu^4(\nu-1)^4}\sum_{i = 1}^{j-1}\E[H^4(X_{i},X_j)] + \frac{12\alpha_1^4n^4\zeta^4}{V_1^2\nu^4(\nu-1)^4}\sum_{i_1,i_2 = 1, i_1 \neq i_2}^{j-1}\E[H^2(X_{i_1},X_j)H^2(X_{i_2},X_j)],\\
    \end{align*}
since \begin{align*}
    \E[H^3(X_{i_1},X_j)H(X_{i_2},X_j)] &= \E[\E[H^3(X_{i_1},X_j)H(X_{i_2},X_j)|X_{i_1},X_{j}]]
    \\&= \E[H^3(X_{i_1},X_j)\E[H(X_{i_2},X_j)|X_{i_1},X_{j}]] = 0,
\end{align*}
if $i_1 \neq i_2 \neq j$,
\begin{align*}
    \E[H^2(X_{i_1},X_j)H(X_{i_2},X_j)H(X_{i_3},X_j)] &= \E[\E[H^2(X_{i_1},X_j)H(X_{i_2},X_j)H(X_{i_3},X_j)|X_{i_1},X_{i_2},X_{j}]]\\
    &= \E[H^2(X_{i_1},X_j)H(X_{i_2},X_j)\E[H(X_{i_3},X_j)|X_{j}]] = 0,
\end{align*}
if $i_1, i_2, i_3, j$ are all distinct, and 
\begin{align*}
    &\E[H(X_{i_1},X_j)H(X_{i_2},X_j)H(X_{i_3},X_j)H(X_{i_4},X_j)] \\
    =& \E[\E[H(X_{i_1},X_j)H(X_{i_2},X_j)H(X_{i_3},X_j)H(X_{i_4},X_j)|X_{i_1},X_{i_2},X_{i_3},X_{j}]]\\
=&\E[H(X_{i_1},X_j)H(X_{i_2},X_j)H(X_{i_3},X_j)\E[H(X_{i_4},X_j)|X_{j}]] = 0,
\end{align*}
if $i_1,i_2,i_3,i_4,j$ are all distinct, as $X_1,\dots,X_n$ are independent. Therefore, by Assumption \ref{ass1_new}, and the fact that $\E[H^2(X_{i_1},X_j)H^2(X_{i_2},X_j)] \leq \E[H^4(X_1,X_1')]$, \[\E[\xi_{n,j}^4] = O\left(\frac{1}{n^3}\frac{\E[H^4(X,X')]}{V_1^2} + \frac{1}{n^2}\frac{\E[H^4(X,X')]}{V_1^2}\right) = o\left(n^{-1}\right).\]
Similarly, we have $\E[\xi_{n,j}^4] = o(n^{-2})$ for $j = \nu+1,\dots,n$ by noting that $\E[H(X_i,X_j)^4] = o(nV_2^2)$ if $\nu+1 \leq i < j$, and $\E[H(X_i,X_j)^4] = o(nV_3^2)$ if $i < \nu+1  \leq j$. In summary, $\sum_{j = 2}^n\E[\xi_{n,j}^4] \rightarrow 0$.

To show the second condition, assume $j = 2,\dots,\nu$ first. Some simple algebra leads to
\begin{align*}
    \E[\xi_{n,j}^2|\cF_{j-1}] =& \frac{2\alpha_1^2\zeta^2n^2}{V_1\nu^2(\nu-1)^2}\sum_{i_1,i_2 = 1}^{j-1}\E[H(X_{i_1},X_j)H(X_{i_2},X_j)|\cF_{j-1}]\\
    =& \frac{2\alpha_1^2\zeta^2n^2}{V_1\nu^2(\nu-1)^2}\sum_{i = 1}^{j-1}\E[H^2(X_{i},X_j)|\cF_{j-1}] 
    \\&+ \frac{2\alpha_1^2\zeta^2n^2}{V_1\nu^2(\nu-1)^2}\sum_{i_1,i_2 = 1, i_1 \neq i_2}^{j-1}\E[H(X_{i_1},X_j)H(X_{i_2},X_j)|\cF_{j-1}].
\end{align*}
By similar techniques, we can show
that
\begin{align*}
    \sum_{j = 2}^n\E[\xi_{n,j}^2|\cF_{j-1}] &= \sum_{j = 2}^{\nu}\frac{2\alpha_1^2\zeta^2n^2}{V_1\nu^2(\nu-1)^2}\sum_{i = 1}^{j-1}\E[H^2(X_{i},X_j)|\cF_{j-1}] \\
    & + \sum_{j = \nu + 2}^n\frac{2\alpha_2^2(1-\zeta)^2n^2}{V_2(n-\nu)^2(n-\nu-1)^2}\sum_{i = \nu+1}^{j-1}\E[H^2(X_i,X_j)|\cF_{j-1}]\\
    & + \sum_{j = \nu+1}^n\frac{\alpha_3^2n^2\zeta(1-\zeta)}{V_3\nu^2(1-\nu)^2}\sum_{i = 1}^{\nu}\E[H^2(X_i,X_j)|\cF_{j-1}]\\
    & + \sum_{j = 2}^{\nu}\frac{2\alpha_1^2\zeta^2n^2}{V_1\nu^2(\nu-1)^2}\sum_{i_1,i_2 = 1,i_1 \neq i_2}^{j-1}\E[H(X_{i_1},X_j)H(X_{i_2},X_j)|\cF_{j-1}] \\
    & + \sum_{j = \nu+2}^{n}\frac{2\alpha_2^2(1-\zeta)^2n^2}{V_2(n-\nu)^2(n-\nu-1)^2}\sum_{i_1,i_2 = \nu+1,i_1 \neq i_2}^{n}\E[H(X_{i_1},X_j)H(X_{i_2},X_j)|\cF_{j-1}] \\
    & + \sum_{j = \nu+1}^n\frac{\alpha_3^2n^2\zeta(1-\zeta)}{V_3\nu^2(1-\nu)^2}\sum_{i_1,i_2 = 1,i_1 \neq i_2}^{\nu}\E[H(X_{i_1},X_j)H(X_{i_2},X_j)|\cF_{j-1}]\\
    & + \sum_{j = \nu + 2}^{n}\frac{2\sqrt{2}\alpha_2\alpha_3\sqrt{\zeta(1-\zeta)^3}}{\sqrt{V_2V_3}\nu(n-\nu)^2(n-\nu-1)^2}\sum_{i_1 = \nu+1}^{j-1}\sum_{i_2 = 1}^{\nu}\E[H(X_{i_1},X_j)H(X_{i_2},X_j)|\cF_{j-1}]\\
    & := \mathcal{I}_1 + \mathcal{I}_2 + \mathcal{I}_3+\mathcal{I}_4+\mathcal{I}_5+\mathcal{I}_6+\mathcal{I}_7.
\end{align*}

If we can show that $\mathcal{I}_1$, $\mathcal{I}_2$, $\mathcal{I}_3$ converge to $\alpha_1^2$, $\alpha_2^2$ and $\alpha_3^2$ in probability, respectively, and $\mathcal{I}_4$,\dots,$\mathcal{I}_7$ converge to zero in probability, the proof is then complete. To see this, consider $\mathcal{I}_1$ first. Note that
\begin{align*}
    \E[\mathcal{I}_1] = \E\left[\sum_{j = 2}^{\nu}\frac{2\alpha_1^2\zeta^2n^2}{V_1\nu^2(\nu-1)^2}\sum_{i = 1}^{j-1}\E[H^2(X_{i},X_j)|\cF_{j-1}] \right]  = \frac{2\alpha_1^2\zeta^2n^2}{\nu^2(\nu-1)^2}\sum_{j = 2}^{\nu}(j-1) \rightarrow \alpha_1^2.
\end{align*}
Hence to prove $\mathcal{I}_1 \overset{P}{\rightarrow} \alpha_1^2$, by Chebyshev's inequality, it suffices to show that $\E[\mathcal{I}_1^2] \rightarrow \alpha_1^4$. Notice that
\begin{align*}
 \E[\mathcal{I}_1^2] &= \left(\frac{2\alpha_1^2\zeta^2n^2}{V_1\nu^2(\nu-1)^2}\right)^2 \sum_{j_1,j_2 = 2}^{\nu}\sum_{i_1 = 1}^{j_1-1}\sum_{i_2 = 1}^{j_2 - 1}\E\left[\E[H^2(X_{i_1},X_{j_1})|\cF_{j_1-1}]\E[H^2(X_{i_2},X_{j_2})|\cF_{j_2-1}]\right]\\
 &= \left(\frac{2\alpha_1^2\zeta^2n^2}{V_1\nu^2(\nu-1)^2}\right)^2 \sum_{j_1,j_2 = 2}^{\nu}\sum_{i_1 = 1}^{j_1-1}\sum_{i_2 = 1}^{j_2 - 1}\E\left[\E[H^2(X_{i_1},X_{j_1})|X_{i_1}]\E[H^2(X_{i_2},X_{j_2})|X_{i_2}]\right]\\
 &= \left(\frac{2\alpha_1^2\zeta^2n^2}{V_1\nu^2(\nu-1)^2}\right)^2 \sum_{j_1,j_2 = 2}^{\nu}\sum_{i_1 = 1}^{j_1-1}\sum_{i_2 = 1,i_2 \neq i_1}^{j_2 - 1}\E\left[\E[H^2(X_{i_1},X_{j_1})|X_{i_1}]\right]\E\left[\E[H^2(X_{i_2},X_{j_2})|X_{i_2}]\right]\\
 &+\left(\frac{2\alpha_1^2\zeta^2n^2}{V_1\nu^2(\nu-1)^2}\right)^2 \sum_{j_1,j_2 = 2}^{\nu}\sum_{i = 1}^{j_1\wedge j_2-1}\E\left[\E[H^2(X_{i},X_{j_1})|X_{i}]\E[H^2(X_{i},X_{j_2})|X_{i}]\right]\\
 &= \left(\frac{2\alpha_1^2\zeta^2n^2}{V_1\nu^2(\nu-1)^2}\right)^2\left\{\left(\frac{\nu(\nu-1)}{2}\right)^2V_1^2(1 + o(1)) +  \sum_{j_1,j_2 = 2}^{\nu}\sum_{i = 1}^{j_1\wedge j_2-1}\E\left[H^2(X_{i},X_{j_1})H^2(X_{i},X_{j_1'})\right]\right\}.
\end{align*}
By Assumption \ref{ass1_new}, $\E\left[H^2(X_{i},X_{j_1})H^2(X_{i},X_{j_1'})\right] \leq \E\left[H^4(X_{i},X_{j_1})\right] = o(nV_1^2)$. Therefore
\begin{align*}
    \E[\mathcal{I}_1^2]&= \left(\frac{2\alpha_1^2\zeta^2n^2}{V_1\nu^2(\nu-1)^2}\right)^2\left\{\left(\frac{\nu(\nu-1)}{2}\right)^2V_1^2(1 + o(1)) +  \sum_{j_1,j_2 = 2}^{\nu}\sum_{i = 1}^{j_1\wedge j_2-1}\E\left[H^2(X_{i},X_{j_1})H^2(X_{i},X_{j_1'})\right]\right\}\\
    &= \frac{\alpha_1^4\zeta^4n^4}{\nu^2(\nu-1)^2}(1 + o(1)) \rightarrow \alpha_1^4,
\end{align*}
which shows $\mathcal{I}_1 \overset{P}{\rightarrow} \alpha_1^4$. By similar arguments, we can also prove that $\mathcal{I}_2 \overset{P}{\rightarrow} \alpha_2^4$ and $\mathcal{I}_3 \overset{P}{\rightarrow} \alpha_3^4$. Now let us consider $\mathcal{I}_4$. Note that
\begin{align*}
    &\E\left[\mathcal{I}_4^2\right] 
    \\=&  \left(\frac{2\alpha_1^2\zeta^2n^2}{V_1\nu^2(\nu-1)^2}\right)^2\sum_{j_1,j_2 = 2}^{\nu}\sum_{i_1,i_2 = 1,i_1 \neq i_2}^{j_1-1}\sum_{i_3,i_4 = 1,i_3 \neq i_4}^{j_2-1} \E\Bigg[\E[H(X_{i_1},X_{j_1})H(X_{i_2},X_{j_1})|\cF_{j_1-1}]
    \\ & \hspace{3cm} \times \E[H(X_{i_3},X_{j_2})H(X_{i_4},X_{j_2})|\cF_{j_2-1}]\Bigg]\\
    =&  \left(\frac{4\alpha_1^2\zeta^2n^2}{V_1\nu^2(\nu-1)^2}\right)^2\sum_{j_1,j_2 = 2}^{\nu}\sum_{1 \leq i_1 < i_2 \leq j_1 -1}\sum_{1 \leq i_3 < i_4 \leq j_2 - 1}\E\Bigg[\E[H(X_{i_1},X_{j_1})H(X_{i_2},X_{j_1})|X_{i_1},X_{i_2}]\\ 
    & \hspace{3cm}\times \E[H(X_{i_3},X_{j_2})H(X_{i_4},X_{j_2})|X_{i_3},X_{i_4}]\Bigg].
\end{align*}
Denote $\widetilde{H}(X_{i_1},X_{i_2}) = \E[H(X_{i_1},X_{j_1})H(X_{i_2},X_{j_1})|X_{i_1},X_{i_2}]$. It is easy to see that \[\E\left[\E[H(X_{i_1},X_{j_1})H(X_{i_2},X_{j_1})|X_{i_1},X_{i_2}]\E[H(X_{i_3},X_{j_2})H(X_{i_4},X_{j_2})|X_{i_3},X_{i_4}]\right] = 0,\]
if $i_1 \neq i_3, i_1 \neq i_4, i_2 \neq i_3$ and $i_2 \neq i_4$. In addition, if $i_1 = i_3$, $i_2 \neq i_4$, then  
\begin{align*}
    &\E\left[\widetilde{H}(X_{i_1,i_2})\widetilde{H}(X_{i_1,i_4})\right] 
    \\ =& \E\left[\E[\widetilde{H}(X_{i_1,i_2})\widetilde{H}(X_{i_1,i_4})|X_{i_1}]\right] = \E\left[\E[\widetilde{H}(X_{i_1,i_2})|X_{i_1}]\E[\widetilde{H}(X_{i_1,i_4})|X_{i_1}]\right]\\
    =&\E\left[\left\{\E[\E[H(X_{i_1},X_{j_1})H(X_{i_2},X_{j_1})|X_{i_1},X_{i_2}]|X_{i_1}]\right\}^2\right]\\
    =&\E\left[\left\{\E[H(X_{i_1},X_{j_1})H(X_{i_2},X_{j_1})|X_{i_1}]\right\}^2\right]\\
    =&\E\left[\left\{\E[\E[H(X_{i_1},X_{j_1})H(X_{i_2},X_{j_1})|X_{i_1},X_{j_1}]|X_{i_1}]\right\}^2\right]\\
    =&\E\left[\left\{\E[H(X_{i_1},X_{j_1})\E[H(X_{i_2},X_{j_1})|X_{j_1}]|X_{i_1}]\right\}^2\right] = 0,
\end{align*}
since $\E[H(X_{i_2},X_{j_1})|X_{j_1}] = 0$. Similarly, we can show that $\E\left[\widetilde{H}(X_{i_1,i_2})\widetilde{H}(X_{i_1,i_4})\right] = 0$ if $i_1 < i_2 = i_3 < i_4$, $i_3 < i_4 = i_1 < i_2$, or $i_1 \neq i_3$ and $i_2= i_4$. Therefore, $\E[H(X_{i_2},X_{j_1})|X_{j_1}]$ can be nonzero only if $i_1 = i_3$ and $i_2 = i_4$, and 
\begin{align*}
    &\E\left[\mathcal{I}_4^2\right] 
    \\=&  \left(\frac{4\alpha_1^2\zeta^2n^2}{V_1\nu^2(\nu-1)^2}\right)^2\sum_{j_1,j_2 = 2}^{\nu}\sum_{1 \leq i_1 < i_2 \leq j_1 -1}\sum_{1 \leq i_3 < i_4 \leq j_2 - 1}\E\Bigg[\E[H(X_{i_1},X_{j_1})H(X_{i_2},X_{j_1})|X_{i_1},X_{i_2}]
    \\&\hspace{3cm} \times \E[H(X_{i_3},X_{j_2})H(X_{i_4},X_{j_2})|X_{i_3},X_{i_4}]\Bigg]\\
    =&  \left(\frac{4\alpha_1^2\zeta^2n^2}{V_1\nu^2(\nu-1)^2}\right)^2\sum_{j_1,j_2 = 2}^{\nu}\sum_{1 \leq i_1 < i_2 \leq j_1\wedge j_2 -1}\E\Bigg[\E[H(X_{i_1},X_{j_1})H(X_{i_2},X_{j_1})|X_{i_1},X_{i_2}]
    \\&\hspace{3cm} \times \E[H(X_{i_1},X_{j_2})H(X_{i_2},X_{j_2})|X_{i_1},X_{i_2}]\Bigg]\\
    =&\left(\frac{4\alpha_1^2\zeta^2n^2}{V_1\nu^2(\nu-1)^2}\right)^2\sum_{j_1,j_2 = 2}^{\nu}\sum_{1 \leq i_1 < i_2 \leq j_1\wedge j_2 -1}\E[H(X_{i_1},X_{j_1})H(X_{i_2},X_{j_1})H(X_{i_1},X_{j_2})H(X_{i_2},X_{j_2})]\\
    =& \left(\frac{4\alpha_1^2\zeta^2n^2}{V_1\nu^2(\nu-1)^2}\right)^2o(n^4V_1^2) = o(1) \rightarrow 0,
\end{align*}
by Assumption \ref{ass1_new}. This implies that $\mathcal{I}_4 \overset{P}{\rightarrow} 0$. Similarly, we can show that $\mathcal{I}_5, \mathcal{I}_6$ and $\mathcal{I}_7$ all converge to zero in probability. Hence, the proof is complete by combining all the results above.
\end{proof}

\begin{proof}[Proof of Lemma \ref{alt:jointnormal2}]
    Recall that 
    \begin{align*}
    U_{n,\nu} =& \frac{1}{\nu}\sum_{i = 1}^{\nu} (\tau_3\E[L(X_i,Y)|X_i] - \tau_1\E[L(X_i,X)|X_i]) 
    \\&+ \frac{1}{n-\nu}\sum_{i = \nu+1}^n(\tau_3\E[L(X_i,X)|X_i] - \tau_2\E[L(X_i,Y)|X_i]),    
    \end{align*}
    and note that $\tau_3\E[L(X_i,Y)|X_i] - \tau_1\E[L(X_i,X)|X_i]$ are i.i.d. for $i = 1,...,\nu$, and $\tau_3\E[L(X_i,X)|X_i] - \tau_2\E[L(X_i,Y)|X_i]$ are also i.i.d. for $i = \nu+1,...,n$. Therefore, by classical CLT, we have
    \[\frac{1}{\sqrt{\nu\Gamma_1}}\sum_{i = 1}^{\nu} (\tau_3\E[L(X_i,Y)|X_i] - \tau_1\E[L(X_i,X)|X_i]) \overset{\mathcal{D}}{\rightarrow} N(0,1),\]
    and
    \[\frac{1}{\sqrt{(n-\nu)\Gamma_2}}\sum_{i = \nu+1}^{n} (\tau_3\E[L(X_i,X)|X_i] - \tau_2\E[L(X_i,Y)|X_i]) \overset{\mathcal{D}}{\rightarrow} N(0,1).\]
    This concludes the proof.
\end{proof}

\begin{proof}[Proof of Lemma \ref{alt:denominator}]
    The first three results can be proved similarly using the arguments in the proof of Lemma D.4 in Chakraborty and Zhang (2021). Therefore, we omit the details here. 
    By the definition of $\widehat{S}^2$, it is straightforward to see that $\widehat{S}^2(\mathbf{X}_{1:\nu},\mathbf{X}_{(\nu+1):n}) = O_p(\max\{V_1,V_2,V_3\})$. And by the definition of $a_{\nu,n-\nu}$,
    \[n^2a_{\nu,n-\nu}^2 = n^2\left\{\frac{1}{\nu(n - \nu)} + \frac{1}{2\nu(\nu-1)} + \frac{1}{2(n-\nu)(n-\nu-1)}\right\} \rightarrow (2\zeta^2(1-\zeta)^2)^{-1},\]
    which completes the proof.
\end{proof}

\begin{proof}[Proof of Lemma \ref{alt:negligible}]
    By the definition of $R_{n,\nu}$ in (\ref{eq_sup_2}), it can be expressed as a linear combination of three terms. If we can prove that each term is of the order $o_p(n^{-1}\max(\sqrt{V_1},\sqrt{V_2},\sqrt{V_3}))$, then the proof is complete. 
    For the second term in $R_{n,\nu}$, we have
    \begin{align*}
        &\E\left[\left(\frac{1}{\nu(\nu-1)}\sum_{1\leq i_1 \neq i_2 \leq \nu}\tau_1R(X_{i_1},X_{i_2})\right)^2\right] 
        \\=& \frac{\tau_1^2}{\nu^2(\nu-1)^2}\sum_{1\leq i_1 \neq i_2 \leq \nu}\sum_{1\leq j_1 \neq j_2 \leq \nu}\E\left[R(X_{i_1},X_{i_2})R(X_{j_1},X_{j_2})\right]\\
        \leq&\frac{\tau_1^2}{\nu^2(\nu-1)^2}\sum_{1\leq i_1 \neq i_2 \leq \nu}\sum_{1\leq j_1 \neq j_2 \leq \nu}\sqrt{\E[R(X_{i_1},X_{i_2})^2]\E[R(X_{j_1},X_{j_2})^2]}\\
        =&\tau_1^2\E[R(X_1,X_2)^2]\leq \sqrt{\tau_1^4\E[R(X_1,X_2)^4]} = o(V_1/n^2),
    \end{align*}
    according to Assumption \ref{ass2_new}. This indicates, by the Chebyshev's inequality, \[\frac{1}{\nu(\nu-1)}\sum_{1\leq i_1 \neq i_2 \leq \nu}\tau_1R(X_{i_1},X_{i_2}) = o_p(n^{-1}\sqrt{V_1}) = o_p(n^{-1}\max
    \{\sqrt{V_1},\sqrt{V_2},\sqrt{V_3}\}).\]
    Similarly, we can prove using the same technique that the first term and third term are both $o_p(n^{-1}\sqrt{V_1}) = o_p(n^{-1}\max\{\sqrt{V_1},\sqrt{V_2},\sqrt{V_3}\})$. Hence, the proof is complete.
\end{proof}

\begin{proof}[Proof of Remark \ref{alt:illustration1}]
We have established that under Assumptions \ref{ass0_new}-\ref{ass2_new}, \( L_{n,\nu} = O_p(n^{-1} \max\{V_1, V_2, V_3\}) \) and \( U_{n,\nu} = O_p(n^{-1/2} \max\{\Gamma_1, \Gamma_2\}) \) (for further details, refer to the proofs of Lemmas \ref{alt:jointnormal}-\ref{alt:jointnormal2}). Moreover, either \( L_{n,\nu} \) or \( U_{n,\nu} \) will be the leading term in \( M_{n} \) (in addition to \( 2\tau_3 - \tau_1 - \tau_2 \)), while \( R_{n,\nu} \) remains asymptotically negligible. Assume that a random vector $X \sim F_1$, $Y \sim F_2$ and $X \bigCI Y$.
When considering \( \gamma \) as the Euclidean distance, we can calculate that:
\begin{align*}
& H(X, X') = -\frac{2(X - \mu_1)^\top(X' - \mu_1)}{\tau_1},\\
& H(Y, Y') = -\frac{2(Y - \mu_2)^\top(Y' - \mu_2)}{\tau_2},\\
& H(X, Y) = -\frac{2(X - \mu_1)^\top(Y - \mu_2)}{\tau_3}.
\end{align*}
This indicates that \( V_1, V_2, \) and \( V_3 \) are all of the order 
$$O_p\left(\frac{\max\{\text{tr}(\Sigma_1^2), \text{tr}(\Sigma_2^2), \text{tr}(\Sigma_1\Sigma_2)\}}{p}\right).$$ 
We also have the following expectations:
\[
\E[\tau_3 L(X,Y) - \tau_1 L(X,X') | X]=\frac{\tau_1 - \tau_3}{\tau_1 \tau_3} \left[\|X - \mu_1\|^2 - \text{tr}(\Sigma_1)\right] + 2\tau_3^{-1}(\mu_1 - \mu_2)^\top (X - \mu_1),
\]
and 
\[
\E[\tau_3 L(X,Y) - \tau_2 L(Y,Y') | Y]=\frac{\tau_2 - \tau_3}{\tau_2 \tau_3} \left[\|Y - \mu_2\|^2 - \text{tr}(\Sigma_2)\right] + 2\tau_3^{-1}(\mu_2 - \mu_1)^\top (Y - \mu_2).
\]
Consequently, we derive:
\[
\Gamma_1 = O\left((\tau_3 - \tau_1)^2 p^{-2} \text{var}(\|X - \mu_1\|^2) + p^{-1} (\mu_1 - \mu_2)^\top \Sigma_1 (\mu_1 - \mu_2)\right),
\]
and 
\[
\Gamma_2 = O\left((\tau_3 - \tau_2)^2 p^{-2} \text{var}(\|Y - \mu_2\|^2) + p^{-1} (\mu_2 - \mu_1)^\top \Sigma_2 (\mu_2 - \mu_1)\right).
\]
Therefore, if we additionally assume that the components of \( X_i \) are independent with a finite fourth moment, it can be shown that \( V_1, V_2, V_3 \asymp 1 \), 
$$\Gamma_1 = O_p\left(\frac{(\tau_3 - \tau_1)^2}{p} + \frac{\|\mu_1 - \mu_2\|^2}{p}\right) = O(1),$$ 
and 
$$\Gamma_2 = O\left(\frac{(\tau_3 - \tau_2)^2}{p} + \frac{\|\mu_1 - \mu_2\|^2}{p}\right) = O(1).$$ 
This results from the fact that \( \|\mu_1 - \mu_2\|^2 = O(p) \), which is implied by Assumption \ref{ass0_new}. Consequently, the desired results follow naturally.
\end{proof}

\begin{proof}[Proof of Lemma \ref{lem:alt_process}]
The lemma can be proved similarly to the proof of Theorem \ref{sup_theorem1}. The only difference between these two results is that in Theorem \ref{sup_theorem1}, $H(X_i,X_j)$'s are identically distributed for all $i$ and $j$, while in the current setup, $H(X_i,X_j)$'s are no longer identically distributed because of the presence of the change-point. We skip the details here.
\end{proof}

\begin{proof}[Proof of Lemma \ref{lem:alt_U}]
For the ease of notations, we let $X$ and $Y$ be two independent random vectors that are also independent of any $X_i$ for all $i$ and $X \sim F_1$ and $Y \sim F_2$. Consider the case of $k < \nu$ first. Recall that 
    \begin{align*}
      U_{n,k} =& \frac{1}{k}\sum_{i = 1}^k\frac{n-\nu}{n-k}(\tau_3\E[L(X_i,Y)|X_i] - \tau_1\E[L(X_i,X)|X_i])
      \\&-\frac{1}{n-k}\sum_{i = k+1}^{\nu}\frac{n-\nu}{n-k-1}(\tau_3\E[L(X_i,Y)|X_i] 
      - \tau_1\E[L(X_i,X)|X_i])\\
        &+\frac{1}{n-k}\sum_{i = \nu+1}^{n}\frac{n-\nu-1}{n-k-1}(\tau_3\E[L(X_i,X)|X_i] - \tau_2\E[L(X_i,Y)|X_i]).
  \end{align*}
  Therefore, we have
  \begin{align*}
      \sup_{k < \nu}\frac{k(n-k)}{n^2}|U_{n,k}| 
      \leq &\sup_{k < \nu}\frac{(n-k)}{n^2}\left|\sum_{i = 1}^k\tau_3\E[L(X_i,Y)|X_i] - \tau_1\E[L(X_i,X)|X_i]\right|\\
      &+ \sup_{k < \nu}\frac{k}{n^2}\left|\sum_{i = k+1}^{\nu}\tau_3\E[L(X_i,Y)|X_i] - \tau_1\E[L(X_i,X)|X_i]\right|\\
      &+\sup_{k < \nu}\frac{k}{n^2}\left|\sum_{i = \nu+1}^{n}\tau_3\E[L(X_i,X)|X_i] - \tau_2\E[L(X_i,Y)|X_i]\right|\\
      =& O_p(\sqrt{\Gamma_1/n}) + O_p(\sqrt{\Gamma_2/n}),
  \end{align*}
  where we have used the facts that
  \begin{align*}
    &\sup_{1 \leq a < b \leq \nu}\left|\sum_{i = a}^b\tau_3\E[L(X_i,Y)|X_i] - \tau_1\E[L(X_i,X)|X_i]\right| \\
    \leq& 2\sup_{1 \leq b \leq \nu}\left|\sum_{i = 1}^b\tau_3\E[L(X_i,Y)|X_i] - \tau_1\E[L(X_i,X)|X_i]\right|\\
     =& O_p(\sqrt{n\Gamma_1}),
  \end{align*}
    $\sup_{\nu \leq a < b \leq n}\left|\sum_{i = a}^b\tau_3\E[L(X_i,X)|X_i] - \tau_2\E[L(X_i,Y)|X_i]\right| = O_p(\sqrt{n\Gamma_2})$ and the Kolmogorov's inequality.
    It can be shown similarly for the case of $k \geq \nu$, which completes the proof.
\end{proof}
\begin{proof}[Proof of Lemma \ref{lem:alt_rem}]
We can adapt the same arguments as in the proofs of Theorem \ref{theorem1} and Theorem \ref{sup_theorem2}. Specifically, we define $\widetilde{R}_{n}(k,m) = \sum_{i_2 = k+1}^m\sum_{i_1 = k}^{i_2-1}\tau_{i_1,i_2}R{(X_{i_1},X_{i_2})}$, and follow the same steps therein by using Assumptions \ref{ass1_new}-\ref{ass2_new} instead of the assumptions under the null hypothesis. 
\end{proof}

\begin{proof}[Proof of Lemma \ref{lem:isolating_seeded}]
Fix $\ell \in \{1, \dots, N\}$ and let $t = \nu_\ell$. By the spacing condition, the interval $(t - \Delta_n, t + \Delta_n)$ contains no other change-points.
We select the scale $m$ to be the largest power of 2 satisfying $m \le \Delta_n$. Specifically, choose integer $j$ such that $2^{j-1} \le \Delta_n < 2^j$ and set $m = 2^{j-1}$. Note that this implies $\Delta_n/2 < m \le \Delta_n$.

The seeded family $\mathcal{I}_n$ contains intervals of length $m$ starting at $s_r = 1 + r(m/2)$ for $r=0,1,\dots$. These intervals form a grid where consecutive start points are spaced by $m/2$. Consequently, the collection of intervals $[s_r, s_r + m/2)$ covers the domain.
There exists an index $r$ such that $t \in [s_r, s_r + m/2)$. Let $I_\ell = [s_r, s_r + m - 1]$.

We verify the isolation property:
\begin{itemize}
    \item \textbf{Left Boundary:} Since $t < s_r + m/2$, we have $s_r > t - m/2$. Since $m \le \Delta_n$, it follows that $s_r > t - \Delta_n/2 > t - \Delta_n$. Thus, $I_\ell$ starts after $\nu_{\ell-1}$.
    \item \textbf{Right Boundary:} Since $s_r \le t$, the endpoint is $e_r = s_r + m - 1 \le t + m - 1$. Since $m \le \Delta_n$, we have $e_r < t + \Delta_n$. Thus, $I_\ell$ ends before $\nu_{\ell+1}$.
\end{itemize}
Therefore, $I_\ell \subset (\nu_{\ell-1}, \nu_{\ell+1})$, meaning it isolates $\nu_\ell$. The length constraint $|I_\ell| = m \in (\Delta_n/2, \Delta_n]$ is satisfied by construction.
\end{proof}

\begin{proof}[Proof of Lemma \ref{lem:SBS}]
    For the interval $I_{\ell} = [s_\ell, e_\ell]$ that contains only one change point and $(\nu_{\ell} - s_{\ell}+1)(e_{\ell} - \nu_{\ell}) \asymp |I_{\ell}|^2$, Theorem \ref{alt:fix_alternative} shows that when $b_{|I|,\ell} \rightarrow \infty$, $P(M(I_{\ell}) > \lambda_{I_\ell}) \rightarrow 1$ if $\lambda_{I_\ell} \rightarrow \infty$ and $\lambda_{I_\ell} = o(b_{|I_\ell|,\ell})$. 
    
    In the proof of Theorem \ref{alt:consistency}, we have shown that if $I_\ell$ contains only one change point at $\nu_\ell$, then 
    \begin{align*}
        &\frac{(b-s_\ell+1)(e_\ell-b)}{(e_\ell-s_\ell+1)^2}\widehat E_{\gamma}(\mathbf{X}_{s_\ell:b},\mathbf{X}_{(b+1):e_\ell})\\
        = & \frac{(b-e_\ell+1)(s_\ell-b)}{(e_\ell-s_\ell+1)^2}\left\{\delta_{\ell,\ell+1}r_b + L_{s_\ell,e_\ell,b} +U_{s_\ell,e_\ell,b}+R_{s_\ell,e_\ell,b} \right\},
    \end{align*}
    where \[r_b = \frac{(e_{\ell} - \nu_\ell)(e_\ell - \nu_\ell - 1)}{(e_\ell - b)(e_\ell - b - 1)}\]
    if $b \leq \nu_\ell$ and \[r_b = \frac{(\nu_\ell - s_\ell + 1)(\nu_\ell - s_\ell)}{(b - s_\ell + 1)(b - s_\ell)}\] if $b > \nu_\ell$. Let $L_{s_\ell,e_\ell,b}$, $U_{s_\ell,e_\ell,b}$ and $R_{s_\ell,e_\ell,b}$ be defined in the proof of Theorem 3.3 for the interval $[s_\ell,e_\ell]$ and split point $b$. 

    Lemmas \ref{lem:alt_process}-\ref{lem:alt_rem} in the proof of Theorem \ref{alt:consistency} show that under Assumptions \ref{ass0_new}-\ref{ass2_new}, $$\sup_{b = s_\ell+3,\dots,e_\ell-4}\frac{(b-e_\ell+1)(s_\ell - b)}{(e_\ell - s_\ell + 1)^2}\left|L_{s_\ell,e_\ell,b} +U_{s_\ell,e_\ell,b}+R_{s_\ell,e_\ell,b}\right| = O_p(\sqrt{V_{|I_\ell|,\ell}}/|I_{\ell}|).$$
    
     By the construction of the seeded intervals, $r_b \asymp 1$. Therefore, applying Theorem 3.5 to $I_\ell$, according to the proof of Theorem 3.5, \[P\left(\delta_{\ell,\ell+1}|\widehat 
     \nu(I_{\ell}) - \nu_{\ell}| \leq \sqrt{V_{|I_{\ell}|,\ell}}\right)\rightarrow 1.\]
   We can let $b_{n,\ell} = n\delta_{\ell,\ell+1}/\sqrt{V_{n,\ell}}$ and under Assumptions 3.1-3.3, we have shown $$P\!\left(b_{n,\ell}\big|\widehat\zeta(I_\ell)-\zeta_\ell\big|\le C\right)\to 1.$$

\end{proof}

\end{alphasection}

\end{document}